\newtheorem{remark}{Remark}[section]
\newtheorem{conjecture}{Conjecture}[section]
\newtheorem{corollary}{Corollary}[section]
\newtheorem{definition}{Definition}[section]
\newtheorem{openq}{Open Question}[section]
\newtheorem{lemma}{Lemma}[section]
\newtheorem*{proposition*}{Proposition}
\newtheorem{proposition}{Proposition}[section]
\newtheorem{fact}{Fact}[section]
\theoremstyle{definition}
\theoremstyle{definition}
\title{Incompleteness, Independence, and Negative Dominance}
\date{\today}
\author{Harvey Lederman\footnote{\url{harvey.lederman@utexas.edu}. Thanks to Juan Dubra,  \"Ozg\"ur Evren, Brian Hedden, Giacomo Molinari, Efe Ok, Larry Samuelson, Dean Spears, Teru Thomas, and S\'{e}verine Toussaert for helpful discussion, correspondence, and encouragement.}}
\begin{document}

\maketitle

\begin{abstract}
\noindent This paper introduces the axiom of Negative Dominance, stating that, if a lottery $f$ is strictly preferred to a lottery $g$, then some outcome in the support of $f$ is strictly preferred to some outcome in the support of $g$. It is shown that, if preferences are incomplete on a sufficiently rich domain, then this plausible axiom, which holds for complete preferences, is incompatible with an array of otherwise plausible axioms for choice under uncertainty. In particular, in this setting, Negative Dominance conflicts with the standard Independence axiom. A novel theory, which includes Negative Dominance, and rejects Independence, is developed and shown to be consistent.

\

\noindent \emph{Keywords}: incomplete preferences, independence, dominance, decision theory

\

\noindent JEL Classification: D01, D11, D81, D91 \end{abstract}

\vfill 

\pagebreak

\section{Introduction}

Orthodox decision theory assumes that agents' preferences are \emph{complete}: between any two outcomes, or lotteries over outcomes, the agent either prefers one to the other, or is exactly indifferent between them. But it has long been recognized that this axiom may be doubted. In a famous passage, \citet[p. 446]{aumann1962utility} puts the point forcefully:
\begin{quote} Of all the axioms of utility theory, the completeness axiom is perhaps the most questionable. Like others of the axioms, it is inaccurate as a description of real life; but unlike them, we find it hard to accept even from the normative viewpoint. Does ``rationality'' demand that an individual make definite preference comparisons between all possible lotteries (even on a limited set of basic alternatives)? 
\end{quote}

Aumann's judgments about normative requirements on preferences have had many defenders (see, e.g. \citet{bewley1986knightian} (\citet{bewley2002knightian}), and \citet{mandler2005incomplete}, cf. \citet[p. 19]{von1944theory}). There is a natural conception of preferences  on which it is obvious that preferences are sometimes incomplete.\footnote{For discussion of the contrast between psychological preference (which is naturally taken to be incomplete) and revealed preference (which is not) and how this bears on the completeness assumption, see \citet{mandler2001difficult,mandler2005incomplete}. Incompleteness is not just intuitively natural; it has also been used to explain behavior such as status quo maintenance (\citet{mandler2004status}).} But if preferences can be incomplete, there is a question of what decision theory, descriptively and normatively, should be adopted. 

The dominant strain of work on this question has examined the possibility of giving utility-like representations of theories which preserve other standard axioms, but drop completeness (\citet{richter1966revealed,peleg1970utility,bewley1986knightian,seidenfeld1995representation,shapleybaucells,delamo2002open,ok2002utility,dubra2004expected,nau2006shape,evren2008existence,evren2011multiutility,ok2012incomplete,galaabaatar2012expected,galaabaatar2013subjective,riella2015representation,gorno2017strict,hara2019coalitional,mccarthy2021expected,ok2023lipschitz,borie2023expected}). When this approach is extended to cases involving uncertainty, the focus has mostly been on vector-based or ``multi-utility'' representations, where preferences are represented by vectors or sets of utility functions which value uncertain options at their expected values. All of these representations imply the standard Independence Axiom, stating that a lottery $f$ is preferred to a lottery $g$ if and only if for any lottery $h$ and any $\alpha \in (0,1]$, a lottery which gives $\alpha$ probability of $f$ and $(1-\alpha)$ probability of $h$ is preferred to a lottery which gives $\alpha$ probability of $g$ and $(1-\alpha)$ probability of $h$.

In this paper, I present a series of impossibility results showing that if preferences are incomplete over a sufficiently rich domain then this Independence axiom is incompatible with a new but normatively and behaviorally plausible Negative Dominance axiom. I take these results, most notably the main impossibility result, proposition \ref{main}, to motivate exploration of a new decision theory for incomplete preferences, which upholds the possibility of incompleteness on a rich domain of outcomes alongside Negative Dominance, while rejecting Independence. The second main result of the paper, proposition \ref{consistency}, establishes the consistency of such a theory.

The core idea of the impossibility results can be introduced by a simple example. A consumer is choosing over outcomes and is sensitive to various features of these outcomes, here represented as different ``dimensions''. The different dimensions of the outcomes could be abstract features---for instance, a tradeoff between flavor and cost of various foods, a tradeoff between location and style of work in a job---or they could be a bundle of commodities for which the consumer does not have a settled exchange rate. Here we consider a stylized example, with two dimensions understood as the numbers of apples and oranges. The consumer prefers more applies and more oranges. Given their current assets in apples and oranges (which we write $(0,0)$), for instance, they strictly prefer $(1,1)$---the outcome of gaining one apple and one orange. But they do not have even weak preferences over certain tradeoffs between apples and oranges. They do not weakly prefer a trade which would gain four apples in exchange for two oranges, $(4,-2)$ to their present holdings, and they also do not weakly prefer their current holdings to this trade. Similarly, they do not prefer or disprefer to gain four oranges in exchange for losing two apples $(-2,4)$. Their preferences are incomplete over these trades.

Now suppose that the consumer can choose to enter a lottery in which, with probability $\frac{1}{2}$, they will gain four oranges at a cost of two apples ($(4,-2)$), and with probability $\frac{1}{2}$, they will gain four apples, at the expense of two oranges ($(-2,4)$). 

If the consumer is at least approximately risk-neutral and does not have diminishing marginal utility across such low values of apples and oranges, one might think that the consumer will strictly prefer the lottery to their current holdings. The lottery yields an expected value of one orange, and an expected value of one apple. The consumer strictly prefers this bundle ($(1,1)$) to their current holdings. So, one might think, they will prefer the lottery. This thought is closely related to the dominant tradition of representing incomplete preferences with sets of utility functions, where uncertain options are valued at their expected value.

But this line of thought conflicts with another extremely plausible idea. The consumer does not prefer either of the outcomes of this lottery to their current holdings. They do not prefer $(4,-2)$ or $(-2,4)$ to their current holdings $(0,0)$. As a consequence, they should not strictly prefer the lottery, either. Somewhat picturesquely: the consumer will not take the lottery home with them; they will only take home whichever of its outcomes arises, and they prefer neither of these to their current holdings. This intuitive, and behaviorally plausible line of reasoning is codified in the ``Negative Dominance'' axiom, which states that, if a consumer does not strictly prefer any of the outcomes in the support of a lottery $f$ to any of the outcomes in the support of a lottery $g$, then they do not strictly prefer $f$ to $g$.\footnote{As I discuss in detail in the next section, this is closely related to, though weaker than, the ``Vague Sure Thing'' principle of \citet{manzini2008representation}.} If preferences are complete, this principle reduces to a universally accepted dominance axiom, stating that, if every outcome of $g$ is weakly preferred to every outcome of $f$, then $g$ is weakly preferred to $f$. When preferences are incomplete, there is no longer an entailment from the standard dominance axiom to Negative Dominance, but Negative Dominance seems no less plausible. Negative Dominance has exactly the same form as standard dominance principles. It differs only in that, where those principles guarantee preference over the lotteries given certain preferences over the outcomes, the new axiom infers a \emph{lack} of preference for the lottery, from a \emph{lack} of preference for all of its outcomes (hence its name). But, as we have seen, if this Negative Dominance principle is accepted, then the decision-maker cannot have a strict preference for the lottery over $(4,-2)$ and $(-2,4)$, so the expectational reasoning presented above must be rejected.

There are many ways to respond to the conflict between these two arguments, which we will consider in more detail below. But here we will take it to motivate the question: is there a plausible general theory of incomplete preferences of the kind described above, which also respects Negative Dominance? The first contribution of the present work is to identify the Negative Dominance axiom and display a series of limitative results stemming from it, most notably that, in the presence of natural constraints on the structure of incomplete preferences, it conflicts with the Independence axiom. The second is to establish the consistency of a reasonably strong theory which includes Negative Dominance, and respects these natural constraints.

Section \ref{r2} explores limitative results arising from Negative Dominance in a setting where ``dimensions'' of preference are represented by $\mathbb R$, and the space of outcomes is $\mathbb R^n$. \ref{basic} presents the basic setup of the paper, \ref{negdom} formalizes the example just presented informally; section \ref{unidimensional} presents the main limitative result, which focuses attention on the rejection of Independence. Sections \ref{conversepareto}, \ref{continuity}, and \ref{kcomp} generalize the result in several ways, weakening several background assumptions and building the case for dropping Independence. Section \ref{newtheory} presents a theory including Negative Dominance, with only a weakened form of Independence; section \ref{setsofutilities} compares the new theory to the dominant tradition of using sets of utilities to represent incomplete preferences. Section \ref{literature} discusses related work. 

Appendix \ref{qualitative}, extends the impossibility results of section \ref{r2} to a more general setting where dimensions of preference are understood qualitatively. Section \ref{basic2} develops the conceptual framework of qualitative outcomes, and sections \ref{negdom2},  \ref{unidimensional2} and \ref{continuity2} present qualitative analogues of the results proven in sections \ref{negdom}, \ref{unidimensional} and \ref{continuity} respectively.

\section{Impossibility Results}\label{r2}

\subsection{General Setup}\label{basic}

Let $O$ be a non-empty set of outcomes and $\succeq$ a preorder (reflexive and transitive relation) on $O$. We use $x \succ y$ (``$x$ is strictly preferred to $y$'') to abbreviate $x \succeq y \land \neg y \succeq x$; $x \sim y$ (``$x$ is indifferent to $y$'') to abbreviate $x \succeq y \land y \succeq x$; $x \parallel  y$ (``$x$ is comparable to $y$'') to abbreviate $x \succeq y \lor y \succeq x$; and $x \bowtie  y$ (``$x$ is incomparable to $y$'') to abbreviate $x \not \parallel  y$. The set of (finite-support) lotteries on $O$, $\Delta(O) \subset \mathbb [0,1]^O$ is the set of functions $f$ such that $| \{ o \mid f(o) \neq 0\} |$ is finite and $\sum_{o\in O} f(o)=1$. $O$ is isomorphic to the subset of $\Delta(O)$ consisting of characteristic functions of elements of $O$, and from now on by $O$ we mean that subset, which we assume inherits relation of $\succeq$ from $O$ itself in the obvious way. Although technically all elements are drawn from $\Delta(O)$, I will often abuse notion and use the names of elements of $O$ (e.g., if $O= \mathbb R^2$, `$(1,1)$') as names for their characteristic functions. 

For $\alpha \in [0,1]$, and $f, g \in \Delta (O)$ we write write $f \alpha g$ to mean $\alpha f + (1-\alpha) g$, where the multiplication is understood pointwise, so that for all $o \in O$, $(f \alpha g)(o) = \alpha f(o) + (1-\alpha) g(o)$. 

The first goal of the paper is to examine whether a reasonable $\succeq$ can be extended from $O$ to a `reasonable' preorder (both reflexive and transitive) on $\Delta(O)$. 

\subsection{Negative Dominance and Expectationalism}\label{negdom}

I assume throughout the main text that $O=\mathbb R^2$; for simplicity I will write just $\Delta$ for $\Delta(\mathbb R^2)$.  Although I will work in $\mathbb R^2$, the limitative results generalize completely straightforwardly to $\mathbb R^n$.

The assumption that dimensions are well-modeled by $\mathbb R$ may be a plausible idealization in certain  contexts. 
 But assuming even this amount of structure in dimensions may not be plausible for all contexts, and in Appendix \ref{qualitative}, I prove analogous limitative results in a more abstract qualitative setting. 

For concreteness, I will temporarily assume the following about the ordering $\succeq$ on $\mathbb R^2$:
\begin{description}
\item[Pareto] If $x\geq x'$ and $y \geq y'$ then $(x,y) \succeq (x',y')$
\item[Converse Pareto] If $(x,y) \succeq (x', y')$ then $x\geq x'$ and $y \geq y'$.
\end{description}
Given Pareto, we have that $(0,0)\prec (3,3)$, and also $(0,0)\prec (0,4)$ and $(0,0)\prec (4,0)$. Given Converse Pareto we have that $(0,0) \bowtie  (4,-2)$ (recall that `$\bowtie $' stands for incomparability) and $(0,0) \bowtie  (-2,4)$. Converse Pareto is implausible for many applications; it amounts to the claim that tradeoffs in dimensions \emph{always} yield incomparability. I only assume it temporarily to allow a crisp statement of some initial results. Later, I provide a family of plausible theories which substantially relax the assumption, allowing significant tradeoffs, in section \ref{conversepareto}, where it is shown that the result still holds even in this more general setting. (See also below, remark \ref{conversepareto1} and the subsequent paragraph.)

An obvious first idea for developing an extension of this order from outcomes to lotteries is to impose axioms which ensure that for every $f \in \Delta$ there is a ``certainty-equivalent'', that is, an $o \in O$ such that $f \sim o$. By defining such certainty-equivalents, the space of lotteries is identified with a sub-space of the space of outcomes, and $\succeq$ can be automatically extended to $\Delta$. The most obvious way of assigning such certainty-equivalents is to take the coordinate-wise expectation of $f$. Using $\pi_1$, $\pi_2$ as projection operations on ordered pairs (so that $\pi_1 (x, y)=x$ and $\pi_2 (x, y) =y$) we identify each lottery with the outcome $(\sum_{o\in O} f(o) \pi_1(o), \sum_o f(o) \pi_2(o))$, hereafter denoted $exp(f)$. So, formally, the assumption is:
\begin{description}
\item[Expectationalism] For all $f \in \Delta$, $f \sim exp(f)$. 
\end{description}
Thus, for example, $(0,0)\frac{1}{2}(4,4)\sim(2,2)$. (Recall that the names for ordered pairs here stand for characteristic functions.) To simplify notation in giving examples, I sometimes denote uniform lotteries over a sequence of outcomes using $.../.../...$. If there are only two outcomes (as usual) the lottery assigns probability $\frac{1}{2}$ to each; if there are $3$, $\frac{1}{3}$, and so on. Thus we can write this claim as $(0,0)/(4,4) \sim (2,2)$, and as a result we have that $(0,0) \prec (2,2) \sim (0,0)/(4,4) \prec (3,3)$.

In saying that this axiom is natural, I assume that that the decision-maker does or might value the objective quantities in our dimensions linearly, and that the decision-maker is neither risk-prone nor risk averse. Later, I'll show that relaxing  these assumptions does not eliminate the basic results.

An outcome $o$ is \emph{in the support} of a lottery $f$ if $f(o) > 0$. Our next axiom is
\begin{description}
\item[Negative Dominance] If $f \succ g$, then for some $o, o'$ in the support of $f, g$ respectively, $o \succ o'$.
\end{description}
This axiom holds for every plausible decision theory for complete preferences, as can be seen by considering its equivalent contrapositive.
\begin{description}
\item[Negative Dominance (Contrapositive)] If for every $o, o'$ in the support of $f$ and $g$ respectively, $o\not \succ o'$, then $f \not \succ g$.
\end{description}
For complete preferences, lack of a strict preference implies the presence of a reverse weak preference, so we can restate this version of Negative Dominance as: ``if for every $o, o'$ in the support of $f, g$ respectively $o \preceq o'$, then $f \preceq g$.'' This is an extremely weak dominance principle, which says that if the worst outcome of $g$ is at least as good as the best outcome of $f$, then $g$ is better than $f$. This axiom clearly holds in every reasonable decision theory for complete preferences, and thus helps to motivate Negative Dominance. 

A different, but related, way of motivating Negative Dominance is by analogy the following very weak dominance principle:
\begin{description}
\item[Outcome Dominance] If for every $o,o'$ in the support of $f$ and $g$ respectively, $o\succ o'$, then $f\succ g$.
\end{description}
The contrapositive of Negative Dominance says that if there are \emph{no} $o, o'$ in the support of $f$ and $g$ such that $o \succ o'$, then also $f \not \succ g$. This analogy to Outcome Dominance is the reason for the name ``Negative Dominance''.

We now formalize the result in the introduction:

\begin{proposition}\label{expectationalism} Let $O=\mathbb R^2$. Pareto, Converse Pareto, Expectationalism and Negative Dominance are inconsistent. \end{proposition}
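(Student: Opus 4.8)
The plan is to run, as a formal proof, exactly the apples-and-oranges example from the introduction: a single lottery will violate the conjunction of the four axioms. Take $f := (4,-2)/(-2,4) \in \Delta$, i.e.\ the lottery assigning probability $\tfrac12$ to $(4,-2)$ and probability $\tfrac12$ to $(-2,4)$. Its coordinatewise expectation is $exp(f) = \bigl(\tfrac12\cdot 4 + \tfrac12\cdot(-2),\ \tfrac12\cdot(-2)+\tfrac12\cdot 4\bigr) = (1,1)$.

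First I would show $f \succ (0,0)$. Pareto gives $(1,1) \succeq (0,0)$, while Converse Pareto rules out $(0,0) \succeq (1,1)$ (it would force $0 \geq 1$), so $(1,1) \succ (0,0)$. Expectationalism gives $f \sim (1,1)$, and since $\succeq$ is a preorder on $\Delta$, transitivity upgrades $f \sim (1,1) \succ (0,0)$ to $f \succ (0,0)$ (one checks both $f \succeq (0,0)$ and $\neg\,(0,0) \succeq f$, the latter because $(0,0) \succeq f \succeq (1,1)$ would contradict $(1,1) \succ (0,0)$). Then I would apply Negative Dominance to $f \succ (0,0)$: the support of $f$ is $\{(4,-2),(-2,4)\}$ and the support of $(0,0)$ (as a characteristic function) is $\{(0,0)\}$, so Negative Dominance forces $(4,-2) \succ (0,0)$ or $(-2,4) \succ (0,0)$. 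But Converse Pareto kills both: $(4,-2) \succeq (0,0)$ needs $-2 \geq 0$, and likewise $(-2,4) \succeq (0,0)$ needs $-2 \geq 0$. This contradiction completes the proof.

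There is no real obstacle here; the argument is only a few lines. The one thing to be careful about is the strictness bookkeeping — invoking Converse Pareto in the right places to eliminate the ``reverse'' weak preferences ($(0,0) \succeq (1,1)$, and $(4,-2) \succeq (0,0)$, $(-2,4) \succeq (0,0)$), and noting that it is transitivity of the \emph{extended} preorder on $\Delta$ that licenses chaining an indifference into a strict preference. If there is any burden at all, it is expository: to make clear that the impossibility uses only the four outcomes $(0,0)$, $(1,1)$, $(4,-2)$, $(-2,4)$ already on the table, so that it is in no way an artifact of an exotic construction.
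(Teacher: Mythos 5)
Your proposal is correct and is essentially identical to the paper's own proof, which uses the same lottery $(4,-2)/(-2,4)$, the same chain $f \sim (1,1) \succ (0,0)$ via Expectationalism, Pareto, and transitivity, and the same appeal to Converse Pareto to show neither outcome in the support is strictly preferred to $(0,0)$. Your extra care in using Converse Pareto to secure the strictness of $(1,1) \succ (0,0)$ only makes explicit a step the paper leaves implicit.
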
 

To repeat: Converse Pareto is not essential to this result, and is just used here to simplify the statement. I will return to this point in more detail in a moment.

The proof is as above. The expectation of the lottery $f^*=(4,-2)/(-2,4)$ is $(1,1)$ and so Expectationalism gives us that $f^* \sim (1,1)$. For any preorder if $a \sim b$ and $b \succ c$, then $a \succ c$, so Pareto gives us that $f^*  \succ (0,0)$. Meanwhile by Converse Pareto $(4,-2) \not \succ (0,0)$ and $(-2,4) \not \succ (0,0)$ in violation of Negative Dominance.\footnote{This result, and in fact all of the limitative results in the paper featuring Negative Dominance, could be run using a substantial weakening of Negative Dominance:
\begin{description}
\item[Outcome Comparability] If $f \succ o^*$ or $o^* \succ f$, then there is some $o$ in the support of $f$ such that $o \parallel  o^*$.
\end{description}
This principle restricts Negative Dominance to preferences between lotteries and outcomes. It also replaces the requirement there be some outcome $o$ in the support of $f$ such that $o \succ o^*$, with the much weaker requirement that there merely be an $o$ which is \emph{comparable} to $o^*$, allowing (unlike Negative Dominance) that $o$ might be only weakly preferred to $o^*$. (Technically it also allows that the only such $o$ could be strictly dispreferred as well, but this is not of conceptual interest.)

In the case above, we have $(4,-2) \bowtie  (0,0)$, and also $(-2,4) \bowtie  (0,0)$, in violation of Outcome Comparability. Similar points hold for subsequent results as well, but I will not spell them out in detail.}

\begin{remark}\label{conversepareto1} Converse Pareto rules out non-trivial tradeoffs of any kind, which might seem to prevent any real-life economic applications of the setup in this result. More precisely, those who endorse the rationality of incomplete preferences generally hold that, for generic $o,o'$ with $o \bowtie  o'$ there is typically a \emph{sufficiently great} improvement in some dimension of $o'$ so that the resulting $o'^+ \succ o$. Converse Pareto rules out this possibility. But Converse Pareto is inessential to the result, as I discuss in much more detail in section \ref{conversepareto}. Suppose $\succ$ is defined so that, for some $o$ there are $a$ $b$ such that $o \bowtie  a$, $o \bowtie  b$ and the line between $a$ and $b$ intersects either $\prec_o=\{ o' \mid o \prec o'\}$ or $\succ_o=\{ o' \mid o \succ o'\}$ at some point $o^*$. (Equivalently, given that $o \sim o'$ iff $o=o'$, we may assume that $\bowtie_o=\{ o' \mid o \bowtie o' \}$ is non-convex.) Expectationalism allows us to construct a lottery $f$ supported on $a$ and $b$ such that $f \sim o^*$, contradicting Negative Dominance.
 \end{remark}

This Remark shows that Converse Pareto can be relaxed to allow non-trivial tradeoffs across dimensions, but it highlights that proposition \ref{expectationalism} does depend on assuming that for some $o$ the set $\bowtie_o$ is non-convex. So we can think of the result as demonstrating that three plausible constraints cannot be jointly satisfied: this non-convexity assumption; Negative Dominance; and Expectationalism. 

Below, however, I will mainly focus on the possibility of rejecting Negative Dominance or Expectationalism. Partly this is because, as I will discuss in a moment, \citet{manzini2008representation} have given us a good understanding of what happens if it is required require that the non-comparable region be convex, while we have a less clear understanding of how things look under our non-convexity assumption. But there is strong conceptual motivation for this approach, as well. The primary motivation for incomplete preferences in our multi-dimensional setting is the idea that the decision-maker does not have preferences over every tradeoff across different dimensions of outcomes. Provided some decision-maker can rationally treat distinct dimensions symmetrically (so that there preferences are symmetric about the line $x=y$), then non-convexity follows from non-trivial incompleteness (given that the agent's weak preferences on outcomes are topologically closed). Moreover, it is extremely plausible that the decision-maker's preferences can be incomplete over tradeoffs involving gains in the $x$ dimension and losses in the $y$ dimension if and only if they can also be incomplete on tradeoffs involving gains in the $y$ dimension and losses in the $x$ dimension. But if the region of noncomparabiilty is convex, this claim must fail.

We can further motivate the non-convexity assumption by considering an observation of \citet{manzini2008representation}. They study the following axiom, which is strictly stronger than Negative Dominance:
\begin{description}
\item[Vagueness Sure Thing ( VST)] If $f_1 \bowtie g_1$ and $f_2 \bowtie g_2$, then $\alpha f_1 + (1-\alpha) f_2 \bowtie \alpha g_1 + (1-\alpha) g_2$.
\end{description}
Manzini and Mariotti show that this axiom together with a package of other natural ones ensures the convexity of non-comparable lotteries (as opposed to outcomes). But as they note, there are settings where this axiom is unnatural. If there are outcomes $a, a^+, b, b^+$, so that $a^+ \succ a$ and $b^+ \succ b$, but $a, a^+ \bowtie b$ and $b, b^+ \bowtie a$, then although $a^+/b^+$ stochastically dominates the lottery $a/b$, VST will require that they be incomparable: VST is incompatible with the decision-maker having a strict preference for stochastically dominant lotteries. Our multidimensional outcome space in  makes it natural to hold that there are quadruples of outcomes like $a,a^+,b, b^+$ (e.g. $(-2,3),(-2,4),(3,-2),(4,-2)$), and so, make it natural to reject VST.

Negative Dominance is weaker than VST in two different ways. First, (and less important for our purposes here), the pairwise comparisons in the antecedent of VST apply to all lotteries, not just to outcomes. Second, and much more importantly, Negative Dominance yields non-comparability between lotteries $f$ and $g$ if \emph{every} $o, o'$ where $o$ is in the support of $f$ and $o'$ is in the support of $g$, are non-comparable. VST applies much more generally than this: it applies if there is just some way of pairing up outcomes so that \emph{each pair} is non-comparable. It is this latter generality which yields the conflict with stochastic dominance, given the structure of non-comparability just mentioned. In fact I will show below, in  proposition \ref{consistency} that Negative Dominance is compatible with stochastic dominance even given such a structure of preference. 

So, while I will discuss several ways of weakening Converse Pareto in section \ref{conversepareto}, from now on I will assume that the region of non-comparable outcomes can be non-convex, referring the reader who is interested in dropping this assumption to \citet{manzini2008representation}. In general I will continue to describe the above conflict as arising from only Negative Dominance and Expectationalism, no longer explicitly mentioning the non-convexity assumption in spite of its key role in the result.

Given this background, it is natural to blame the conflict in proposition \ref{expectationalism} on Expectationalism. The problem, intuitively, is that the expectation ``forgets'' how values are distributed across outcomes. The values in the different dimensions were ``mixed and matched'' in the relevant outcomes (a high $x$, with a low $y$, and vice-versa), but the expectation aggregates the outcomes in a way that loses this information. One can see this vividly by considering the fact that the expectation of our lottery $f^*=(4,-2)/(-2,4)$ is the same as that of a very different lottery $g^*=(4,4)/(-2,-2)$. Plausibly, the agent could strictly prefer $g^*$ to $(0,0)$, without strictly preferring $f^*$ to $(0,0)$. Part of our goal in the rest of the paper will be to develop a theory which allows this result.

\subsection{Unidimensional Expectations and Independence}\label{unidimensional}

A natural weakening of Expectationalism arises by restricting the class of lotteries which are valued expectationally to those which are ``unidimensional'', that is, lotteries where every outcome in a given lottery is the same on \emph{all but one dimension}. Graphically in $\mathbb R^2$, this would mean that if every outcome in a lottery lies on a single vertical (resp. horizontal) line, then the lottery is valued expectationally.

\begin{definition} An outcome $o$ is \emph{unidimensional with} an outcome $o'$ if and only if there is at most one $i \in \{1,2\}$ such that $\pi_i(o) \neq \pi_i (o')$. 

A lottery $f$ is \emph{unidimensional with} $g$ if every outcome in the support of $f$ is unidimensional with every outcome in the support of $g$.

A lottery is \emph{unidimensional} if every outcome in the support of the lottery is unidimensional with every other.  \end{definition}

Our assumption then says:
\begin{description}
\item[Unidimensional Expectations] If $f$ is unidimensional, then $f \sim exp(f)$.
\end{description}

Expectationalism is a natural assumption, but on reflection it is not obvious that a rational decision-maker may satisfy it. By contrast, insofar as we can speak of dimensions of outcomes, with preferences sensitive to them, it must be permitted for a rational decision-maker to satisfy Unidimensional Expectations. We can see this in two ways. 

First, if a decision-maker's preferences are linear in the objective dimensions of outcomes (in the sense that, holding one dimension fixed, they linearly value increases in the other), and if the decision-maker is not risk-averse or risk prone, then they must satisfy Unidimensional Expectations. Since both of these assumptions are clearly permitted for rational decision-makers (even if not mandatory), it is permitted for a rational agent to satisfy Unidimensional Expectations. (Note that, the assumptions that decision-maker's preferences are linear in the objective dimensions of outcomes, and that the decision-maker is not risk-averse or risk prone, do not together imply Expectationalism.) 

Second, the numerical values in each of our dimensions may be seen as deriving from a utility representation of the decision-maker's preferences over some array of objective features. Working for example in the setting of \citet{von1944theory}, these numerical utility representations for individual dimensions make sense only if we have a well-behaved ordering of unidimensional lotteries. On this approach, the meaningfulness of real-valued dimensions requires the satisfaction of Unidimensional Expectations.

So, unlike Expectationalism, which we saw reason to doubt above, Unidimensional Expectations seems obviously part of a possible description of a rational agent. But I will now show that an agent who satisfies Unidimensional Expectations must either violate Negative Dominance or violate Independence (assuming, as always, the non-convexity of the region of non-comparability).

Formally, the Independence axiom is as follows:
\begin{description}
\item[Independence] For any $h \in \Delta$ and $\alpha \in (0,1)$ $f \succeq g$ if and only if $f \alpha h \succeq g \alpha h$.
\end{description}
This principle was key to von Neumann-Morgenstern's representation result, and endorsed also by \citet{savage1972foundations}. But it was famously rejected by \citet{allais1953comportement} (cf. \citet{machina1982expected,buchak2013risk}), and its rejection has been central to the development of alternatives to expected utility reasoning in various contexts. I will suggest below that the ways in which Independence must fail in the present setting if Negative Dominance is upheld are not obviously connected to the ways in which it fails in theories of risk aversion (section \ref{newtheory}); a goal of the paper is to identify and explore these intuitive failures of Independence. 

The proof of the following result contains the core idea of other limitative results in the paper:

\begin{proposition} \label{main} Let $O= \mathbb R^2$. Pareto, Converse Pareto, Unidimensional Expectations, Independence, and Negative Dominance are inconsistent. \end{proposition}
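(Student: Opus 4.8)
The plan is to reach exactly the contradiction of proposition \ref{expectationalism}: I will show $f^* \succ (0,0)$ for the lottery $f^* = (4,-2)/(-2,4)$, which is incompatible with Negative Dominance because Converse Pareto gives $(4,-2) \not\succ (0,0)$ and $(-2,4) \not\succ (0,0)$. The whole difficulty is to get $f^* \succ (0,0)$ without Expectationalism, using only Pareto, Converse Pareto, Unidimensional Expectations, and Independence. The core idea --- the one that recurs in the other limitative results --- is that Independence lets one transport an evaluation obtained from Unidimensional Expectations into a conclusion about a lottery ($f^*$) that is itself not unidimensional. I record once the bookkeeping fact, immediate from Unidimensional Expectations and Independence, that if $\ell$ is unidimensional then $\gamma \ell + (1-\gamma)\ell' \sim \gamma\, exp(\ell) + (1-\gamma)\ell'$ for every $\ell'$ and $\gamma \in (0,1)$; I will repeatedly use it to ``collapse'' a unidimensional sub-lottery to its expectation.

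The construction introduces the auxiliary lottery $h = \frac13(-17,4) + \frac23(-17,-2)$, which is unidimensional (its support lies on $x=-17$), with $exp(h) = (-17,0)$, and compares $\frac12 f^* + \frac12 h$ with $\frac12(0,0) + \frac12 h$. First, $\frac12 f^* + \frac12 h = \frac14(4,-2) + \frac14(-2,4) + \frac16(-17,4) + \frac13(-17,-2)$; the mass on the line $y=-2$ is $\frac37(4,-2) + \frac47(-17,-2)$, with expectation $(-8,-2)$, and the mass on $y=4$ is $\frac35(-2,4) + \frac25(-17,4)$, with expectation $(-8,4)$, so collapsing both gives $\frac12 f^* + \frac12 h \sim \frac{7}{12}(-8,-2) + \frac{5}{12}(-8,4)$; this last lottery is unidimensional (on $x=-8$) with expectation $(-8,\frac12)$, hence $\frac12 f^* + \frac12 h \sim (-8,\frac12)$. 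Second, $\frac12(0,0) + \frac12 h = \frac12(0,0) + \frac16(-17,4) + \frac13(-17,-2)$; the mass on $x=-17$ is exactly $h$, with expectation $(-17,0)$, so collapsing gives $\frac12(0,0) + \frac12 h \sim \frac12(0,0) + \frac12(-17,0)$, which is unidimensional (on $y=0$) with expectation $(-\frac{17}{2},0)$; hence $\frac12(0,0) + \frac12 h \sim (-\frac{17}{2},0)$.

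Now Pareto gives $(-8,\frac12) \succeq (-\frac{17}{2},0)$, and Converse Pareto gives $\neg\,((-\frac{17}{2},0) \succeq (-8,\frac12))$ because $-\frac{17}{2} < -8$, so $(-8,\frac12) \succ (-\frac{17}{2},0)$; by transitivity $\frac12 f^* + \frac12 h \succ \frac12(0,0) + \frac12 h$. Applying Independence in both directions (with $\alpha = \frac12$ and mixing lottery $h$) then yields $f^* \succeq (0,0)$ together with $\neg\,((0,0) \succeq f^*)$, i.e.\ $f^* \succ (0,0)$; combined with $(4,-2)\not\succ(0,0)$ and $(-2,4)\not\succ(0,0)$ this contradicts Negative Dominance, establishing the inconsistency.

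The step I expect to be the genuine obstacle is finding $h$: one needs a single auxiliary lottery for which \emph{both} $\frac12 f^* + \frac12 h$ and $\frac12(0,0) + \frac12 h$ collapse all the way down to outcomes, with those two outcomes strictly Pareto-ordered. Imposing that the two ``$y=\text{const}$'' pieces of $\frac12 f^* + \frac12 h$ reduce to points on a common vertical line, while simultaneously $exp(h)$ lies on the $x$-axis (so the $(0,0)$-mixture also collapses), pins down both the weights $\tfrac13,\tfrac23$ and the coordinate $-17$; once $h$ is in hand the rest is routine arithmetic and repeated use of the collapsing fact above. (The analogous computation reveals that the asymmetry of the two coordinates of $(4,-2)$ is precisely what makes such an $h$ exist.)
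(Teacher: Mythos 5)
Your proposal is correct: the bookkeeping fact does follow from Unidimensional Expectations plus both directions of Independence, the arithmetic checks out (the $y=-2$ and $y=4$ pieces of $\tfrac12 f^*+\tfrac12 h$ collapse to $(-8,-2)$ and $(-8,4)$, then to $(-8,\tfrac12)$, while $\tfrac12(0,0)+\tfrac12 h$ collapses to $(-\tfrac{17}{2},0)$), Pareto/Converse Pareto give the strict comparison, and the final cancellation of $h$ via the biconditional in Independence legitimately yields $f^*\succ(0,0)$, contradicting Negative Dominance since $(4,-2)\bowtie(0,0)$ and $(-2,4)\bowtie(0,0)$. But your route is genuinely different from the paper's. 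The paper never touches the introductory lottery $f^*$ in this proof: it builds a four-outcome lottery $[(0,a)/(0,-a)]/[(-a,0)/(a,0)]\sim(0,0)$ from two unidimensional lotteries, uses Pareto with Independence to improve two of its outcomes (so the improved lottery is strictly better than $(0,0)$), and then regroups and collapses two unidimensional pairs to land on $(-a/2,a)/(a,-a/2)\succ(0,0)$; Independence is only used to push preferences into mixtures, never to strip a common component. You instead reduce Proposition \ref{main} to the situation of Proposition \ref{expectationalism} by showing that Unidimensional Expectations and Independence already force $f^*\succ(0,0)$, via an auxiliary unidimensional $h$ tuned so that both mixtures collapse entirely to outcomes, followed by a cancellation application of Independence. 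What each buys: your argument is conceptually tidy in that it convicts the very lottery from the introduction without assuming full Expectationalism, and it dispenses with the outcome-improvement step; the paper's construction, by contrast, is parameter-free in $a$ (and in the $a\neq b$ variant of Remark \ref{uneven}), which is what lets the same skeleton survive the weakenings in sections \ref{conversepareto}--\ref{kcomp} (Lines, Unidimensional Continuity, Unidimensional Certainty Equivalents), whereas your $h$ (the $-17$ and the $\tfrac13,\tfrac23$ weights) is tuned to exact expectations and would not transfer as directly once exact certainty equivalents are no longer prescribed.
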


\begin{proof} By Unidimensional Expectations, for any $a$, $$(-a,0)/(a,0) \sim (0,0)  \textrm{ \ and \  } (0,a)/(0,-a)\sim (0,0).$$ By Independence, we first derive that $(0,0) = (0,0)/(0,0) \sim [(-a,0)/(a,0)]/(0,0)$. A second application of Independence then gives us that that $$(0,0)/[(-a,0)/(a,0)] \sim f=[(0,a)/(0,-a)]/[(-a,0)/(a,0)].$$ By transitivity of $\sim$, we then have that $f \sim (0,0)$. (See the leftmost image in the figure for a graphical illustration of $f$.) We now assume that $a>0$. By Pareto $(-a,a)\succ (-a,0)$. So, by Independence a lottery $f^+$ which replaces the latter with the former in $f$, will be better than $f$ and hence better than $(0,0)$. That is, $$f^+=(0,a)/(a,0)/(-a,a)/(0,-a) \succ (0,0).$$ By another application of Independence, since $(a,-a) \succ (0,-a)$ a lottery $f^{++}$ which replaces the latter with the former in $f^+$ will again be better than $f^+$ and hence better than $f$ and hence better than $(0,0)$. That is, we have (as in the middle image below) $$f^{++} = (0,a)/(a,0)/(-a,a)/(a,-a)\succ (0,0).$$ 
By Unidimensional Expectations, we have that $(0,a)/(-a,a) \sim (\frac{-a}{2},a)$, and that $(a,0)/(a,-a)\sim(a,\frac{-a}{2})$. By two applications of Independence, we have that (as in the rightmost image below) $$f^{++}\sim(\frac{-a}{2},a)/(a,\frac{-a}{2}).$$
From this it follows that $(\frac{-a}{2},a)/(a,\frac{-a}{2})\succ(0,0)$. But, by Converse Pareto. $(\frac{-a}{2},a)\bowtie  (0,0)$ and also $(a,\frac{-a}{2}) \bowtie  (0,0)$, so this strict preference contradicts  Negative Dominance.\end{proof}

\noindent\makebox[\textwidth]{
\begin{tikzpicture}[ampersand replacement=\&,scale=.44]
\draw[help lines, color=gray!80, dashed] (-4.9,-4.9) grid (4.9,4.9);
\draw[-, thick] (-5,0)--(5,0);
\draw[-, thick] (0,-5)--(0,5);
\filldraw [thin, red,fill=red] (3,0) circle[radius=1.5mm];
\filldraw [thin, red,fill=red] (0,3) circle[radius=1.5mm];
\filldraw [thin, red,fill=red] (-3,0) circle[radius=1.5mm];
\filldraw [thin, red,fill=red] (0,-3) circle[radius=1.5mm];
\draw (6,0) node[scale=2]{$\prec$};
\begin{scope}[shift={(12,0)}]
\draw[help lines, color=gray!80, dashed] (-4.9,-4.9) grid (4.9,4.9);
\draw[-, thick] (-5,0)--(5,0);
\draw[-, thick] (0,-5)--(0,5);
\draw[-, thin, red] (-3,0)--(-3,2.8);
\draw[-, thin, red] (-3.2,2.5)--(-3,2.8)--(-2.8,2.5);
\draw[-, thin, red] (0,-3)--(2.8,-3);
\draw[-, thin, red] (2.5,-2.8)--(2.8,-3)--(2.5,-3.2);

\filldraw [thin, red!30!white,fill=red!30!white] (-3,0) circle[radius=1.5mm];
\filldraw [thin, red!30!white,fill=red!30!white] (0,-3) circle[radius=1.5mm];
\filldraw [thin, red,fill=red] (3,0) circle[radius=1.5mm];
\filldraw [thin, red,fill=red] (0,3) circle[radius=1.5mm];
\filldraw [thin, blue,fill=blue] (-3,3) circle[radius=1.5mm];
\filldraw [thin, blue,fill=blue] (3,-3) circle[radius=1.5mm];

\draw (6,0) node[scale=2]{$\sim$};
\end{scope}
\begin{scope}[shift={(24,0)}]
\draw[help lines, color=gray!80, dashed] (-4.9,-4.9) grid (4.9,4.9);
\draw[-, thick] (-5,0)--(5,0);
\draw[-, thick] (0,-5)--(0,5);
\filldraw [thin, violet,fill=violet] (3,-1.5) circle[radius=1.5mm];
\filldraw [thin, violet,fill=violet] (-1.5,3) circle[radius=1.5mm];

\filldraw [thin, red!30!white,fill=red!30!white] (3,0) circle[radius=1.5mm];
\filldraw [thin, red!30!white,fill=red!30!white] (0,3) circle[radius=1.5mm];
\filldraw [thin, blue!30!white,fill=blue!30!white] (-3,3) circle[radius=1.5mm];
\filldraw [thin, blue!30!white,fill=blue!30!white] (3,-3) circle[radius=1.5mm];
\end{scope}
\end{tikzpicture}}

\begin{remark}\label{uneven} In the proof, we used the same value $a$, $-a$ for the $x$ and $y$-coordinates of our starting four points. But it is easy to see that we could have used $a \neq b$ for $x$ and $y$-coordinates, considering instead $(a,0),(-a,0), (0,b), (0,-b)$. This will be important later on. \end{remark}

Subsequent impossibility results will primarily be applications of proposition \ref{main}. But there is a second impossibility result which is also worth discussing in this context. 

We begin by introducing the key axiom. For $i \in \{1,2\}$, and any lottery $f$, let $\pi_i(f):\mathbb R \to [0,1]$ be the function  so that $\pi_i(f)(x)=\sum_{\{o \in O\mid \pi_i (o)=x\}} f(o)$. Given this definition, $\pi_i(f)$ can be thought of as the projection of a lottery onto the relevant axis. The proposed axiom then says that the projection of a lottery on an axis suffices to determine its certainty-equivalent in that coordinate (if it has one).
\begin{description}
\item[Dimensional Separability] For $i \in \{1, 2\}$, if $\pi_i(f)=\pi_i(g)$, $f \sim o$ and $g \sim o'$, then $\pi_i (o)=\pi_i (o')$.
\end{description}
This axiom neither implies nor is implied by Independence. But given
\begin{description}
\item[Certainty Equivalents] For all $f$, there is an $o$ such that $f \sim o$,
\end{description} 
it is immediately in conflict with Unidimensional Expectations, Pareto, Converse Pareto, and Negative Dominance. In fact nothing as strong as Unidimensional Expectations is required. The following much weaker assumption suffices:
\begin{description}
\item[Strict Betweenness] For unidimensional $f$, if there are $a, b$ in the support of $f$ such that $a \succ b$, then if $f \sim o$, there are $c, d$ in the  support of $f$ such that $c \succ o \succ d$.
\end{description}

\begin{proposition}\label{separability} Let $O= \mathbb R^2$. Pareto, Converse Pareto, Strict Betweenness, Certainty Equivalents, Dimensional Separability, and Negative Dominance are inconsistent.\end{proposition}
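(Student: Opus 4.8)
The plan is to refute Negative Dominance with a single lottery and a single comparison outcome, in the spirit of the example behind proposition~\ref{expectationalism}, but extracting the information we need from Strict Betweenness and Dimensional Separability rather than from Expectationalism. Take $F=(1,-1)/(-1,1)$. Its two support points differ in both coordinates, so $F$ is \emph{not} unidimensional, and Strict Betweenness therefore says nothing about $F$ directly; the key observation is that Dimensional Separability still lets us locate the certainty equivalent of $F$ one coordinate at a time, by reading off the bounds that Strict Betweenness imposes on genuinely unidimensional lotteries sharing $F$'s marginals.

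First, by Certainty Equivalents, $F\sim o_F$ for some $o_F=(p,q)$. Next, consider the unidimensional lottery $G=(1,0)/(-1,0)$, whose first marginal is $\frac{1}{2}\delta_{1}+\frac{1}{2}\delta_{-1}=\pi_1(F)$. By Certainty Equivalents $G\sim o_G$; since $(1,0)\succ(-1,0)$ (Pareto together with Converse Pareto), Strict Betweenness forces $(1,0)\succ o_G\succ(-1,0)$, and then Pareto and Converse Pareto pin $o_G$ down to $\pi_2(o_G)=0$ and $\pi_1(o_G)\in(-1,1)$. Dimensional Separability (with $i=1$) now gives $p=\pi_1(o_F)=\pi_1(o_G)\in(-1,1)$. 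The mirror-image argument with $H=(0,1)/(0,-1)$, whose second marginal equals $\pi_2(F)$, gives $q\in(-1,1)$.

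Finally, because $p>-1$ and $q>-1$ we may choose an outcome $o^{*}=(s,t)$ with $-1<s<p$ and $-1<t<q$. Then $o_F=(p,q)\succ o^{*}$ by Pareto and Converse Pareto, hence $F\succ o^{*}$ since $\succeq$ is a preorder. Negative Dominance would then require some point in the support of $F$ to be strictly preferred to $o^{*}$; but Converse Pareto rules this out, since $(1,-1)\succeq o^{*}$ would need $-1\ge t$, which fails, and $(-1,1)\succeq o^{*}$ would need $-1\ge s$, which also fails. This contradiction establishes the proposition.

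The step I expect to be the crux is the passage from Strict Betweenness — which only speaks about unidimensional lotteries — to the bound $p,q\in(-1,1)$ for the non-unidimensional $F$: this is exactly where Dimensional Separability is doing the work, decoupling the two coordinates so that each one-dimensional betweenness bound can be transported onto $F$'s certainty equivalent. A secondary point to be careful about is that $F\sim o_F$ is not itself a violation of Negative Dominance, since strict preference over an outcome needs strict dominance in both coordinates; this is why one must descend to a target $o^{*}$ strictly below $o_F$, and the construction is arranged so that this requires no case analysis on the signs of $p$ and $q$, the degenerate possibility $p=q=0$ included.
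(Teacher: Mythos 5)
Your proof is correct and follows essentially the same route as the paper's: pin down the certainty equivalents of the unidimensional lotteries via Strict Betweenness, transfer their coordinates to the certainty equivalent of the anti-diagonal lottery via Dimensional Separability, then perturb strictly downward and use Converse Pareto to contradict Negative Dominance. The only differences are cosmetic (fixing $a=b=1$ and writing the perturbed point as $(s,t)$ rather than $(x-\epsilon,y-\epsilon)$).
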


\begin{proof} By Strict Betweenness, Pareto, Converse Pareto, and Certainty Equivalents, for positive $a, b$, there are $x, y$ with $-a < x <a$ and $-b < y< b$ such that $(x, 0) \sim (-a,0)/(a,0)$ and $(0,y) \sim (0,b)/(0,-b)$. By Dimensional Separability, $(-a,b)/(-b,a) \sim (x,y)$. Given that $x$ and $y$ are strictly inside $[-a,a]$ and $[-b,b]$ there is an $\epsilon>0$ such that  $-a<x-\epsilon$ and $-b< y-\epsilon$. By Pareto, $(x-\epsilon, y-\epsilon) \prec (x,y)$ and hence $(x-\epsilon, y-\epsilon) \prec (-a,b)/(-b,a)$. By Converse Pareto,  $(x-\epsilon, y-\epsilon) \bowtie(-a,b)$ and $(x-\epsilon, y-\epsilon) \bowtie(-b,a)$, contradicting Negative Dominance. \end{proof}

This second impossibility result could be understood as evidence against Negative Dominance. Later, I will consider this response in more detail. But in what follows, I will mostly continue to focus on proposition \ref{main}, because in my view neither Dimensional Separability nor Certainty Equivalents are conceptually well motivated. Proposition \ref{expectationalism} motivates rejecting Expectationalism intuitively \emph{because} the latter implies Dimensional Separability. Proposition \ref{main}, by contrast, shows a conflict with principles which were not called into question by this opening result.

 \subsection{Relaxing Converse Pareto}\label{conversepareto}
   
I will understand proposition \ref{main} to indicate a conflict between: (i) incompleteness of preferences with a non-convex region of non-comparability; (ii) Negative Dominance; and (iii) Independence. To build the case for this interpretation, I show in this section that the result also holds for significant weakenings of Converse Pareto, and I say more to motivate these weakenings conceptually. In the subsequent two sections, I show that the result also holds for significant weakenings of Unidimensional Expectations.

Those who are satisfied with the idea that the conflict points to a tension between Independence and Negative Dominance in the present setting may wish to skip now to the next section, where I develop a theory which rejects Independence (section \ref{newtheory}).
  
Perhaps the most obvious way in which the impossibility result goes beyond the three minimal assumptions just stated is that Converse Pareto is significantly stronger than the assumption that incomplete preferences have a non-convex region of non-comparability. As noted earlier, Converse Pareto rules out non-trivial tradeoffs of any kind across dimensions. Those who endorse the rationality of incomplete preferences typically hold that, for generic $o,o'$ with $o \bowtie  o'$ there is often a \emph{sufficiently great} improvements in some dimension of $o'$ so that the resulting $o'^+ \succ o$. Converse Pareto rules out this possibility. If $o$ is worse along any dimension than $o'$, then any outcome which is the same in respect of that dimension as $o$ cannot be better than $o'$.  

Accordingly, it is natural to wonder whether it is this particular assumption about the structure of incomplete preferences, rather than the incompleteness itself, which drives the limitative results. But as we have seen, formally, the assumption is in fact inessential. Remark \ref{conversepareto1} already showed how proposition \ref{expectationalism} requires very little about the structure of incomparability. Proposition \ref{main} also allows substantial weakenings of Converse Pareto, allowing for nontrivial tradeoffs across dimensions. The only use of Converse Pareto is to ensure that $(-a/2, a)$ and $(a, -a/2)$ are both incomparable with $(0,0)$, and thus to obtain an inconsistency with Negative Dominance. But this assumption is motivated directly by the general idea that the decision-maker we are modeling does not have preferences defined on all tradeoffs between dimensions. Given this, it's natural to think that there could be some positive $a$ for which the decision-maker's preferences are not defined on the above pair. In fact, noted in remark \ref{uneven}, the proof does not depend on the values in the $x$-coordinate being the same as those in the $y$-coordinate. So we have already proven a much more general statement:

\begin{fact} \label{initialfact}Let $O= \mathbb R^2$. Suppose that for some $a,b>0$, $(a, -b/2)\bowtie  (0,0)$ and $(-a/2, b) \bowtie  (0,0)$. Then Pareto, Independence, Unidimensional Expectations and Negative Dominance are inconsistent. \end{fact}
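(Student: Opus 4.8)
The plan is to replay the proof of Proposition \ref{main} essentially unchanged, with two modifications — one purely notational, one substantive but minimal. First, following Remark \ref{uneven}, I start from the asymmetric quadruple of outcomes $(a,0),(-a,0),(0,b),(0,-b)$ rather than the symmetric one, carrying the separate constants $a$ and $b$ through every step. Second, I note that Converse Pareto was invoked in the proof of Proposition \ref{main} at exactly one place — to conclude that the two outcomes in the support of the final lottery are incomparable with $(0,0)$ — and the hypothesis of the Fact supplies precisely the analogue of that conclusion, namely $(-a/2,b)\bowtie(0,0)$ and $(a,-b/2)\bowtie(0,0)$. No other use is made of Converse Pareto, so it can simply be dropped.

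In detail: Unidimensional Expectations gives $(-a,0)/(a,0)\sim(0,0)$ and $(0,b)/(0,-b)\sim(0,0)$, since each of these lotteries is supported on a horizontal or vertical pair. Two applications of Independence (mixing first with $(0,0)$, then with $(-a,0)/(a,0)$), followed by transitivity of $\sim$, yield $f:=[(0,b)/(0,-b)]/[(-a,0)/(a,0)]\sim(0,0)$. Using $a,b>0$, Pareto gives the strict preferences $(-a,b)\succ(-a,0)$ and $(a,-b)\succ(0,-b)$; substituting these improvements into $f$ one at a time via Independence (which preserves strict preference) produces $f^{++}:=(0,b)/(a,0)/(-a,b)/(a,-b)\succ(0,0)$. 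Two further Unidimensional Expectations computations, $(0,b)/(-a,b)\sim(-a/2,b)$ and $(a,0)/(a,-b)\sim(a,-b/2)$, together with two more applications of Independence, collapse $f^{++}$ to $(-a/2,b)/(a,-b/2)$, so that $(-a/2,b)/(a,-b/2)\succ(0,0)$. But by the hypothesis of the Fact, neither outcome in the support of the lottery $(-a/2,b)/(a,-b/2)$ is strictly preferred to $(0,0)$, while $(0,0)$ is the only outcome in the support of the degenerate lottery $(0,0)$; this contradicts Negative Dominance.

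I do not anticipate a real obstacle, since the argument is contained in the proof of Proposition \ref{main} and the two remarks preceding the Fact — indeed the whole thing could be stated as a one-line corollary. The only point requiring care is routine bookkeeping: tracking which mixture each invocation of Independence performs, and confirming that each of the four lotteries to which Unidimensional Expectations is applied really is unidimensional, which is immediate since each is supported on two outcomes agreeing in one coordinate.
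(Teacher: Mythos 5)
Your proposal is correct and is essentially the paper's own argument: the paper treats Fact~\ref{initialfact} as already established by the proof of Proposition~\ref{main}, invoking Remark~\ref{uneven} to allow the asymmetric quadruple $(a,0),(-a,0),(0,b),(0,-b)$ and observing, just as you do, that Converse Pareto's only role there was to deliver the incomparabilities $(-a/2,b)\bowtie(0,0)$ and $(a,-b/2)\bowtie(0,0)$, which the Fact's hypothesis now supplies directly. Your step obtaining $(-a,b)\succ(-a,0)$ and $(a,-b)\succ(0,-b)$ ``by Pareto'' follows the paper's own usage in the proof of Proposition~\ref{main}, so it is not a deviation from the paper's route.
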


I have said that the assumption that there are $a$ and $b$ such that $(a, -b/2)\bowtie  (0,0)$ and $(-a/2, b) \bowtie  (0,0)$ is ``plausible'', but one might wonder whether a systematic theory of partial comparability would preserve them. To see how we might reject Converse Pareto and provide a full theory of $\succeq$, which predicts incompleteness but allows non-trivial tradeoffs across dimensions, consider the lines $y=-2x$ and $y=-\frac{1}{2}x$, drawn in the following figure. 

\begin{center}
\begin{tikzpicture}[ampersand replacement=\&,scale=.44]
  \draw[help lines, color=gray!80, dashed] (-4.9,-4.9) grid (4.9,4.9);
  \draw[-, thick] (-5,0)--(5,0);
  \draw[-, thick] (0,-5)--(0,5);
  \filldraw [thin, gray!70,fill=gray!30] (0,0) circle[radius=3mm];
  \draw[blue, thick] (-5,2.5)--(5,-2.5);
  \draw[red, thick] (-2.5,5)--(2.5,-5);
\end{tikzpicture}
\end{center}

We can use these two lines to define the set $\{ x | x \succeq (0,0)\}$, intuitively, as the points that are above and to the right of both lines. Similarly, we can define the set $\{ x | x \preceq (0,0) \}$, roughly, as the points that are below and to the left of both lines. The remaining points are understood to be incomparable with $(0,0)$. In the figure below, the green shaded points $\succeq (0,0)$, the purple shaded $\preceq (0,0)$, and the unshaded are incommensurable.

\begin{center}
\begin{tikzpicture}[ampersand replacement=\&,scale=.44]
  \draw[help lines, color=gray!80, dashed] (-4.9,-4.9) grid (4.9,4.9);
  \draw[-, thick] (-5,0)--(5,0);
  \draw[-, thick] (0,-5)--(0,5);
  \filldraw [thin, gray!70,fill=gray!30] (0,0) circle[radius=3mm];
  \draw[blue, thick] (-5,2.5)--(5,-2.5);
  \draw[red, thick] (-2.5,5)--(2.5,-5);
  \path[draw=green!30, fill=green!10] (-2.5,5)--(0,0)--(5,-2.5)--(5,5)--cycle;
  \path[draw=purple!30, fill=purple!10] (-5,2.5)--(0,0)--(2.5,-5)--(-5,-5)--cycle;
\end{tikzpicture}
\end{center}

In this concrete example, we chose two particular lines. But we can define a pre-order with any pair of negative slope lines, as follows:
\begin{description}
\item[Lines] For some positive $l, m$, with $l \neq m$, $(x,y) \succeq (x',y')$ if and only if $y \geq -l x + (l x' + y')$ \emph{and} $y \geq -m x + (m x' + y')$.
\end{description}

\begin{fact} Given Lines, $\succeq$ is transitive and reflexive. \end{fact}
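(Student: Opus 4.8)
The plan is to observe that the relation defined by \textbf{Lines} is the intersection of two relations, each of which is a (translation-invariant) half-plane order, and that an intersection of preorders is a preorder. Concretely, for a fixed positive slope parameter $l$, define $\succeq_l$ by $(x,y) \succeq_l (x',y')$ iff $y + lx \geq y' + lx'$; that is, $(x,y) \succeq_l (x',y')$ iff $\varphi_l(x,y) \geq \varphi_l(x',y')$ where $\varphi_l(x,y) := y + lx$ is a linear functional on $\mathbb{R}^2$. Then $\succeq$ as given by \textbf{Lines} is exactly $\succeq_l \cap \succeq_m$: unwinding the inequality $y \geq -lx + (lx' + y')$ gives $y + lx \geq y' + lx'$, i.e. $\varphi_l(x,y) \geq \varphi_l(x',y')$, and similarly for $m$.

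The steps are then: (1) Rewrite the defining condition of \textbf{Lines} in the form $\varphi_l(x,y)\ge \varphi_l(x',y')$ and $\varphi_m(x,y)\ge\varphi_m(x',y')$, as above --- this is the only real "calculation" and it is a one-line rearrangement. (2) Note that each $\succeq_l$ is reflexive (since $\varphi_l(x,y) \geq \varphi_l(x,y)$ trivially) and transitive (since $\geq$ on $\mathbb{R}$ is transitive: from $\varphi_l(p)\ge\varphi_l(q)$ and $\varphi_l(q)\ge\varphi_l(r)$ we get $\varphi_l(p)\ge\varphi_l(r)$). So each $\succeq_l$ is a preorder --- indeed a total preorder, the pullback of the order on $\mathbb{R}$ along a function, though we only need reflexivity and transitivity. (3) Conclude: if $\succeq = \succeq_l \cap \succeq_m$, then $\succeq$ is reflexive because $p \succeq_l p$ and $p \succeq_m p$ for every $p$, and transitive because if $p \succeq q$ and $q \succeq r$ then $p \succeq_l q \succeq_l r$ and $p \succeq_m q \succeq_m r$, whence $p \succeq_l r$ and $p \succeq_m r$, i.e. $p \succeq r$.

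There is essentially no obstacle here; the fact is routine once the relation is recognized as an intersection of two linear-functional preorders, and the proof is three or four lines. The only thing to be a little careful about is the bookkeeping in step (1) --- getting the signs right so that the half-plane "above and to the right of the line $y = -lx + c$" really does correspond to $y + lx \geq c$ --- and noting that the hypothesis $l \neq m$ (and $l,m>0$) plays no role in reflexivity or transitivity; it is only needed elsewhere (to ensure the region of incomparability is nonempty and genuinely two-dimensional), so it can be ignored for this fact.
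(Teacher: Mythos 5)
Your proof is correct and matches the paper's treatment: the paper dispatches this fact with the remark that ``the proof is by basic algebra,'' and your rearrangement of the defining inequalities into $\varphi_l(x,y)\ge\varphi_l(x',y')$ and $\varphi_m(x,y)\ge\varphi_m(x',y')$, followed by the observation that an intersection of two pullback preorders is a preorder, is exactly that algebra made explicit. Your side remark that $l\neq m$ and positivity are irrelevant to reflexivity and transitivity is also accurate.
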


The proof is by basic algebra.

The combination of Pareto and Converse Pareto can be seen as a limiting case of Lines, where $a$ is $0$, corresponding to horizontal lines, and $b$ is (inexactly, but intuitively) ``$\infty$'', corresponding to vertical lines. Lines generalizes this theory, allowing expansions of the domain of comparable points by a rotation of the vertical lines counterclockwise, and of the horizontal lines clockwise. 

Still, using essentially our argument above from Proposition \ref{main} we can still show directly that:

\begin{proposition}\label{lines} Let $O = \mathbb R^2$. Lines, Independence, Unidimensional Expectations, and Negative Dominance are inconsistent. \end{proposition}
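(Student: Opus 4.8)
The plan is to reduce to a variant of the mechanism behind Proposition~\ref{main} and Fact~\ref{initialfact}. First I would record that Lines implies Pareto: if $x \ge x'$ and $y \ge y'$ then $y-y' \ge 0 \ge -l(x-x')$ and $y-y' \ge 0 \ge -m(x-x')$, whence $(x,y)\succeq(x',y')$; so everything Pareto delivers is available. It also helps to rewrite Lines through the linear functionals $u_1(x,y)=lx+y$ and $u_2(x,y)=mx+y$: by definition $(x,y)\succeq(x',y')$ iff $u_1(x,y)\ge u_1(x',y')$ and $u_2(x,y)\ge u_2(x',y')$, so an outcome $o$ has $o\succ(0,0)$ iff $u_1(o)\ge 0$, $u_2(o)\ge 0$ with at least one inequality strict; in particular $o\not\succ(0,0)$ as soon as $u_1(o)<0$ or $u_2(o)<0$. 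Assume without loss of generality that $l<m$, and fix $\phi\in(0,1)$ with $(1-\phi)^2>l/m$ (possible since $l<m$) together with any $t$ satisfying $l/(1-\phi)<t<m(1-\phi)$ (a nonempty interval precisely because $(1-\phi)^2>l/m$).

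Next I would re-run the proof of Proposition~\ref{main}, but with its symmetric starting quadruple replaced by the asymmetric one $(\phi,0)$, $\bigl(-(1-\phi),0\bigr)$, $(0,\phi t)$, $\bigl(0,-(1-\phi)t\bigr)$. By Unidimensional Expectations the unidimensional lotteries $(1-\phi)(\phi,0)+\phi\bigl(-(1-\phi),0\bigr)$ and $(1-\phi)(0,\phi t)+\phi\bigl(0,-(1-\phi)t\bigr)$ each have expectation $(0,0)$, hence are $\sim(0,0)$; two applications of Independence, exactly as in Proposition~\ref{main}, mixing these two lotteries in equal proportions, then give $f\sim(0,0)$ for that equal mixture $f$. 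I would then replace the outcome $\bigl(-(1-\phi),0\bigr)$ of $f$ by $\bigl(-(1-\phi),\phi t\bigr)$ and the outcome $\bigl(0,-(1-\phi)t\bigr)$ by $\bigl(\phi,-(1-\phi)t\bigr)$ --- each a strict Pareto improvement, strictness being immediate from Lines --- so that, since Independence preserves strict preference, the resulting lottery $f^{++}$ satisfies $f^{++}\succ(0,0)$. Its support is $\bigl\{(0,\phi t),\ \bigl(-(1-\phi),\phi t\bigr),\ (\phi,0),\ \bigl(\phi,-(1-\phi)t\bigr)\bigr\}$; the pairs $\bigl\{(0,\phi t),\ \bigl(-(1-\phi),\phi t\bigr)\bigr\}$ and $\bigl\{(\phi,0),\ \bigl(\phi,-(1-\phi)t\bigr)\bigr\}$ are each unidimensional (common second, resp. first, coordinate), so collapsing them via Unidimensional Expectations and applying Independence twice more yields
\[
\tfrac12\bigl(-\phi(1-\phi),\ \phi t\bigr)\;+\;\tfrac12\bigl(\phi,\ -\phi(1-\phi)t\bigr)\;\succ\;(0,0).
\]

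Finally I would apply Negative Dominance: it forces one of the terminal outcomes $P:=\bigl(-\phi(1-\phi),\phi t\bigr)$ and $Q:=\bigl(\phi,-\phi(1-\phi)t\bigr)$ to satisfy $\succ(0,0)$. But $u_2(P)=\phi\bigl(t-m(1-\phi)\bigr)<0$ since $t<m(1-\phi)$, so $P\not\succ(0,0)$; and $u_1(Q)=\phi\bigl(l-(1-\phi)t\bigr)<0$ since $t>l/(1-\phi)$, so $Q\not\succ(0,0)$. This contradicts Negative Dominance, completing the proof.

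The step I expect to be the real obstacle is not a hard computation but the realization that one cannot simply invoke Fact~\ref{initialfact}: that fact is exactly the case $\phi=\tfrac12$ of the construction above, in which the terminal outcomes are $(-a/2,b)$ and $(a,-b/2)$, and under Lines these are simultaneously incomparable with $(0,0)$ only when $m>4l$. The remedy is to push $\phi$ toward $0$, so that each axis-lottery places weight $1-\phi$ on the outcome nearest the origin and only weight $\phi$ on the far one; this drags the terminal outcomes back into the non-comparability region for every ratio $l/m<1$. The one point that must be checked is that the inequalities $t<m(1-\phi)$ and $t>l/(1-\phi)$ are jointly satisfiable precisely when $(1-\phi)^2>l/m$, which is elementary.
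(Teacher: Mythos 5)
Your proof is correct, and it follows the same overall mechanism as the paper's (run the Proposition~\ref{main} construction from a quadruple on the axes through $(0,0)$, improve two support points, collapse the two unidimensional pieces via Unidimensional Expectations and Independence, then contradict Negative Dominance). But your treatment of the key existential step is genuinely different from, and more careful than, the paper's. The paper's proof simply asserts that one can choose $a,b>0$ with $(-a/2,b)\bowtie(0,0)$ and $(a,-b/2)\bowtie(0,0)$ ``given any choice of slopes,'' and then runs the symmetric, uniform-probability construction of Fact~\ref{initialfact}. As you observe, with $0<l<m$ those two incomparabilities require $2la<b<ma/2$, which is satisfiable only when $m>4l$; indeed the paper's own illustrative slopes $l=\tfrac12$, $m=2$ sit exactly at the boundary $m=4l$, where the symmetric construction fails. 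Your remedy—keeping the quadruple centered at the origin but using asymmetric points $(\phi,0)$, $(-(1-\phi),0)$, $(0,\phi t)$, $(0,-(1-\phi)t)$ with within-lottery weights $1-\phi$ and $\phi$, so the terminal certainty equivalents $P=(-\phi(1-\phi),\phi t)$ and $Q=(\phi,-\phi(1-\phi)t)$ can be dragged into the region where neither is $\succ(0,0)$—works for every admissible pair $l\neq m$, since $(1-\phi)^2>l/m$ can always be arranged by taking $\phi$ small and then $t\in\bigl(l/(1-\phi),\,m(1-\phi)\bigr)$. I verified the computations: Lines does imply (strict) Pareto, the two improvements are strict, the rearranged decomposition $f^{++}=\tfrac12 f_T+\tfrac12 f_R$ has the stated unidimensional pieces with expectations $P$ and $Q$, and $u_2(P)<0$, $u_1(Q)<0$ under your constraints on $t$, so Negative Dominance is violated. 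In short, your argument proves the proposition in full generality, whereas the paper's one-line proof as literally stated covers only the case of slope ratio exceeding $4$; what the paper's version buys is brevity (a direct appeal to Fact~\ref{initialfact}), while yours buys correctness across all Lines orders at the cost of tracking one extra parameter.
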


\begin{proof} We choose $a, b$ so that both $(0,0) \bowtie  (-a/2,b)$ and $(0,0) \bowtie  (a, -b/2)$ (this is clearly possible, given any choice of slopes in Lines). We then run the argument of proposition \ref{main} starting with $(a,0)$, $(-a,0)$, $(0,b)$, $(0,-b)$.\end{proof}
 
Lines is one possible way that different dimensions may trade off nontrivially against one another, and it encompasses a wide array of such tradeoffs. Of course it may not be plausible for all relevant applications. But the proof here illustrates that the incompatibility does not depend essentially on Converse Pareto, and, moreover, would still hold for a wide variety of theories of partial comparability.

\subsection{Unidimensional Continuity and Certainty Equivalents}\label{continuity}

I have said that Unidimensional Expectations seems a possible description of some rational agent in our setting, but one might take our result to show that it is not, in the hopes of preserving both Independence and Negative Dominance. One might for example hold that either a rational decision-maker must have decreasing marginal utility in each of the relevant goods (even if there is no representation which allows us to speak of utility directly), or that the decision maker must be risk-averse or risk-prone.\footnote{Risk averse preferences famously violate Independence, so it is unclear whether this motivation really makes sense for a position that upholds Independence in response to our result. Still, it will help to guide us to a significant weakening of Unidimensional Expectations.}

Unfortunately, much weaker assumptions, which allow for such decision-makers, still suffice to generate the problem, if we assume Independence. The following two, slightly clunkier but much weaker assumptions turn out to be enough:
\begin{description}
\item[Unidimensional Continuity] If $a \succ b \succ c$ and $a,b,c$ are unidimensional with each other, there is an $f$ with support on $a,c$ such that $f \sim b$.
\item[Unidimensional Certainty Equivalents] If $f$ is unidimensional, then there is an $o^*$ which is unidimensional with $f$ such that $f \sim o^*$. Moreover, if there are $o, o'$ in the support of $f$ such that $o \succ o'$, then there are $o''$ and $o'''$ in the support of $f$ such that $o'' \succ o^* \succ o'''$.
\end{description}
These are both existence axioms, but of different kinds. Unidimensional Continuity requires the existence of certain lotteries, while Unidimensional Certainty Equivalents requires the existence of certain outcomes.\footnote{These axioms have the flavor of those which are shown by \citet{dubra2011continuity,karni2015continuity} to rule out incompleteness. But no analogue of the Archimedean axiom is used here, and the axioms are restricted to outcomes unidimensional with one another. Given that we have been happy to assume completeness for unidimensional lotteries, these axioms should not rule out incompleteness in general.}

The conjunction of these two axioms is a substantial weakening of Unidimensional Expectations. Given $a \succ b \succ c$, all unidimensional with one another, Unidimensional Expectations prescribes an exact probability for which $a \alpha c \sim b$. Unidimensional Continuity does not; it simply requires that there be some such probability. Similarly Unidimensional Expectations prescribes that any unidimensional lottery will be equivalent to its expectation (hence the name). But Unidimensional Certainty Equivalents does not: it simply says that the certainty equivalent must exist and lie between some of the outcomes in the support of the lottery. 

Together these axioms allow that the agent exhibits decreasing marginal utility in goods, and/or that they might be risk averse in unidimensional lotteries.

But once again in the presence of Independence, they do not allow us to escape the result:
\begin{proposition} \label{strongest} Let $O=\mathbb R^2$. Pareto, Converse Pareto, Unidimensional Continuity, Unidimensional Certainty Equivalents, Independence, and Negative Dominance are inconsistent. \end{proposition}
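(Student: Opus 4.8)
The plan is to re-run the argument of Proposition~\ref{main} almost verbatim, replacing its two appeals to Unidimensional Expectations --- one at the start (to produce a lottery indifferent to $(0,0)$ out of two unidimensional lotteries on the axes) and one at the end (to identify the certainty equivalents of two unidimensional sub-lotteries with their midpoints) --- by appeals to Unidimensional Continuity and Unidimensional Certainty Equivalents respectively. The cost is that these existence axioms hand us lotteries and certainty equivalents with \emph{uncontrolled} weights and coordinates; but the coordinates, while uncontrolled, are guaranteed to be \emph{strictly interior}, and that is all the original argument really used. So the proof just has to carry a couple of extra parameters along, and no new idea is needed.

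Concretely, I would fix $a>0$ and first apply Unidimensional Continuity to the unidimensional triples $(a,0)\succ(0,0)\succ(-a,0)$ and $(0,a)\succ(0,0)\succ(0,-a)$ (the strictness coming from Pareto together with Converse Pareto) to get lotteries $\ell_x = p\,(-a,0)+(1-p)\,(a,0)$ and $\ell_y = q\,(0,a)+(1-q)\,(0,-a)$ with $\ell_x\sim(0,0)\sim\ell_y$; since $(-a,0)\prec(0,0)\prec(a,0)$, neither endpoint can be dropped, so $p,q\in(0,1)$. Two applications of Independence to $\tfrac{1}{2}\ell_y+\tfrac{1}{2}\ell_x$, plus transitivity, give a lottery $f\sim(0,0)$ supported on the four axis points $(0,a),(0,-a),(-a,0),(a,0)$ with weights $\tfrac{q}{2},\tfrac{1-q}{2},\tfrac{p}{2},\tfrac{1-p}{2}$. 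Then, exactly as in Proposition~\ref{main}, Pareto gives $(-a,a)\succ(-a,0)$ and $(a,-a)\succ(0,-a)$ (strictness from Converse Pareto again), so shifting the mass on $(-a,0)$ to $(-a,a)$ and then the mass on $(0,-a)$ to $(a,-a)$, each move licensed by Independence (mix the two strictly-ordered outcomes against the common remainder), yields $f^{++}\succ f\sim(0,0)$, with $f^{++}$ supported on $(0,a),(-a,a),(a,0),(a,-a)$ with weights $\tfrac{q}{2},\tfrac{p}{2},\tfrac{1-p}{2},\tfrac{1-q}{2}$.

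For the endgame I would regroup $f^{++}=\tfrac{p+q}{2}\,L_1+\tfrac{2-p-q}{2}\,L_2$, where $L_1$ is the conditional lottery on the line $y=a$ (support $\{(0,a),(-a,a)\}$) and $L_2$ the conditional lottery on the line $x=a$ (support $\{(a,0),(a,-a)\}$); both are unidimensional, and each has a strictly ordered pair of outcomes in its support. Unidimensional Certainty Equivalents then gives $o_1\sim L_1$ and $o_2\sim L_2$ that are unidimensional with $L_1$ and $L_2$ --- forcing $o_1=(-c,a)$ and $o_2=(a,-d)$ for some reals $c,d$ --- and whose ``moreover'' clause, together with Converse Pareto and irreflexivity of $\succ$, pins the ordering $(0,a)\succ(-c,a)\succ(-a,a)$ and $(a,0)\succ(a,-d)\succ(a,-a)$, hence $c,d\in(0,a)$. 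By Converse Pareto, $(-c,a)\bowtie(0,0)$ and $(a,-d)\bowtie(0,0)$. Two final applications of Independence replace $L_1$ by $o_1$ and $L_2$ by $o_2$ inside $f^{++}$, so $f^{++}\sim \tfrac{p+q}{2}(-c,a)+\tfrac{2-p-q}{2}(a,-d)$, call it $h$; since $\tfrac{p+q}{2}\in(0,1)$, $h$ is genuinely supported on the two outcomes $(-c,a)$ and $(a,-d)$, yet $h\succ(0,0)$ --- contradicting Negative Dominance.

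I expect the only real work to be the middle bookkeeping: tracking $p$ and $q$ through the two mass-shifts so that $L_1$ and $L_2$ come out unidimensional and $h$ still puts positive probability on both of its (incomparable) outcomes. The conceptual point --- that the exact midpoint $(-a/2,a)$ used in Proposition~\ref{main} is needed here only to the extent that the certainty equivalent of $L_1$ lies \emph{strictly between} $(0,a)$ and $(-a,a)$, which is precisely the content of the ``moreover'' clause of Unidimensional Certainty Equivalents --- is where the weakening is absorbed, and it is a one-line observation once the construction is set up.
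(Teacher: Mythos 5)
Your proposal is correct and follows essentially the same route as the paper's own proof: obtain $f\sim(0,0)$ from two unidimensional lotteries via Unidimensional Continuity and Independence, improve two outcomes to get $f^{++}\succ(0,0)$, regroup $f^{++}$ as a mixture of two unidimensional lotteries on the lines $y=a$ and $x=a$, and use Unidimensional Certainty Equivalents plus Converse Pareto to replace them with certainty equivalents incomparable to $(0,0)$, contradicting Negative Dominance. The only differences are notational (your $p,q$ versus the paper's $\alpha,\beta$) and that you spell out a few interiority and strictness points the paper leaves implicit.
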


\begin{remark} The only use of Converse Pareto in the proof below will be in guaranteeing that two points $c \bowtie  (0,0)$ and $d \bowtie  (0,0)$, but we could equally plausibly assume this from the start. In particular, the inconsistency can also be generated given Lines.\end{remark}

\begin{proof} The proof is essentially as before. For any $a>0$, by Unidimensional Continuity, we have that there is an $\alpha$ such that$(a,0)\alpha(-a,0) \sim (0,0)$ and a $\beta$ such that $(0,a)\beta(0,-a) \sim (0,0)$. By Independence, $$f=[(a,0)\alpha(-a,0)]/[(0,a)\beta(0,-a)] \sim (0,0).$$ By Independence and Pareto, moving the leftmost point $(-a,0)$ upward, and the bottom point $(0,-a)$ rightward (while keeping probabilities the same) yields a better lottery. So we have $$f^{++}=[(a,0)\alpha(-a,a)]/[(0,a)\beta(a,-a)] \succ (0,0).$$ 
Expanding the notation, we have:
\begin{equation*}
\begin{aligned} & [(a,0)\alpha(-a,a)]/[(0,a)\beta(a,-a)]= \\ & \quad \qquad \qquad \frac{\alpha (a,0) + (1-\alpha)(-a,a)}{2} + \frac{\beta (0,a) + (1-\beta) (a,-a)}{2}.\end{aligned}\end{equation*}
Rearranging terms in the latter gives us:
$$\frac{\alpha (a,0) + (1-\beta) (a,-a)}{2} + \frac{\beta (0,a) +(1-\alpha) (-a,a)}{2}.$$
And this is now a mixture of two unidimensional lotteries. It is a mixture of (on the left ) $f_T=(a,0)\frac{\alpha}{\alpha + (1-\beta)}(a,-a)$ and (on the right) $f_B=(0,a) \frac{\beta}{\beta + (1-\alpha)}(-a,a)$ (``T'' for ``top'' and ``B'' for ``bottom''). In particular, we have $f^{++} = f_T \frac{\alpha + (1-\beta)}{2} f_B$. By Unidimensional Certainty Equivalents, we have that there is a $c$ such that $c \sim f_T$ and a $d$ such that $d \sim f_B$. So, by Independence, $f^{++} \sim c\frac{\alpha + (1-\beta)}{2}d$, and $c\frac{\alpha + (1-\beta)}{2}d \succ (0,0)$. By the second clause of Unidimensional Certainty Equivalents, $(a,0) \succ c \succ (a,-a)$ with $\pi_1(c)=a$, and $ (0,a) \succ d \succ (-a,a)$ with $\pi_2(d)=a$. So, by Converse Pareto, we have $c \bowtie  (0,0)$ and $d \bowtie  (0,0)$, contradicting Negative Dominance.
  \end{proof}

\subsection{Relaxing the requirement of certainty-equivalents}\label{kcomp}

Each of Unidimensional Expectations and Unidimensional Certainty Equivalents guarantees a certainty-equivalent for unidimensional lotteries, i.e. a certain outcome that is deemed indifferent to the lottery. But it is natural to explore whether the result holds also for a generalization of Unidimensional Expectations, which entails only that the lottery must be indifferent \emph{or incomparable} to its expected value (and similarly for Unidimensional Certainty Equivalents and Unidimensional Continuity). In other words, we might consider:

\begin{description}
\item[Weak Unidimensional Expectations] If $f$ is unidimensional, then for any $g$ if $exp(f) \succ g$, then $f \succ g$, and if $exp(f) \prec g$, $f \prec g$.
\end{description}

This axiom eliminates the requirement that $f \sim exp(f)$; the two may now be incomparable. It also eliminates two continuity assumptions of our original setup: first, that $\{ o | o \geq o'\}$ is closed in $\mathbb R^2$; second, that $\{ \alpha | f \alpha g \geq h\}$ is closed in $[0,1]$.\footnote{For more discussion of such continuity axioms, see \citet{schmeidler1971condition,galaabaatar2019completeness,dubra2011continuity,karni2015continuity,mccarthymikkola,badernew}.} But the main issue is that preserves the claim that for any we still have:

\begin{fact} Let $O= \mathbb R^2$. Pareto, Converse Pareto, Weak Unidimensional Expectations, Independence, and Negative Dominance are inconsistent. \end{fact}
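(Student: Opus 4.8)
The plan is to re-run the argument of Proposition~\ref{main} verbatim, tracking at each step which relation (strict preference or incomparability) we actually need, and checking that Weak Unidimensional Expectations delivers enough. The key observation is that in the proof of Proposition~\ref{main}, Unidimensional Expectations was used only to obtain \emph{strict} conclusions of the form ``$\text{(some unidimensional lottery)} \succ (0,0)$'' downstream, never to establish an indifference that later got \emph{weakened}. So the idea is: wherever the original proof writes $f \sim exp(f)$ for a unidimensional $f$ and then chains this with a strict Pareto preference, we instead invoke Weak Unidimensional Expectations directly in the strict direction.

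Concretely, first I would fix $a>0$ and note that $exp\bigl((a,0)/(-a,0)\bigr) = (0,0)$ and $exp\bigl((0,a)/(0,-a)\bigr) = (0,0)$. Using Independence exactly as in Proposition~\ref{main}, I would build the lottery $f^{++} = (0,a)/(a,0)/(-a,a)/(a,-a)$; here the passage from $f$ to $f^{++}$ uses Pareto together with Independence and does \emph{not} require any expectational claim about $f$ itself, so the only delicate point is the very first step, where we need $f \sim (0,0)$ or at least something that lets Pareto-improvements push us strictly above $(0,0)$. This is where I would use Weak Unidimensional Expectations in place of the two indifferences: since $(a,0)/(-a,0)$ is unidimensional with expectation $(0,0)$, and likewise $(0,a)/(0,-a)$, I would argue that the relevant mixtures behave, relative to $(0,0)$, at least as well as their expectations --- more carefully, I would instead start from the \emph{Pareto-improved} points and show $exp(f^{++}) \succ (0,0)$ by Pareto (its expectation is a point with both coordinates positive), hence $f^{++} \succ (0,0)$ would follow if $f^{++}$ were unidimensional, which it is not. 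So the cleaner route: mimic the last step of Proposition~\ref{main}, rewriting $f^{++}$ via Independence as a mixture of the two unidimensional lotteries $(0,a)/(-a,a)$ and $(a,0)/(a,-a)$, whose expectations are $(-a/2,a)$ and $(a,-a/2)$; apply Weak Unidimensional Expectations to each to get $f^{++} \succeq$-related to $(-a/2,a)/(a,-a/2)$ in the sense that any strict preference of that mixture over $(0,0)$ transfers. Then I would need the chain: if $exp\bigl((-a/2,a)/(a,-a/2)\bigr) = (0,0) \prec$ something, etc. --- but the expectation here is $(a/4, a/4) \succ (0,0)$ by Pareto, and Weak Unidimensional Expectations applied to each unidimensional piece, combined through Independence, yields $(-a/2,a)/(a,-a/2) \succ (0,0)$, hence $f^{++} \succ (0,0)$ is not yet what we want --- rather $(-a/2,a)/(a,-a/2) \succ (0,0)$ directly, and then Converse Pareto gives $(-a/2,a) \bowtie (0,0)$ and $(a,-a/2) \bowtie (0,0)$, contradicting Negative Dominance.

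The main obstacle, and the step I would spend the most care on, is verifying that Weak Unidimensional Expectations can actually be applied to a \emph{mixture} of unidimensional lotteries rather than to a single unidimensional lottery. Weak Unidimensional Expectations as stated only compares a unidimensional $f$ to outcomes $g$ via $exp(f)$; here I need to push a strict preference through a $\tfrac{1}{2}/\tfrac{1}{2}$ mixture of two unidimensional lotteries $f_B = (0,a)/(-a,a)$ and $f_T = (a,0)/(a,-a)$. The fix is to interleave Independence: Weak Unidimensional Expectations gives that $f_B$ is ``at least as good as'' $(-a/2,a)$ against any fixed outcome, but to substitute $f_B$ for $(-a/2,a)$ \emph{inside a mixture} I need a monotonicity-in-components principle, which is exactly what Independence supplies --- so I would apply Independence to replace $(-a/2,a)$ by $f_B$ and $(a,-a/2)$ by $f_T$ one at a time, at each stage using Weak Unidimensional Expectations to compare the relevant mixture to the mixture with the expectation plugged in. I should double-check that the direction of inequality in Weak Unidimensional Expectations (``$exp(f)\succ g \Rightarrow f \succ g$'') composes correctly with Independence's biconditional; since Independence is an iff and Weak Unidimensional Expectations is a one-directional implication in the strict sense, the composition goes through in the direction we need (from the expectational mixture being strictly above $(0,0)$ to the true mixture being strictly above $(0,0)$), and no continuity or closedness of preference is invoked anywhere, which is the whole point of the weakening.
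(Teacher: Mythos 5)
There is a genuine gap, and it sits exactly at the two places where your write-up hesitates. Weak Unidimensional Expectations gives no relation at all between a unidimensional $f$ and $exp(f)$ themselves; it only transfers strict comparisons with a third party. So, first, you never actually get the argument started: nothing yields $(a,0)/(-a,0) \sim (0,0)$, or even comparability with $(0,0)$ (the lottery may be incomparable to its expectation), and you acknowledge this but never repair it. Second, your proposed fix---using Independence to swap $f_B=(0,a)/(-a,a)$ for its expectation $(-a/2,a)$ inside a mixture, ``at each stage using Weak Unidimensional Expectations to compare the relevant mixture to the mixture with the expectation plugged in''---is not licensed: Independence can only replace a component by another when you already have a $\succeq$ or $\succ$ between them, which is precisely the relation the weakened axiom withholds, and the mixtures you want to compare are not unidimensional, so the axiom cannot be applied to them directly either. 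Consequently the pivotal claim ``$(-a/2,a)/(a,-a/2)\succ (0,0)$'' is never derived; your appeal to $exp\bigl((-a/2,a)/(a,-a/2)\bigr)=(a/4,a/4)\succ(0,0)$ would need an expectational principle for a non-unidimensional lottery, i.e.\ (weak) Expectationalism, which is not among the hypotheses.

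The missing idea is to perturb every reference point so that only the strict clauses of Weak Unidimensional Expectations are ever invoked; this is how the paper runs it (the statement is in effect a special case of the $k$-incomparability argument). Fix $a>0$ and small $\epsilon,\delta>0$ with $\epsilon+\delta<a/2$. Since $exp\bigl((a,0)/(-a,0)\bigr)=(0,0)\succ(-\epsilon,-\epsilon)$ and likewise for $(0,a)/(0,-a)$, the axiom gives both unidimensional lotteries $\succ(-\epsilon,-\epsilon)$; Independence then gives $f\succ(-\epsilon,-\epsilon)$, and the Pareto improvements give $f^{++}=(0,a)/(a,0)/(-a,a)/(a,-a)\succ(-\epsilon,-\epsilon)$, exactly as in proposition \ref{main} but with the lowered target. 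Writing $f^{++}=f_B\frac{1}{2}f_T$ with $f_T=(a,0)/(a,-a)$, note $exp(f_B)=(-a/2,a)\prec(-a/2+\delta,a+\delta)$ and $exp(f_T)=(a,-a/2)\prec(a+\delta,-a/2+\delta)$, so the axiom yields genuine strict preferences $f_B\prec(-a/2+\delta,a+\delta)$ and $f_T\prec(a+\delta,-a/2+\delta)$, which Independence can legitimately mix to give $(-a/2+\delta,a+\delta)/(a+\delta,-a/2+\delta)\succ f^{++}\succ(-\epsilon,-\epsilon)$. Because $\epsilon+\delta<a/2$, Converse Pareto makes both perturbed outcomes incomparable with $(-\epsilon,-\epsilon)$, contradicting Negative Dominance. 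Your draft keeps $(0,0)$ and the exact expectations as reference points, and with those choices the weakened axiom simply never applies.
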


The proof is again along the lines of proposition \ref{main}; it is also a corollary of the next fact.

In fact the argument can be generalized much further.

\begin{definition}
$\succeq$ satisfies \emph{k-incomparability} for $k \in \mathbb R$ if, for any unidimensional lottery $a/b$ (letting $n$, the distance between $a$ and $b$): \begin{itemize}
\item $a/b \succ c$ if $c$ lies $\frac{n}{k}$ or more ``below'' the expectation of $a/b$, on the line between $a$ and $b$;
\item $a/b \prec c$ if $c$ lies $\frac{n}{k}$ or more ``above'' the expectation of $a/b$, on the line between $a$ and $b$;
\item $a/b \bowtie  c$ if $c$ is in the open interval on the line between $a,b$ of length $n/2k$, centered at the expectation of $a/b$.
\end{itemize}
\end{definition}

This is a very weak constraint on a theory, since it only gives sufficient conditions of comparability for particular unidimensional points. Still, we can show that:

\begin{fact} Let $O = \mathbb R^2$. Pareto, Converse Pareto, Independence and Negative Dominance are incompatible if $\succeq$ satisfies k-incomparabilty.\end{fact}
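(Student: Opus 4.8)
The plan is to adapt the proof of proposition \ref{main} almost verbatim, replacing each invocation of Unidimensional Expectations with the weaker sufficient conditions supplied by $k$-incomparability. The key observation is that the earlier proof only ever used Unidimensional Expectations in two ways: (i) to know that a symmetric two-point unidimensional lottery like $(a,0)/(-a,0)$ is indifferent to --- hence at least not strictly dispreferred to, and not strictly preferred to --- its midpoint $(0,0)$; and (ii) to collapse a unidimensional lottery to a single certainty-equivalent outcome at the very end. For the present fact we do not need exact indifference anywhere, only the strict preferences that drive the contradiction with Negative Dominance, so the weaker ``$c$ is strictly below/above the expectation'' and ``$c$ near the expectation is incomparable to the lottery'' clauses of $k$-incomparability will suffice if we scale the starting points appropriately.

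Concretely, I would fix the slope-type parameter $k$ from the hypothesis, choose $a>0$ large enough that the two target points $(-a/2,a)$ and $(a,-a/2)$ land inside the relevant open incomparability interval --- note by remark \ref{uneven} and fact \ref{initialfact} we are free to use unequal coordinates, which gives room to make the displacement $a/2$ fall within the length-$n/2k$ window around the expectation once $n$ (the length of the generating unidimensional segment) is taken correspondingly large. First I would start from the four points $(a,0),(-a,0),(0,a),(0,-a)$ and, exactly as in proposition \ref{main}, build the lottery $f=(0,0)/[(-a,0)/(a,0)]$ mixed with $[(0,a)/(0,-a)]$; the third clause of $k$-incomparability gives $(-a,0)/(a,0) \bowtie (0,0)$ and $(0,a)/(0,-a)\bowtie (0,0)$ only, which is weaker than indifference, so instead I would run the argument keeping track of strict preference directly: by Pareto replace $(-a,0)$ by $(-a,a)$ and $(0,-a)$ by $(a,-a)$, and by Independence this produces a lottery $f^{++}=(0,a)/(a,0)/(-a,a)/(a,-a)$ which is strictly preferred to the corresponding mixture built from the original points. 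The remaining task is to show that original mixture is $\succeq (0,0)$, or at least not $\prec$, and then that $f^{++}$ reduces to $(-a/2,a)/(a,-a/2)$; for the first, I use the contrapositive direction built into $k$-incomparability (the midpoint lies in the incomparability window, so $(0,0)\bowtie$ the symmetric lottery, hence $(0,0)\not\succ$ it, and Independence propagates this to the mixture), and for the second, I apply $k$-incomparability to the two unidimensional sublotteries $(0,a)/(-a,a)$ and $(a,0)/(a,-a)$: their expectations are $(-a/2,a)$ and $(a,-a/2)$, and the ``strictly above/below'' clauses let me sandwich those expectations, then Independence and transitivity give $f^{++}\succ (-a/2,a)/(a,-a/2)$ or $\succ$ some lottery that by Negative Dominance would force a strict outcome comparison.

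The genuine subtlety --- and the step I expect to be the main obstacle --- is that $k$-incomparability does not actually assert $a/b \sim c$ for any $c$; it only gives one-sided strict comparisons away from the expectation and incomparability very near it. So I cannot literally "replace the lottery by its certainty-equivalent" the way proposition \ref{main} does. The fix is to avoid needing an equivalent at all: instead of deriving $f^{++}\sim(-a/2,a)/(a,-a/2)$, I derive that $f^{++}$ is strictly preferred to a point $o^\sharp$ that is itself incomparable to $(0,0)$, by choosing $o^\sharp$ slightly "below" the relevant expectations (so the strict clause applies), Pareto-dominated by nothing forced to be $\succeq(0,0)$, and still in the incomparability cone of $(0,0)$ under Converse Pareto --- exactly the maneuver used in proposition \ref{separability} and proposition \ref{strongest}. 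Then $f^{++}\succ o^\sharp$ together with $o^\sharp \bowtie(0,0)$, combined with $(0,0)$-relative facts, yields a lottery strictly preferred to something none of whose supported outcomes is strictly preferred to anything in the other support, contradicting Negative Dominance. I would double-check the geometry once: that for suitable $a,b>0$ the displaced expectation points genuinely fall in the open length-$n/2k$ incomparability interval, which is where the freedom from remark \ref{uneven} to take $b$ large relative to $a$ (or vice versa) is essential.
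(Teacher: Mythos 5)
You correctly identify the central obstacle---$k$-incomparability supplies no certainty equivalents, only one-sided strict comparisons away from the expectation and incomparability near it---and your instinct to displace reference points is the right one. But your first stage has a genuine gap. You propose to show that the original mixture $M=(a,0)/(-a,0)/(0,a)/(0,-a)$ is ``$\succeq(0,0)$, or at least not $\prec(0,0)$,'' by propagating the incomparabilities $(a,0)/(-a,0)\bowtie(0,0)$ and $(0,a)/(0,-a)\bowtie(0,0)$ through Independence. Neither half works: Independence replaces one component of a mixture at a time, so from the two incomparabilities you only reach statements such as $M\bowtie (0,0)\tfrac12[(0,a)/(0,-a)]$ and $(0,0)\tfrac12[(0,a)/(0,-a)]\bowtie(0,0)$, and incomparability does not compose across the second replacement; and even granted $M\not\prec(0,0)$, chaining it with $f^{++}\succ M$ yields nothing about $f^{++}$ versus $(0,0)$, since strict preference does not pass through incomparability in a preorder. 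The paper's proof avoids this by lowering the reference point: it takes a point such as $(-1,-1)$, strictly Pareto-below the thresholds to which the first clause of $k$-incomparability applies, so each symmetric sublottery is strictly preferred to it (via that clause, Pareto, and transitivity), and two applications of Independence then give $M\succ(-1,-1)$, hence $f^{++}\succ(-1,-1)$.

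Your endgame is also reversed. Deriving $f^{++}\succ o^\sharp$ for a point $o^\sharp$ slightly \emph{below} the sublottery expectations cannot violate Negative Dominance: $(a,0)$ and $(0,a)$ lie in the support of $f^{++}$ and Pareto-dominate any point of the form $(a,-a/2-\delta)$ or $(-a/2-\delta,a)$, so the strict outcome comparison that Negative Dominance demands is present. The paper instead applies the ``$\prec$'' clause of $k$-incomparability to the two unidimensional sublotteries $(a,0)/(a,-a)$ and $(-a,a)/(0,a)$: each is strictly \emph{dispreferred} to a point displaced above its expectation (the paper uses $(4,-1.5)$ and $(-1.5,4)$ in its $k=4$ instance), and two applications of Independence give the displaced two-point lottery $\succ f^{++}$. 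Transitivity then yields $(-1.5,4)/(4,-1.5)\succ(-1,-1)$, and Converse Pareto makes both displaced points incomparable with $(-1,-1)$; that pair of lotteries---not $f^{++}$ against anything---is where Negative Dominance is violated. Once the argument is arranged this way, your scaling worry about fitting $(-a/2,a)$ and $(a,-a/2)$ inside the incomparability window disappears: one never needs those expectations to be incomparable to anything, only the displaced points to exceed the $n/k$ threshold and to remain incomparable with the lowered reference point, which Converse Pareto guarantees.
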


\begin{remark} Nothing hinges on the particular assumptions about openness/closedness made in the definition of k-incomparability.\end{remark}

\begin{proof} We prove for the example of $k=4$ for ease of notation, though the proof generalizes to all $n$. By 4-incomparability, $(0,4)/(0,-4)$ is incomparable with every point in the open interval $((0,1),(0,-1))$, preferred to $(0,-1)$ and dispreferred to $(0,1)$. By 4-incomparability again $(4,0)/(-4,0)$ is incomparable with every point inside $((-1,0),(0,1))$, preferred to $(-1,0)$, and dispreferred to $(1,0)$. By Pareto, $(-1,-1)$ is strictly dispreferred to (-1,0) and (0,-1), so $(0,4)/(0,-4)$ and $(4,0)/(-4,0)$ are both strictly preferred to $(-1,-1)$. As above, two applications of Independence imply that $$(4,0)/(-4,0)/(0,4)/(0,-4)\succ (-1,-1).$$ 
So the improved $f^{++}= (4,0)/(-4,4)/(0,4)/(4,-4)$ is also strictly preferred to $(-1,-1)$. As before, note that $f^{++}$ is composed of two unidimensional lotteries: $(4,0)/(4,-4)$; and $(-4,4)/(0,4)$. By 4-incomparability, the first of these is incomparable with every point in the open interval $((4,-1.5),(4,-2.5))$, dispreferred to $(4,-1.5)$ and preferred to $(4,-1.5)$. By 4-incomparability again, the second is incomparable with every point in the open interval $((-1.5,4),(-2.5,4))$, preferred to $(-2.5,4)$ and dispreferred to $(-1.5,4)$. So again by two applications of Independence, we can derive that $f^{++} \prec (-1.5, 4) /(4,-1.5)$ (since each of these outcomes are preferred to each of the constituent unidimensional lotteries). So $$(-1.5,4)/(4,-1.5)\succ f^{++}=(4,0)/(-4,4)/(0,4)/(4,-4) \succ (-1,-1),$$
and thus $(-1.5,4)/(4,-1.5)\succ (-1,-1).$ By Converse Pareto, however $(-1,-1) \bowtie  (-1.5,4)$ and $(-1,-1) \bowtie  (4,-1.5)$, violating Negative Dominance.\end{proof}

The use of Converse Pareto is more demanding here than in earlier results, as I'll discuss below (Remark \ref{combining}).

\begin{remark}\label{combining}An appropriate combination of slopes in Lines and $k$ in a theory which satisfies k-incomparability is not obviously subject to the proof of this fact. For instance, it is not clear the argument can be given if $k=4$, $l=4$, and $m=\frac{1}{4}$. But, even if a consistency result could be given for these constraints, it would not yet be a general way around the results. The mere fact that the theory satisfies k-incomparability does not tell us how it handles a wide array of lotteries. The question of whether this combination can be extended to a more general theory of all lotteries is worthy of further investigation.\end{remark}

Say that the lottery $a/c$ \emph{spans} an outcome $b$ if and only if $a \succ b \succ c$. A plausible theory should predict that some outcomes are strictly preferred to some lotteries which span them (and strictly dispreferred to others). The previous fact makes it hard to see how a plausible systematic theory which makes this prediction could be consistent with Negative Dominance in the presence of Independence. This further builds the case that the conflict should be seen as between incompleteness Independence and Negative Dominance.   

\section{Negative Dominance without Independence}\label{newtheory}

I have suggested that the above limitative results point to a conflict between Independence on the one hand, and Negative Dominance on the other. As I discuss in subsection \ref{setsofutilities}, below, there are very well-known general theories of decision under uncertainty which uphold Independence, and thus must give up Negative Dominance. But there is not a similarly well-understood theory which upholds Negative Dominance and gives up Independence in the manner required by the impossibility results. Here, I develop a new theory which upholds Negative Dominance, while weakening Independence. The consistency of this theory is the second main result of the paper.

Our first goal is to show that, in the highly structured setting of $\mathbb R^2$, a strong theory including Negative Dominance is consistent.

The theory will include two new axioms. The first axiom requires a definition to state.

\begin{definition} A lottery is \emph{good} if and only if, for every $o, o'$ in the support of $f$, $o \parallel  o'$. \end{definition}

All unidimensional lotteries are good, but not all good lotteries are unidimensional. Visually, a unidimensional lottery's outcomes either all lie on a vertical line, or all lie on a horizontal line. There are many more good lotteries than this, however. For instance, if a lottery's outcomes all lie on a positive slope line---for instance, $g^*=(4,4)/(-2,-2)$---it is guaranteed to be good, although it is not unidimensional. Further lotteries are good too, without being unidimensional: for instance $(0,0)/(0,3)/(3,3)$. 

But of course not all lotteries are good; crucially the example from the introduction, $f^*=(4,-2)/(-2,4)$ is not good.

Our proposed axiom then is this:
\begin{description}
\item[Good Expectations] If $f$ is good then $f \sim exp(f)$.
\end{description}
This axiom only applies in a setting, like that of $\mathbb R^2$, where there is a well-defined notion of expectation for a lottery. We return to this point below.

Our next axiom is that lotteries which are stochastically dominant should be preferred. This assumption was implied by Independence but it is weaker than it. To introduce this formally we again need a definition.

Informally and intuitively, $f$ stochastically dominates $g$  if and only if, for every $o$, $f$ assigns at least as great a probability to outcomes weakly preferred to $o$ as $g$ does, and there is some outcome $o'$ for which $f$ assigns greater probability to outcomes weakly preferred to $o'$ than $g$ does. But \citet[\S 2.1]{russellfixing} shows that this definition leads to problems in a setting with incompleteness. To see this, consider first the lottery $f$ which assigns $\frac{2}{3}+\epsilon$ to $(2,2)$ and $\frac{1}{3}-\epsilon$ to $(0,0)$; and, second, the lottery $g=(2,0)/(0,2)/(2,2)$. Intuitively $f$ does not stochastically dominate $g$, because $f$ assigns $\frac{1}{3}-\epsilon$ to a worse outcome than any outcome of $g$. But the standard definition implies that $f$ \emph{does} stochastically dominate $g$, as the reader can easily verify.

To avoid this example, we must work instead with a more complex definition:

\begin{definition} \label{stochasticdominance}A \emph{generalized} lottery is a set $X \subset O \times [0,1]$. A generalized lottery $f^*$ is equivalent to a lottery $f$ iff for every outcome $o$ $\sum_{x \in {f^*}, \pi_1(x)=o} \pi_2(x)=f(o)$. 

A lottery $f$ \emph{stochastically dominates} a lottery $g$ in $\succeq$ iff there is a generalized lottery $f^*$ equivalent to $f$, a generalized lottery $g^*$ equivalent to $g$, and a bijection $f$ between them, such that for all $x \in f^*$ $\pi_2(x)=\pi_2(f(x))$ and $\pi_1(x) \succeq \pi_1 (f(x))$, and there is some $x \in f^*$ such that $\pi_1 (x) \succ \pi_1 (f(x))$. 
\end{definition}

Given this better definition, we can state our axiom:

\begin{description}
\item[Stochastic Dominance] If $f$ stochastically dominates $g$ in $\succeq$, then $f \succ g$.
\end{description}

Our central consistency result is then:

\begin{proposition}\label{consistency} Let $O=\mathbb R^2$, and assume Pareto and Converse Pareto. Good Expectations, Stochastic Dominance and Negative Dominance are consistent. \end{proposition}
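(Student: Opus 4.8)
The plan is to construct an explicit preorder $\succeq$ on $\Delta = \Delta(\mathbb{R}^2)$, extending the Pareto/Converse-Pareto order on outcomes, and verify the three axioms by hand. The natural candidate is to define $f \succeq g$ iff either $f$ stochastically dominates $g$ in the outcome order (in the sense of Definition \ref{stochasticdominance}), or $f$ and $g$ are both good and $\exp(f) \succeq_{\mathbb{R}^2} \exp(g)$ in the Pareto order, or $f \sim g$ holds as a forced consequence (e.g. $f$ good and $\exp(f)=\exp(g)$ with $g$ good), together with whatever is forced by taking the transitive closure. Actually, I would try to be more careful: let $R_0$ be the union of the "stochastic dominance" pairs and the "good, and Pareto-comparable expectations" pairs and reflexivity, and let $\succeq$ be the transitive closure of $R_0$. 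Reflexivity is immediate; transitivity holds by construction. The work is then: (i) check $\succeq$ restricted to $O$ is exactly the Pareto order (so Pareto and Converse Pareto hold); (ii) check Good Expectations; (iii) check Stochastic Dominance; (iv) check Negative Dominance.

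Steps (i)--(iii) should be relatively routine. For Good Expectations: if $f$ is good then $f \mathrel{R_0} \exp(f)$ and $\exp(f) \mathrel{R_0} f$ (treating $\exp(f)$, an outcome, as a good lottery with the same expectation), giving $f \sim \exp(f)$. For Stochastic Dominance: if $f$ stochastically dominates $g$ then $f \mathrel{R_0} g$ by definition, so $f \succeq g$; the strictness ($\neg\, g \succeq f$) is the part that needs the transitive-closure analysis — one must show no chain in $R_0$ runs from $g$ back up to $f$, which is where a potential function helps. For (i), I would exhibit an order-preserving "value" assignment: e.g. to each lottery attach the pair $\bigl(\text{some lower bound}, \text{some upper bound}\bigr)$ in $\mathbb{R}^2$ derived from its support, monotone along $R_0$; the cleanest is probably to show every $R_0$-step weakly increases $\exp$ coordinatewise on good lotteries and weakly increases (in the Pareto order) the minimum of the support — then argue these are enough to pin down strictness and to prevent collapse on outcomes.

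The main obstacle is \textbf{Negative Dominance} combined with verifying strictness is well-behaved, i.e. showing that the transitive closure does not manufacture a strict preference $f \succ g$ whose witnesses violate Negative Dominance. Concretely: suppose $f \succ g$, so there is a chain $f = h_0 \mathrel{R_0} h_1 \mathrel{R_0} \cdots \mathrel{R_0} h_n = g$ not reversible. I need some outcome in $\mathrm{supp}(f)$ strictly above some outcome in $\mathrm{supp}(g)$. The danger is exactly the bad example from the introduction — a good lottery like $g^* = (4,4)/(-2,-2)$ has $\exp = (1,1) \succ (0,0)$, which is fine because $(4,4) \succ (0,0)$; but I must make sure no chain produces something like $f^* = (4,-2)/(-2,4) \succ (0,0)$. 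Good Expectations only fires on good lotteries, and $f^*$ is not good, so that particular collapse is blocked; the real content is proving that the interaction of good-expectation steps and stochastic-dominance steps never has this effect in general. I expect to need a lemma: if $h \mathrel{R_0} h'$ then $\min_{\text{Pareto}} \mathrm{supp}(h')$ is weakly above $\min_{\text{Pareto}} \mathrm{supp}(h)$, or more precisely that for every $o' \in \mathrm{supp}(h')$ there is $o \in \mathrm{supp}(h)$ with $o \succeq o'$ — a kind of "one-sided Negative Dominance" preserved by each generator — and then chain it and upgrade one inequality to strict using irreversibility plus Converse Pareto. Nailing down that this one-sided domination is genuinely preserved by good-expectation steps (where the support changes completely) is the delicate point, and is where I would spend the most care.
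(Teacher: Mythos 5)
Your overall architecture matches the paper's: take $\succeq$ to be the minimal reflexive transitive relation generated by Pareto, Converse Pareto, Stochastic Dominance and Good Expectations, then show that chains of generator steps cannot manufacture a Negative Dominance violation. But the invariant you propose to carry along chains --- ``if $h \mathrel{R_0} h'$ then for every $o' \in \mathrm{supp}(h')$ there is $o \in \mathrm{supp}(h)$ with $o \succeq o'$'' --- is false for exactly the steps you flag as delicate, the good-expectation steps. Take $h$ to be the degenerate lottery at $(0,0)$ and $h' = (4,4)/(-4,-4)$: both are good, $\exp(h) = \exp(h') = (0,0)$, so $h \mathrel{R_0} h'$, yet no element of $\mathrm{supp}(h) = \{(0,0)\}$ weakly dominates $(4,4) \in \mathrm{supp}(h')$. (Your ``min of the support'' variant has the same problem in the other direction, and for non-good lotteries a Pareto-minimum of the support need not even exist.) The weaker invariant that is true --- there exist \emph{some} $o \in \mathrm{supp}(h)$, $o' \in \mathrm{supp}(h')$ with $o \succeq o'$ --- does not chain by itself, because the witnesses chosen in the middle lottery of a chain need not be comparable to one another.

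What closes the gap in the paper is a small structural lemma your proposal lacks: if $f$ is good, its finite, totally ordered support has a best element $o^+$ and a worst element $o^-$ with $o^+ \succeq \exp(f) \succeq o^-$ (Lemma \ref{comparabilitylemma}; immediate from Pareto and Converse Pareto, since $\exp(f)$ is coordinatewise sandwiched between $o^+$ and $o^-$). This lets one chain through \emph{expectations} rather than through supports: in a chain where $f^+$ stochastically dominates $f$, $f$ and $f^-$ are good, and $\exp(f) \succeq \exp(f^-)$, one gets $o^+ \succeq \exp(f) \succeq \exp(f^-) \succeq o^-$ with $o^+ \in \mathrm{supp}(f)$ and $o^- \in \mathrm{supp}(f^-)$, and the stochastic-dominance step (which, unlike the good-expectation step, genuinely dominates support-by-support via the bijection in Definition \ref{stochasticdominance}) lifts $o^+$ to an element of $\mathrm{supp}(f^+)$ weakly above $o^-$. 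So your plan has the right shape --- closure order plus a support-witness argument --- but as written its key lemma fails at the good-expectation generator, and the expectation-sandwich lemma is the missing idea that repairs it.
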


The core of the proof of the proposition turns on the following Lemma:

\begin{lemma}\label{comparabilitylemma} Let $O=\mathbb R^2$, and assume Pareto and Converse Pareto. If $f$ is good, then there are $o^+$ and $o^-$ in the support of $f$ such that $o^+ \succeq exp(f) \succeq o^-$.\end{lemma}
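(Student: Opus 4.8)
The plan is to use Pareto and Converse Pareto to identify $\succeq$ on $O=\mathbb R^2$ with the coordinatewise order, observe that the support of a good lottery is then a finite chain in that order, and take $o^+$ and $o^-$ to be the top and bottom elements of that chain.

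First, Pareto and Converse Pareto together say precisely that $(x,y)\succeq(x',y')$ if and only if $x\ge x'$ and $y\ge y'$; equivalently, two outcomes are comparable ($\parallel$) exactly when one weakly dominates the other in both coordinates. Hence if $f$ is good, its finite support $\{o_1,\dots,o_n\}$ is totally ordered by $\succeq$, so after relabeling we may assume $o_1\preceq o_2\preceq\cdots\preceq o_n$, which means $\pi_1(o_1)\le\cdots\le\pi_1(o_n)$ and $\pi_2(o_1)\le\cdots\le\pi_2(o_n)$.

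Now set $o^+:=o_n$ and $o^-:=o_1$, and write $p_i:=f(o_i)>0$, so $\sum_i p_i=1$ and $exp(f)=\bigl(\sum_i p_i\pi_1(o_i),\,\sum_i p_i\pi_2(o_i)\bigr)$. Since $\pi_1(o_n)\ge\pi_1(o_i)$ for every $i$, we get $\pi_1(o_n)=\sum_i p_i\pi_1(o_n)\ge\sum_i p_i\pi_1(o_i)=\pi_1(exp(f))$, and likewise for $\pi_2$; by Pareto this gives $o^+\succeq exp(f)$. Symmetrically, for $j\in\{1,2\}$ we have $\pi_j(exp(f))=\sum_i p_i\pi_j(o_i)\ge\sum_i p_i\pi_j(o_1)=\pi_j(o_1)$, so by Pareto $exp(f)\succeq o^-$, completing the argument.

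There is essentially no obstacle here beyond making the reduction to the product order explicit. The only point requiring a moment's care is that a good lottery need not be unidimensional — its support can run along any coordinatewise-increasing configuration — but this is exactly what makes the chain argument work, and finiteness of the support guarantees that the maximal element $o^+$ and minimal element $o^-$ exist and lie in the support.
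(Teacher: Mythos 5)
Your proof is correct and follows essentially the same route as the paper's: goodness plus Pareto/Converse Pareto makes the finite support a chain in the coordinatewise order, and taking the top and bottom elements of the chain bounds each coordinate of $exp(f)$, so Pareto delivers $o^+ \succeq exp(f) \succeq o^-$. Your version just spells out the convex-combination inequalities more explicitly than the paper does.
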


\noindent Not every outcome in the support of a good lottery is comparable to its expectation. For instance, $(0,0)/(0,3)/(3,3)$ is good, but its expectation, $(1,1)$, is not comparable to $(0,3)$. But the proposition shows that, in general, there are always outcomes which lie on either side of the expectation (as $(0,0)$ and $(3,3)$ do here).

\begin{proof} If $f$ is good, its outcomes are totally ordered; let $o^+$ be its best outcome, and $o^-$ its worst. By Pareto and Converse Pareto, for all $o$ in the support of $f$ and $i \in \{1,2\}$ $\pi_i (o^+) \geq pi_i (o) \geq \pi_i (o^-)$. So for $i \in \{1,2\}$, $\pi_i (o^+) \geq pi_i (exp(f)) \geq \pi_i (o^-)$, and hence by Pareto, $o^+ \succeq exp(f) \succeq o^-$, as desired.\end{proof}

Using this Lemma, we now prove proposition \ref{consistency}:

\begin{proof} Let $\succeq$ be the minimal transitive reflexive relation on $\Delta(O)$ satisfying Pareto, Converse Pareto, Stochastic Dominance, and Good Expectations. 

We show that Good Expectations and Stochastic Dominance do not together force violations of Negative Dominance (on their own it is clear that they can't). Suppose $f$ is good and $f^+$ stochastically dominates $f$. If $f$ stochastically dominates $f^-$, then trivially there are $o, o'$ in the support of $f^+$ and $f^-$ respectively, so that $o \succeq o'$. If $f^-$ is good, and $exp(f) \succeq exp(f^-)$; we need to show again that in this case there are $o, o'$ in the support of $f^+$ and $f^-$ respectively, so that $o \succeq o'$. By lemma \ref{comparabilitylemma}, we have that there is $o^+$ in the support of $f$ such that $o^+\succeq exp(f)$, and $o^-$ in the support of $f^-$ so that $exp(f^-) \succeq o^-$. Since $f \succeq exp(f^-)$, we have that $o^+\succeq o^-$. Since $f^+$ stochastically dominates $f$, it assigns probability at least $f(o^+)$ to outcomes $o_1, \dots, o_n$ which are such that $o_i \succeq o^+ \succeq o^-$. Any of these outcomes suffices to ensure that $f^+ \succ f^-$ does not violate Negative Dominance. The same holds for $\prec$ as for $\succ$. \end{proof}

This consistency result shows the possibility of a strong theory, but we might hope for more. First, even in this particular setting, we might hope for a simple functional form for a utility function (or set of such functions) given preferences of this kind. I have been unable to find one. Second, we might hope to state the axioms on preferences in a way that does not depend on the structure of the space (i.e. on the existence of an expectation, as in Good Expectations). Third, we might hope to have a more unified conceptual framework that entails all three of Stochastic Dominance, Good Expectations, and Negative Dominance.

We can make some progress toward the second of these by considering the following axioms.

\begin{description}
\item[Continuity] For all $f, g, h$ the sets $\{ \alpha | f \preceq g \alpha h\}$ and $\{ \alpha | f \succeq g \alpha h\}$ are closed.
\item[Comparable Independence] For any $f, g, h$ if $o\parallel  o'$ for every $o, o'$ in the combined support of $f, g $, and $h$, then for any $\alpha$, $f \succeq g$ if and only if $f \alpha h \succeq g \alpha h$.
\end{description}

\begin{corollary} Pareto, Converse Pareto, Negative Dominance, Continuity, Comparable Independence, and Stochastic Dominance are consistent.\end{corollary}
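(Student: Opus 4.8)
The plan is to exhibit a single preorder on $\Delta(O)$ satisfying all six axioms, building directly on Proposition~\ref{consistency}. The first point is that Comparable Independence costs almost nothing: it already follows from Good Expectations together with Pareto and Converse Pareto. Indeed, if the combined support of $f,g,h$ is totally ordered by $\succeq$, then each of $f$, $g$, $h$, $f\alpha h$ and $g\alpha h$ is good, so by Good Expectations each is indifferent to its expectation; since $exp$ is linear and, by Pareto together with Converse Pareto, $o\succeq o'$ for outcomes $o,o'\in\mathbb R^2$ iff $o\ge o'$ coordinatewise, we get $f\succeq g\iff exp(f)\ge exp(g)$, and likewise $f\alpha h\succeq g\alpha h\iff exp(f\alpha h)\ge exp(g\alpha h)\iff\alpha\, exp(f)\ge\alpha\, exp(g)\iff exp(f)\ge exp(g)$ for $\alpha\in(0,1)$. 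Hence Comparable Independence holds for the relation constructed in Proposition~\ref{consistency}, and persists under any extension of it that still satisfies Pareto, Converse Pareto and Good Expectations. So the real task is to upgrade that relation to one that also satisfies Continuity.

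To that end I would start from $\succeq_0$, the minimal transitive reflexive relation with Pareto, Converse Pareto, Stochastic Dominance and Good Expectations used in the proof of Proposition~\ref{consistency}, and pass to its \emph{continuous preorder closure} $\succeq$: alternately close the graph under limits along segments $\alpha\mapsto g\alpha h$ (if $f\preceq g\alpha_n h$ and $\alpha_n\to\alpha^{*}$ then $f\preceq g\alpha^{*}h$, and symmetrically for $\succeq$) and under reflexivity and transitivity, iterating to a fixed point. Then $\succeq$ is a preorder, extends $\succeq_0$, and satisfies Continuity; Pareto and Good Expectations persist because they only assert membership of certain pairs; and Comparable Independence persists by the previous paragraph. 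What remains is to check that enlarging to $\succeq$ has not (a) introduced an outcome pair forbidden by Converse Pareto, (b) collapsed a strict stochastic-dominance preference into an indifference, or (c) created a strict preference violating Negative Dominance.

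All three are ``absence'' conditions threatened by the limit closure, and I expect (c) to be the main obstacle. The saving structure is that along any approximating sequence only probabilities move: the outcomes in play are drawn from a fixed finite set and the outcome-level relation $\succeq$ on $\mathbb R^2$ does not change, so by pigeonhole one can pass to a subsequence on which the combinatorial type of the witnessing transitive chains and stochastic-dominance matchings is constant. Then for (a), a new outcome comparison $h'\preceq h$ arising from $h'\preceq g\alpha_n h$ with $\alpha_n\to0$ inherits in the limit a Pareto/Converse-Pareto-consistent witness, forcing $h'\le h$ coordinatewise; for (c), a new strict preference $f\succ g'$ is witnessed by a single pair $o\in\operatorname{supp}(f)$, $o'\in\operatorname{supp}(g')$ with $o\succ o'$ that is the same for all $n$ on the subsequence. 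The delicate case is when $o$'s probability vanishes in the limit, so $o$ leaves $\operatorname{supp}(f)$: here, exactly as in the degeneracy analysis in the proof of Proposition~\ref{consistency} (via Lemma~\ref{comparabilitylemma}: a good lottery whose expectation is an extreme point of its support is degenerate), the vanishing forces the limiting comparison down to a direct Pareto or stochastic-dominance comparison whose strict witness survives, and (b) is handled the same way. I expect essentially the whole proof to be this bookkeeping. Equivalently one could try to show $\succeq_0$ is \emph{already} continuous, so that no enlargement and hence no re-verification is needed, but that requires the same pigeonhole-over-segments argument, so either route does the job.
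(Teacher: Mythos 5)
Your opening observation about Comparable Independence is sound, and it is in the spirit of the paper's own proof, which is much shorter than your plan: the paper does not modify the order at all, but takes the very same minimal preorder $\succeq$ constructed in the proof of proposition \ref{consistency} and asserts (as a direct verification) that it already satisfies Continuity and Comparable Independence. Where you diverge---passing to an iterated ``continuous preorder closure'' of $\succeq_0$ and then re-verifying Converse Pareto, Negative Dominance, and the strictness of Stochastic Dominance for the enlarged relation---you have not actually supplied the step that carries all the weight, and the sketch you give of it would not go through as stated.

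Concretely: membership of a pair in $\succeq_0$ is witnessed by a finite chain of Pareto, Good Expectations, Stochastic Dominance, and transitivity steps, and such chains are \emph{not} confined to outcomes drawn from the supports of $f$, $g$, $h$ --- Good Expectations introduces expectations as new outcomes, and transitivity may route through arbitrary intermediate lotteries of unbounded number --- so your pigeonhole premise (``the outcomes in play are drawn from a fixed finite set,'' hence a subsequence with constant ``combinatorial type'' of witnessing chains) is unjustified; there is no finite list of types to stabilize. Interleaving the limit-closure with transitive closure to a fixed point makes this worse, since after one round new comparisons are witnessed by chains mixing limit-derived pairs with generator pairs, which your sketch does not control. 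Moreover, the authority you invoke for the ``delicate case'' does not exist in the paper: there is no ``degeneracy analysis'' in the proof of proposition \ref{consistency}, and Lemma \ref{comparabilitylemma} does not say that a good lottery whose expectation is an extreme point of its support is degenerate --- it says only that a good lottery has support outcomes weakly above and weakly below its expectation. So precisely the case you flag as delicate (the witnessing outcome's probability vanishing in the limit) is where your argument has no content. Finally, as you yourself concede, the alternative of showing $\succeq_0$ is already continuous --- which is the paper's route --- requires the same analysis you have deferred to ``bookkeeping''; since that analysis is what is missing, the proposal does not yet establish the corollary by either route.
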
 

It is easy to verify that the order considered in the proof of proposition \ref{consistency} satisfies these assumptions.

This is some progress toward our goal of giving a strong theory which does not depend on the structure of our space of outcomes. But the theory is not as strong as one might have wished. We might have hoped that, if we add Unidimensional Expectations to the assumptions of the corollary, we would be able to \emph{derive} Good Expectations. Unfortunately I have not been able to resolve this question, and I conjecture that it does not hold.

\begin{conjecture} \label{badconjecture} Let $O=\mathbb R ^2$. Pareto, Converse Pareto, Negative Dominance, Continuity, Comparable Independence, Stochastic Dominance, and Unidimensional Expectations do not imply Good Expectations. \end{conjecture}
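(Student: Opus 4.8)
I would aim to establish the conjecture by exhibiting a model: a preorder on $\Delta(\mathbb R^2)$ satisfying all seven listed axioms but not Good Expectations. The natural place to break Good Expectations is the diagonal $D=\{(t,t):t\in\mathbb R\}$, since a lottery such as $(0,0)/(2,2)$ is good but \emph{not} unidimensional, so Unidimensional Expectations imposes nothing on it, and there is no evident reason the remaining axioms should force $(0,0)/(2,2)\sim exp\bigl((0,0)/(2,2)\bigr)=(1,1)$. The target is therefore a preorder in which $(0,0)/(2,2)\not\sim(1,1)$ -- either because these stay incomparable, or, for a more robust counterexample, because $(0,0)/(2,2)\sim(c,c)$ for some $c\neq1$ obtained from a strictly increasing, continuous, non-affine $v\colon\mathbb R\to\mathbb R$ via $v(c)=E_{(0,0)/(2,2)}[v]$ (and similarly for every diagonal lottery), which still makes all comparisons among diagonal lotteries obey an expected-utility logic.

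Concretely, I would let $\succeq$ be the smallest transitive, reflexive relation on $\Delta(\mathbb R^2)$ generated by Pareto, closure under Stochastic Dominance, Unidimensional Expectations, and (optionally) the diagonal $v$-valuation just described -- exactly the recipe behind proposition \ref{consistency} and its corollary, but with Good Expectations replaced by the strictly weaker Unidimensional Expectations. Pareto, Stochastic Dominance and Unidimensional Expectations then hold by construction. Converse Pareto is the claim that this closure creates no comparison between two outcomes beyond the coordinatewise order; Continuity should follow from the continuity of $v$ and $v^{-1}$ as in the linear case; and Negative Dominance should follow from an analogue of lemma \ref{comparabilitylemma}, which still applies verbatim to unidimensional (hence good) lotteries, while for good lotteries that are not unidimensional no new indifference has been imposed, so any strict preference involving them traces back, via Stochastic Dominance and Pareto, to a strict preference between an outcome of one and an outcome of the other. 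Each of these parallels a verification already carried out, or asserted, for the order of proposition \ref{consistency}.

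The real work -- and, I suspect, the reason the statement is left as a conjecture -- is showing that the resulting relation still satisfies Comparable Independence while $(0,0)/(2,2)\not\sim(1,1)$. Comparable Independence is a biconditional, so it is not automatically inherited by a ``minimal'' relation, and its cancellation direction combined with transitivity could in principle propagate the expectational valuation of unidimensional lotteries out to all good lotteries, which would make the conjecture false. The consideration in favour of the conjecture is that Comparable Independence has force only on a clique of pairwise-comparable outcomes, and one checks that no clique contains both the support of a non-trivial diagonal lottery and enough unidimensional data to pin its certainty-equivalent down to the linear expectation: a unidimensional lottery supported within $\{(0,0),(2,2)\}$ is a single point, and the off-diagonal outcomes (such as $(1,2)$) that might otherwise mediate the identification $(0,0)/(2,2)\sim(1,1)$ are incomparable to the relevant diagonal points. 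Turning this ``no clique forces the expectation'' observation into a proof -- most plausibly by giving an explicit normal form for the generated relation, reading off its restriction to diagonal (and other good) lotteries, and then verifying Comparable Independence directly on that normal form -- is the step I expect to be genuinely hard.
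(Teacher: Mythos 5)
There is no proof of this statement in the paper to compare against: it is stated as an open conjecture, and the author explicitly says they have been unable to resolve it. Your proposal does not close it either. What you give is a sensible strategy --- take the smallest transitive reflexive relation generated by Pareto, Stochastic Dominance, Unidimensional Expectations (and perhaps a non-affine diagonal valuation $v$), and argue that nothing forces $(0,0)/(2,2)\sim(1,1)$ --- but you yourself flag that the decisive step, verifying Comparable Independence (in particular its cancellation direction, which is not inherited by minimal relations and which, combined with transitivity, is exactly the mechanism that might propagate the expectational valuation from unidimensional to all good lotteries) while keeping $(0,0)/(2,2)\not\sim(1,1)$, is left undone. That step is not a routine verification to be deferred; it \emph{is} the content of the conjecture, so the proposal as it stands is a research plan rather than a proof.

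Two further points are glossed over in the ``by construction'' portion. First, Continuity is not automatically satisfied by a minimal generated relation: the relation is built by finitely many applications of the generators plus transitivity, and the sets $\{\alpha \mid f \preceq g\alpha h\}$ obtained this way need not be closed; some explicit description (your ``normal form'') of the generated order is needed before Continuity can even be checked, and the appeal to continuity of $v$ does not by itself supply it. Second, Converse Pareto and Negative Dominance for the modified order also require that normal form: the paper's Lemma~\ref{comparabilitylemma} and the argument in Proposition~\ref{consistency} were carried out for an order that \emph{includes} Good Expectations, so the claim that the same verifications ``apply verbatim'' after replacing Good Expectations by Unidimensional Expectations and adding the $v$-valuation is plausible but unargued. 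So the gap is concrete: without an explicit characterization of the generated relation and a direct check of Comparable Independence and Continuity on it, the construction does not establish the conjecture --- which is precisely why the paper leaves it open.
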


Thus, we are left with at least two important open questions: 
\begin{openq}
Are there natural axioms on preferences which do not make reference to the structure of the space of outcomes, but such that, together with Unidimensional Expectations imply Good Expectations? 
\end{openq}

\begin{openq} Is there a unified conceptual foundation for the package of Negative Dominance, Continuity, and Comparable Independence?
\end{openq}

Earlier, it was noted that the failures of Independence discussed here are quite different from the failures familiar from models of risk aversion following on the observations of Allais. The theory used in the proof of proposition \ref{consistency} allows striking failures of Independence. It says both that $[(-a,0)/(a,0)]/[(0,a)/(0,-a)]\not \sim (0,0)$ (even though each of its `halves' $\sim (0,0)$), and that $[(-a,a)/(0,a)]/[(a,0)/(a,-a)] \not \sim (-a/2,a)/(a,-a/2)$, again even though each of its `halves' $\sim$ the relevant outcomes). A typical explanation of failures of Independence inspired by Allais's example is that mixing via Independence can change the `security-value' of a lottery, and preferences are in part sensitive to this security value. I don't see how this explanation can be given in this case. So the failures of Independence predicted seem to be of a new kind, which requires a new justification.

There is also a third open question related to the failures of Independence. At least one of these two failures is required if the theory is to escape the limitative result of proposition \ref{main}. But a better theory would predict only one of the two failures. The first of the above failures of Independence can be ruled out by assuming
\begin{description}
\item[Indifferent Independence] If $f \sim g$ and $g \sim h$, then $f \alpha h \sim g \alpha h$.\footnote{This is inspired by the ``Betweenness'' of \citet{bottomleywilliamson}, a different condition than the axiom which standardly goes by that name. It is stronger than the usual ``linearity'' assumption, which says that if $f \sim g$, then $f \alpha g \sim f$.}
\end{description}
The third open question then is:
\begin{openq}
Is Indifferent Independence consistent with Good Expectations, Negative Dominance, and Stochastic Dominance?
\end{openq}

\subsection{Comparison to Sets of Utilities}\label{setsofutilities}

The focus of much of the literature on incomplete preferences has been the idea that such orders can be represented by sets of utility functions. Recently attention has focused on representation with sets of utilities, where uncertain options are evaluated by their expected utility (\citet{seidenfeld1995representation,shapleybaucells,dubra2004expected,nau2006shape,evren2008existence,evren2011multiutility,ok2012incomplete,galaabaatar2012expected,galaabaatar2013subjective,riella2015representation,gorno2017strict,hara2019coalitional,mccarthy2021expected,ok2023lipschitz,borie2023expected}). This framework assumes Independence. It thus offers us a ready-made option for those who wish to reject Negative Dominance in response to our earlier limitative results. 

Given the extensive work on this idea, it may seem the conservative choice in response to our results. It is interesting to ask how this theory compares in terms of strength to the one proven consistent above. In this section I take some steps toward answering that question.

Here I will understand the ``sets of utilities'' approach as follows. A function $u: \Delta(O) \to \mathbb R$ is \emph{mixture-preserving} if and only if for all $\alpha \in [0,1]$, $u (f \alpha g)=\alpha u (f) + (1-\alpha)u(g)$.
\begin{description}
\item[Sets of Utilities] There is a nonempty set $U$ of mixture-preserving functions from $\Delta(O)$ to $\mathbb R$, such that for all $f, g \in \Delta$, $f \succeq g \leftrightarrow u(f) \geq u(g)$ for all $u \in U$.
\end{description}

Independence follows immediately from the fact that all of the functions are mixture-preserving. We can thus see Sets of Utilities as a competitor to the theory of the previous section, which upholds Negative Dominance.

Sets of Utilities is consistent with Expectationalism and hence with Unidimensional Expectations:

\begin{fact} Let $O=\mathbb R^2$. Expectationalism, Pareto, and Converse Pareto, imply Sets of Utilities. \end{fact}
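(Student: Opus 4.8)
The plan is to collapse every lottery to its coordinatewise expectation and then exhibit the two coordinate-expectation functionals as the required set of utilities.

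First I would note that Pareto and Converse Pareto together say exactly that, on outcomes, $(x,y)\succeq (x',y')$ iff $x\geq x'$ and $y\geq y'$; i.e. $\succeq$ restricted to $O=\mathbb R^2$ is the product order. Then, using Expectationalism together with transitivity of the preorder, I would show that for any $f,g\in\Delta$ we have $f\succeq g$ iff $exp(f)\succeq exp(g)$: the chain $exp(f)\succeq f\succeq g\succeq exp(g)$ gives one direction and $f\succeq exp(f)\succeq exp(g)\succeq g$ the other, each step justified by $f\sim exp(f)$, $g\sim exp(g)$, or the supposed preference. Combining the two observations, $f\succeq g$ holds iff $\pi_1(exp(f))\geq \pi_1(exp(g))$ and $\pi_2(exp(f))\geq \pi_2(exp(g))$.

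Finally I would define $u_i\colon \Delta(O)\to\mathbb R$ by $u_i(f)=\pi_i(exp(f))=\sum_{o}f(o)\,\pi_i(o)$ for $i\in\{1,2\}$ and take $U=\{u_1,u_2\}$. Since $exp(f\alpha g)=\alpha\,exp(f)+(1-\alpha)\,exp(g)$ coordinatewise, each $u_i$ is mixture-preserving, and $U$ is nonempty; the displayed equivalence above is precisely $f\succeq g \iff u(f)\geq u(g)$ for all $u\in U$, which is Sets of Utilities.

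There is no real obstacle here beyond bookkeeping. The only points that need care are (i) that Expectationalism is used in its full two-sided ($\sim$) form, so that the reduction to expectations is an equivalence rather than merely one implication, and (ii) that $\pi_i\circ exp$ is genuinely mixture-preserving on all of $\Delta(O)$, which is immediate from linearity of the expectation operator. No continuity or domain-richness hypotheses enter.
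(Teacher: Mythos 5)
Your proof is correct and follows essentially the same route as the paper's: use Expectationalism to reduce every lottery to its coordinatewise expectation and then represent the Pareto/Converse-Pareto product order on $\mathbb{R}^2$ by mixture-preserving linear functionals. The only difference is cosmetic --- the paper witnesses Sets of Utilities with all functionals of the form $u((x,y))=ax+by+k$ (with nonnegative weights), while you use the minimal pair $u_i(f)=\pi_i(exp(f))$, which yields the same representation.
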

 
\begin{proof} When outcomes are ordered by Pareto and Converse Pareto, $f \sim exp(f)$ implies Sets of Utilities using all $u$ such that $u((x,y))=ax+by+k$. \end{proof}

In fact, Sets of Utilities and Unidimensional Expectations imply Expectationalism against this background:

\begin{proposition}\label{Rprop} Let $O=\mathbb R^2$. Pareto, Converse Pareto, Sets of Utilities and Unidimensional Expectations imply Expectationalism. \end{proposition}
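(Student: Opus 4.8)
The plan is to show that, under Sets of Utilities, Unidimensional Expectations forces every $u \in U$ to be expectational on $\mathbb R^2$, i.e.\ of the form $u((x,y)) = a_u x + b_u y + k_u$ on outcomes; once this is known, $f \sim exp(f)$ for \emph{all} $f$ follows immediately, since a mixture-preserving $u$ satisfies $u(f) = \sum_o f(o) u(o) = a_u \pi_1(exp(f)) + b_u \pi_2(exp(f)) + k_u = u(exp(f))$, so $u(f) \ge u(g) \leftrightarrow u(exp(f)) \ge u(exp(g))$ for every $u$, which is exactly Expectationalism. So the whole content is the claim: \emph{each $u \in U$, restricted to the copy of $\mathbb R^2$ sitting inside $\Delta$ as characteristic functions, is an affine function of the coordinates.}

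First I would fix $u \in U$ and define $\phi(x,y) := u((x,y))$, the restriction of $u$ to outcomes. The key leverage is mixture-preservation applied to unidimensional lotteries together with Unidimensional Expectations. For fixed $y$, consider outcomes $(x_1,y)$ and $(x_2,y)$ and the unidimensional lottery $f = (x_1,y)\,\alpha\,(x_2,y)$. Unidimensional Expectations gives $f \sim exp(f) = (\alpha x_1 + (1-\alpha)x_2,\ y)$, and since Sets of Utilities represents $\succeq$ and $\sim$ is mutual weak preference, $u(f) = u(exp(f))$ for \emph{every} $u \in U$. But mixture-preservation gives $u(f) = \alpha\,\phi(x_1,y) + (1-\alpha)\,\phi(x_2,y)$. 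Hence $\alpha\,\phi(x_1,y) + (1-\alpha)\,\phi(x_2,y) = \phi(\alpha x_1 + (1-\alpha)x_2,\ y)$ for all $\alpha \in [0,1]$, $x_1, x_2$, $y$. That is, $x \mapsto \phi(x,y)$ is affine (midpoint-affine plus the values it takes force linearity without any continuity assumption, since the identity holds for \emph{all} $\alpha$): $\phi(x,y) = a(y)\,x + c(y)$ for functions $a, c : \mathbb R \to \mathbb R$. By the symmetric argument with horizontal lotteries, $y \mapsto \phi(x,y)$ is affine for each fixed $x$: $\phi(x,y) = b(x)\,y + d(x)$.

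Now I would combine the two representations. From $a(y) x + c(y) = b(x) y + d(x)$ for all $x,y$: setting $x = 0$ gives $c(y) = b(0) y + d(0)$, so $c$ is affine; setting $y = 0$ gives $d(x) = a(0) x + c(0)$, so $d$ is affine. Substituting back, $a(y) x + b(0) y + d(0) = b(x) y + a(0) x + c(0)$; writing $a(y) = a(0) + [a(y) - a(0)]$ and rearranging, $[a(y) - a(0)]\,x = [b(x) - b(0)]\,y + (\text{constant})$, and evaluating at $x = 0$ forces the constant to vanish, so $[a(y) - a(0)]\,x = [b(x) - b(0)]\,y$ for all $x, y$. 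Dividing (for $x, y \ne 0$) shows $[a(y)-a(0)]/y = [b(x)-b(0)]/x$ is a constant, call it $m$; so $a(y) = a(0) + m y$ and $b(x) = b(0) + m x$, and then $\phi(x,y) = a(0) x + m xy + (\text{affine in }x,y)$. Wait --- this admits a stray $m xy$ term, so I must kill it: plug $\phi(x,y) = a(0) x + m x y + b(0) y + d(0)$ back into the vertical-affinity identity $\phi(x,y) = a(y) x + c(y)$, which we already have, consistently; the $xy$ term is \emph{not} excluded by the two separate affinity conditions alone. To eliminate it, I would use Unidimensional Expectations one more time with a genuinely two-step construction, or more cleanly, note that a bilinear cross term is already ruled out because affinity of $x \mapsto \phi(x,y)$ with slope $a(y)$ and of $y \mapsto \phi(x,y)$ with slope $b(x)$ together with Pareto (monotonicity) forcing $a(y) \ge 0$, $b(x) \ge 0$ everywhere and bounded behavior is \emph{not} quite enough either.

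\textbf{The main obstacle} is exactly this: separate affinity along axis-parallel directions gives a function of the form $\phi(x,y) = a_0 x + b_0 y + m xy + k$, and I need to rule out $m \ne 0$. The honest fix is that Unidimensional Expectations only constrains axis-parallel lotteries, so I should instead argue at the level of the whole \emph{preference relation}: I claim $m = 0$ for every $u \in U$ \emph{simultaneously} is forced by consistency with Pareto and Converse Pareto. Concretely, if some $u$ had $m > 0$, then along the anti-diagonal $u((t, -t)) = a_0 t - b_0 t - m t^2 + k$ is eventually decreasing in $|t|$ on one side, but this by itself is fine since $u$ need not be monotone along non-Pareto directions. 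The cleanest route is probably: apply mixture-preservation to the \emph{two-dimensional} lottery $(x_1,y_1)\,\tfrac12\,(x_2,y_2) \sim$ ? --- but we have no axiom giving its certainty equivalent. So the genuinely available move is to show that for the \emph{set} $U$ to represent a relation extending Pareto/Converse Pareto while each member is axis-affine, we may WLOG replace $U$ by $\{(x,y) \mapsto a x + b y : (a,b) \in C\}$ for a suitable cone $C$ without the $xy$ terms, because the $xy$ terms, being bounded-variation-free bilinear forms, never affect any comparison that Pareto/Converse Pareto plus Unidimensional Expectations can witness --- and then redo the representation. I expect the paper's actual proof instead derives Expectationalism more directly: it suffices to show $f \sim exp(f)$ for arbitrary $f$, and for \emph{this} one can write any $f$ as an iterated mixture of unidimensional lotteries (decompose $f$ by first mixing outcomes sharing a common first coordinate into their horizontal certainty-equivalents, then mixing those along the vertical), applying Unidimensional Expectations and Independence (which holds under Sets of Utilities) at each stage. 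That decomposition argument, not the utility-form argument, is what I would ultimately write, and the only delicate point there is checking the intermediate certainty-equivalents land in $\mathbb R^2$ so that the next unidimensional step applies --- which Pareto and Converse Pareto guarantee.
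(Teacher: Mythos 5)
Your first half coincides with the paper's proof: fix $u \in U$, use Unidimensional Expectations plus the fact that $\sim$ forces equality of utilities under Sets of Utilities, and mixture preservation, to conclude that $u$ is affine along every horizontal and every vertical line, i.e.\ $u((x,y)) = a(y)x + c(y) = b(x)y + d(x)$, which as you correctly note only pins down the form $a_0x + b_0y + mxy + k$. The genuine gap is that you never close the cross-term issue, and the step you dismiss is exactly the one that closes it. The paper's argument is: by Pareto, $(x,d') \succeq (x,d)$ for all $x$ whenever $d' \ge d$, and Sets of Utilities then forces \emph{every individual} $u \in U$ to satisfy $u((x,d')) \ge u((x,d))$ for \emph{all} $x \in \mathbb R$. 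So for $d' > d$ one affine function of $x$ dominates another on all of $\mathbb R$, which forces their slopes to be equal: $a(d') = a(d)$. (You only extracted the pointwise sign information $a(y) \ge 0$, $b(x) \ge 0$ from Pareto and concluded it was ``not quite enough''; the global domination of whole affine functions across parallel lines is what does the work, and it immediately gives $m = 0$.) With $a(y) \equiv a$ and $b(x) \equiv b$ constant, matching constants along the axes yields $u((x,y)) = ax + by + k$, whence $u(f) = u(exp(f))$ for every mixture-preserving $u$ and $f \sim exp(f)$ follows, as in your opening paragraph.

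Your proposed fallback -- decomposing an arbitrary $f$ into vertical conditional lotteries, collapsing each to its certainty equivalent via Unidimensional Expectations, and iterating with Independence -- does not work, and the ``delicate point'' is not the one you flag. The intermediate certainty equivalents are coordinate-wise expectations and trivially lie in $\mathbb R^2$; the problem is that after the first collapse the resulting lottery is supported on points $(x_i, \bar y_{x_i})$ with distinct first \emph{and} second coordinates, so it is not unidimensional and Unidimensional Expectations cannot be applied again, while Independence alone supplies no certainty equivalent for it. Indeed the paper explicitly conjectures that Pareto, Converse Pareto, Independence, Continuity, and Unidimensional Expectations do \emph{not} imply Expectationalism, so any route that uses only Independence rather than the representation of each $u \in U$ is exactly the kind of argument the result is claimed not to support; the strength of Sets of Utilities over Independence is the whole point of the proposition. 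The WLOG suggestion that the $xy$ terms ``never affect any comparison'' is likewise unjustified and is backwards: the task is to show no member of $U$ can carry a cross term, which is what the equal-slopes argument delivers.
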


\begin{proof} By Pareto, Unidimensional Expectations and Sets of Utilities, any $u\in U$ must have for any $c$, $d$, $u((x,d))=a_{d} x +k_{y=d}$ and $u((c, y))=b_c y + k_{x=c}$ where $a_d$ and $b_c$ are positive. We first show that for all $d$, $d'$, $a_d = a_{d'}$ (and similarly for $c$, $c'$, $b_c$ and $b_{c'}$). Suppose WLOG that $d'>d$. Then for all $x$ we must have $u((x,d')\geq u((x,d))$, that is $a_{d'} x +k_{y=d'}>a_d x + k_{y=d}$, that is, that $a_{d'} x> a_d x + (k_{y=d'} -k_{y=d'})$. To simplify notation we write $(k_{y=d'} -k_{y=d})$ as $\kappa_{d,d'}$. We show that this inequality requires that $a_{d'}=a_{d}$. (Recall that both must be positive.) If $a_{d'}>a_{d}$ we can choose a sufficiently small (negative) $x$ so that $a_{d'}x<a_d x+\kappa_{d,d'}$, contradicting Pareto and Sets of Utilities. And if $a_d>a_{d'}$, we can choose sufficiently large $x$ so that again $x<a_d x+\kappa_{d,d'}$, contradicting Pareto and Sets of Utilities. A similar argument holds for $b_c$. So we have that any utility function must have for any $c,d$ $u((x,d))=ax+k_{y=d}$ and $u((c,y))=by +k_{x=c}$.

By considering $u((0,0))$ we see in light of the foregoing that $k_{y=0}=k_{x=0}$. Call this $k$. We now show that $u((x,y))=ax+by+k$. Consider first $u((x^*,0))$. From the above, we have that $u((x^*,0))=ax^*+k$, but also that $u((x^*,0))=b\cdot0+k_{x=x^*}$, so that $k_{x=x^*}=ax^*+k$. For any $(x^*, y^*)$, we now have $u(x^*,y^*)=by^*+k_{x=x^*}=by^*+ax^*+k$, as desired.

By elementary algebra, for any such $u$, $u(f)=u(exp(f))$, and hence by Sets of Utilities, $f \sim exp(f)$.
\end{proof}

This result does depend crucially on Undimensional Expectations. In the absence of that assumption Sets of Utilities \emph{on its own} is consistent with Negative Dominance (and hence the negation of Expectationalism):

\begin{proposition}\label{utilityconsistency}Let $O=\mathbb R^2$. Pareto, Converse Pareto, Independence, Sets of Utilities and Negative Dominance are consistent. \end{proposition}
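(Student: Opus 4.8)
The plan is to exhibit an explicit set $U$ of mixture-preserving utilities on $\Delta(\mathbb R^2)$ that orders outcomes by Pareto and Converse Pareto, and then to verify directly that the induced order on lotteries satisfies Negative Dominance. The natural candidate is a set of utilities that are \emph{not} all of the linear form $ax+by+k$ — because by Proposition \ref{Rprop} any set forcing Expectationalism is doomed, and by Proposition \ref{expectationalism} Expectationalism is incompatible with Negative Dominance. Concretely, I would take $U$ to consist of mixture-preserving functions $u$ whose restriction to the characteristic functions of outcomes is a \emph{strictly concave} (or otherwise curved) increasing function of $(x,y)$, say $u_{a,b,t}(o) = ax + by - t\,\phi(x,y)$ for suitable positive $a,b$ and small $t\ge 0$ and a fixed strictly convex $\phi$ (e.g.\ $\phi(x,y)=x^2+y^2$), extended to all of $\Delta$ by $u_{a,b,t}(f)=\sum_o f(o)\,u_{a,b,t}(o)$. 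One must check the index set is rich enough that $f\succeq g \Leftrightarrow u(f)\ge u(g)$ for all $u\in U$ reproduces exactly Pareto on outcomes together with Converse Pareto; including the limiting linear functions ($t=0$) with all positive slopes handles Converse Pareto and Pareto simultaneously, exactly as in the Fact preceding Proposition \ref{Rprop}.

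The key steps, in order: (1) define $U$ explicitly and confirm each $u\in U$ is mixture-preserving, so that Independence and Sets of Utilities hold automatically; (2) verify the order restricted to outcomes is exactly Pareto $\cap$ Converse Pareto — the "$\Leftarrow$" (Pareto) direction is immediate since all $u$ are increasing in each coordinate, and the "$\Rightarrow$" (Converse Pareto) direction uses the linear members of $U$ with extreme slope ratios, just as in the cited Fact; (3) prove Negative Dominance: suppose $f\succ g$, so $u(f)\ge u(g)$ for all $u\in U$ with strict inequality for some $u_0\in U$, and derive that some outcome in $\operatorname{supp}(f)$ is strictly preferred to some outcome in $\operatorname{supp}(g)$. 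For step (3) the argument is: let $o^+$ be a best outcome in $\operatorname{supp}(f)$ and $o^-$ a worst in $\operatorname{supp}(g)$ under the (partial) outcome order — these exist because each finite support has Pareto-maximal and Pareto-minimal elements, but I must be slightly careful since the outcome order is only partial. A cleaner route: show that if \emph{every} $o\in\operatorname{supp}(f)$ fails to be $\succ$ any $o'\in\operatorname{supp}(g)$, then in fact for every linear $u$ in $U$ one can find, coordinatewise, a witness forcing $u(f)\le u(g)$, hence $f\not\succ g$. The point is that $o\not\succ o'$ under Pareto$\cap$Converse-Pareto means $o$ is not coordinatewise $\ge o'$ with strictness, i.e.\ $o'$ beats $o$ in some coordinate; a careful bookkeeping over which coordinate each pair fails in, combined with choosing the slopes $a,b$ of a linear $u$ appropriately extreme, yields a $u\in U$ with $u(f)<u(g)$ unless additionally some indifference obtains, and the degenerate cases are handled by hand.

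I expect step (3) — verifying Negative Dominance for the chosen $U$ — to be the main obstacle, because Negative Dominance is a genuinely global constraint linking the lottery order to the outcome order, and the "mix-and-match" lotteries like $f^*=(4,-2)/(-2,4)$ are precisely the ones that sink Expectationalism; I must confirm that with curved (or suitably restricted) utilities, such an $f^*$ is \emph{incomparable} to $(0,0)$ rather than strictly preferred. The cleanest implementation may actually be to avoid curvature entirely and instead take $U$ to be a carefully restricted family of \emph{linear} utilities — e.g.\ all $(x,y)\mapsto ax+by$ with $a,b>0$ but with the slope ratio $b/a$ confined to a bounded interval, or even all positive linear functionals — and check that this already orders outcomes by Pareto$\cap$Converse-Pareto while making $f^*\bowtie(0,0)$; then Negative Dominance follows from the observation that $f\succ g$ forces $\max_{o\in\operatorname{supp}f}(\text{each coordinate})$ to dominate $\min_{o\in\operatorname{supp}g}$ appropriately. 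I would first attempt this purely-linear construction, since if it works the entire proof reduces to the Fact before Proposition \ref{Rprop} plus a short verification, and only fall back on curved utilities if the linear family turns out to force Expectationalism after all.
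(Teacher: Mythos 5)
There is a genuine gap, and it sits exactly where you predicted: step (3). First, the purely-linear implementation you say you would attempt first cannot work. If every member of $U$ is linear on outcomes (all positive linear functionals, or positive slopes with $b/a$ confined to an interval), then $u(f)=u(exp(f))$ for every $u\in U$, so the induced order satisfies Expectationalism; by the argument of proposition \ref{expectationalism} (cf.\ proposition \ref{Rprop}), $f^*=(4,-2)/(-2,4)$ is then strictly preferred to $(0,0)$ although no outcome in its support is, and Negative Dominance fails. (Bounding the slope ratio also destroys Converse Pareto: with $b\le 2a$, say, $(4,-1)\succ(0,0)$.) So the ``fallback'' is mandatory. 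Second, the curved family as written does not even satisfy Pareto: $ax+by-t(x^2+y^2)$ is not monotone on $\mathbb R^2$ for any $t>0$ (it decreases in $x$ once $x>a/2t$), and for small $t$ it is in any case too close to linear to block $f^*\succ(0,0)$, since blocking requires some $u\in U$ with $\tfrac12 u((4,-2))+\tfrac12 u((-2,4))< u((0,0))$, i.e.\ $t>(a+b)/20$ for your $\phi$. Third, and most importantly, the verification of Negative Dominance for whatever family you settle on is the entire content of the proposition, and the witness-finding sketch you give for it---choose a linear member of $U$ with extreme slopes so that $u(f)\le u(g)$---fails on precisely the motivating example: every positive linear functional strictly prefers $f^*$ to $(0,0)$, so no linear witness exists; the witness must be a non-linear monotone member of $U$, and your family contains none that does the job.

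The paper's construction sidesteps all of this by taking $U$ to be \emph{all} mixture-preserving functions that are monotone on outcomes, and the witnesses to incomparability are two-valued members of this set: $u^+_{x^*,y^*}$, which is $1$ exactly on $\{(x,y): x\ge x^*,\ y\ge y^*\}$, and $u^-_{x^*,y^*}$, which is $0$ exactly on $\{(x,y): x\le x^*,\ y\le y^*\}$. Every outcome incomparable with $(x^*,y^*)$ receives value $0$ under the first and $1$ under the second, so by mixture preservation any lottery supported on such outcomes receives those same constant values, giving $f\bowtie(x^*,y^*)$ and hence blocking the problematic strict preferences. Your instinct that non-linearity is needed is right, but you need members that collapse the incomparability region of an outcome to a value strictly below that outcome's value (to block $f\succ o^*$) \emph{and} members that collapse it to a value strictly above (to block $o^*\succ f$); a family of concave perturbations of linear utilities supplies at most one side, and in any case the Negative Dominance check for a general pair $f\succ g$---which you correctly identify as the main obstacle---is never carried out.
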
 

\begin{proof} Consider the set of all utility functions such that $u((x,y))\geq u((x',y'))$ if $x\geq x'$ and $y\geq y'$. For any $x^*,y^*$, this set includes $u^+_{x^*,y^*}$ so that $u^+_{x^*,y^*}(x,y)=1$ if both $x\geq x^*$ and $y\geq y^*$ and $0$ otherwise. For any $x^*,y^*$, this set also includes $u^-_{x^*,y^*}$, so that $u^-_{x^*,y^*}(x,y)=0$ if $x \leq x^*$ and $y\leq y^*$, and $1$ otherwise. Consider an $f$ with support on a finite set of $o_1\dots o_n$, and such that all are incommensurable with $(x^*,y^*)$. Since the utility functions are required to be mixture-preserving, if a utility function assigns every outcome in the support of a lottery the same utility, it must assign the whole lottery that utility. We thus have $u^+_{x^*,y^*}(f)=0<1=u^+_{x^*,y^*}((x^*,y^*))$ and $u^-_{x^*,y^*}(f)=1>0=u^-_{x^*,y^*}((x^*,y^*))$, so by Sets of Utilities, $f \bowtie  o^*$ as desired.\end{proof}

As I said earlier, however, while Unidimensional Expectations seems to me ``optional'' (it is satisfied by some rational agents, but not others), it does seem to me a possible description of a rational agent. And any agent who satisfies it while also satisfying Independence must, as we have seen reject Negative Dominance.

Proposition \ref{Rprop} shows that Sets of Utilities is strong in a very important sense. Given Unidimensional Expectations alone, Sets of Utilities allows us to derive full Expectationalism. By contrast,  earlier I conjectured that the theory consisting of Negative Dominance, Continuity, and Comparable Independence does not allow us to derive Good Expectations from Unidimensional Expectations alone. This might make Sets of Utilities seem importantly stronger than the one which includes Negative Dominance.

In some sense this is correct. But in another sense it is not. Sets of Utilities is an assumption about representability of preferences, not a direct assumption about preferences themselves. Intrinsic assumptions about preferences which guarantee that they can be represented by such sets of utilities are not particularly well-motivated. They require, in addition to Independence and Continuity, assumptions about the structure of the underlying space on which they are defined. In fact our present setting violates known weak sufficient conditions on this structure to allow such a representation: $\mathbb R^2$ is not compact (\citet{dubra2004expected}, \citet{evren2008existence}); its dimension is uncountable (\citet{mccarthy2021expected}); and it does not satisfy \citet{mccarthy2021expected}'s ``countable domination'' property, as is easily verified.

I haven't been able to resolve whether Unidimensional Expectations, Independence and Continuity suffice for Sets of Utilities, but prior to the resolution of this question, it is not obvious that the preference-based axiomatization which includes Independence \emph{is} stronger than the one which includes Negative Dominance. In fact, I conjecture that the situation is exactly analogous to the situation described in conjecture \ref{badconjecture}:

\begin{conjecture} Let $O=\mathbb R^2$. Pareto, Converse Pareto, Independence, Continuity, and Unidimensional Expectations do not imply Expectationalism. \end{conjecture}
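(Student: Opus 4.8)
The goal is to exhibit a model: a preorder on $\Delta(\mathbb R^2)$ satisfying the five hypotheses but not Expectationalism. I would pass to the ``difference space'' $V$ of finitely supported signed measures on $\mathbb R^2$ of total mass $0$. Any convex cone $C\subseteq V$ that contains $0$ and is closed under every positive scalar multiple induces, via $f\succeq g\Leftrightarrow f-g\in C$, a preorder on $\Delta(\mathbb R^2)$ satisfying Independence (reflexivity and transitivity are immediate from $0\in C$ and $C+C\subseteq C$; Independence because $f\alpha h-g\alpha h=\alpha(f-g)$). So it suffices to construct a $C$ with: (i) $\delta_p-\delta_q\in C$ whenever $p\ge q$ (this gives Pareto); (ii) $\pm(g-exp(g))\in C$ for every unidimensional lottery $g$ (this gives Unidimensional Expectations); (iii) the moment functionals $\ell_1(\mu)=\int x\,d\mu$ and $\ell_2(\mu)=\int y\,d\mu$ are $\ge 0$ on $C$ (together with (i) this yields Converse Pareto, since then $\delta_p-\delta_q\in C$ forces $(p-q)_i=\ell_i(\delta_p-\delta_q)\ge 0$, and it rules out spurious indifferences among outcomes); (iv) $C$ is closed along line segments, i.e.\ $\{\alpha\in[0,1]:A+\alpha B\in C\}$ is closed for all $A,B\in V$ (this is exactly Continuity); and (v) $\delta_{f^*}\notin C$, where $f^*=(4,-2)/(-2,4)$ and $\delta_{f^*}=\tfrac12\delta_{(4,-2)}+\tfrac12\delta_{(-2,4)}-\delta_{(1,1)}$, so that $f^*\not\sim(1,1)=exp(f^*)$ and Expectationalism fails.

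The natural candidate is $C:=$ the smallest cone that is closed along line segments and contains the generating set $G=\{\delta_p-\delta_q:p\ge q\}\cup\{\pm(g-exp(g)):g\text{ unidimensional}\}$. Properties (i), (ii), (iv) hold by construction, and (iii) holds because $\ell_1,\ell_2$ vanish on every centering generator and are $\ge 0$ on every Pareto generator, hence $\ge 0$ on the cone $C_0$ generated by $G$, and closing along lines preserves a (continuous) linear inequality. As a warm-up one can check that $\delta_{f^*}$ is not even in $C_0$: evaluating $\ell_1$ and $\ell_2$ on any representation of $\delta_{f^*}$ as a positive combination of elements of $G$ forces every Pareto generator appearing to be of the form $\delta_p-\delta_p=0$ (since $\ell_i(\delta_{f^*})=0$ while $\ell_i\ge 0$ on all generators), so $\delta_{f^*}$ would lie in the linear span $W$ of the centering generators; but the bilinear functional $L(\mu)=\int xy\,d\mu$ vanishes on $W$ (each centering generator is supported on an axis-parallel line, where $xy$ is affine, with zero mass and zero first moment there) whereas $L(\delta_{f^*})=\tfrac12(4)(-2)+\tfrac12(-2)(4)-(1)(1)=-9\ne 0$. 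Hence $\delta_{f^*}\notin C_0$, so $C_0$ alone already yields a preorder satisfying Pareto, Converse Pareto, Independence, and Unidimensional Expectations but violating Expectationalism --- Continuity is the one axiom it is not clear $C_0$ satisfies, which is precisely the content of the conjecture.

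The main obstacle is whether line-closing $C_0$ reintroduces $\delta_{f^*}$. The danger is genuine: the \emph{convex} ($\sigma(V,V^*)$-)closure of $C_0$ does contain $\delta_{f^*}$. By Hahn--Banach that closure is the intersection of the halfspaces $\{\ell\ge 0\}$ over linear functionals $\ell\ge 0$ on $C_0$; such $\ell$ correspond to functions on $\mathbb R^2$ monotone for the Pareto order and affine on every axis-parallel line, which forces them to be affine, so the convex closure is exactly $\{\mu:\ell_1(\mu)\ge 0,\ \ell_2(\mu)\ge 0\}$ --- the Expectationalist preorder $f\succeq g\Leftrightarrow exp(f)\ge exp(g)$ (an instance of Proposition~\ref{Rprop}), which contains $\delta_{f^*}$. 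So the conjecture holds precisely when the line-closure of $C_0$ is strictly smaller than its convex closure at $\delta_{f^*}$: no segment lying inside $W+\mathrm{cone}(\{\delta_p-\delta_q:p\ge q\})$ should have $\delta_{f^*}$ as an endpoint-limit, and likewise after iterating the line-closure operation. I would try to establish this either by exhibiting a ``non-Archimedean certificate'' --- a functional valued in a lexicographically ordered space, with the $xy$-moment $L$ as an infinitesimal coordinate and a suitable real functional as the dominant one, that is $\ge 0$ on $C_0$, is preserved under lineal limits, and is negative on $\delta_{f^*}$; an ordinary real functional provably cannot do this, by the preceding paragraph, which is exactly why a non-Archimedean refinement is needed --- or by a direct combinatorial description of the signed measures obtainable as lineal limits from $W+\mathrm{cone}(\{\delta_p-\delta_q\})$, aimed at bounding their $xy$-moments away from $-9$. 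I expect this to be the genuinely hard step, and it is conceivable that it resolves the other way, forcing $\delta_{f^*}$ (and plausibly every $\delta_f$) into the lineal closure and thereby \emph{refuting} the conjecture; settling this is the open problem. Nothing in the argument uses the particular lottery: any $f$ whose centering measure has nonzero $xy$-moment serves equally well in place of $f^*$.
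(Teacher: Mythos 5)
The statement you are proving is posed in the paper only as a conjecture; the paper offers no proof of it, so there is nothing to match your argument against, and the relevant question is whether your proposal settles it. It does not. What you have done correctly: the reduction to a cone $C$ in the difference space $V$ (Independence, reflexivity, transitivity from the cone structure; Pareto from the generators $\delta_p-\delta_q$; Converse Pareto from $\ell_1,\ell_2\ge 0$ on $C$; Continuity as lineal closedness) is sound; the verification that $\delta_{f^*}\notin C_0$ via the $xy$-moment $L$ is correct (and your use of monotonicity to kill the cross term when classifying functionals nonnegative on $C_0$ is the right fix, since $xy$ itself is affine on every axis-parallel line); and your observation that the full convex closure of $C_0$ collapses to the expectationalist order, in line with Proposition~\ref{Rprop}, correctly identifies why no ordinary separating functional can exist and hence where the difficulty lives.

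The genuine gap is the step you yourself flag: you never show that the iterated lineal closure of $C_0$ (the minimal cone satisfying the paper's Continuity axiom) still excludes $\delta_{f^*}$. Everything before that point only establishes a counterexample to the conjecture's conclusion for the four axioms \emph{without} Continuity, which is strictly weaker than the conjecture; with Continuity added, your own analysis shows the question is exactly whether mixture-continuity closure is strictly smaller than topological/convex closure at $\delta_{f^*}$, and you concede the answer could be negative (in which case the construction would refute rather than prove the statement). Neither proposed route --- a lexicographic certificate with $L$ as an infinitesimal coordinate (whose positivity would also have to be shown stable under lineal limits, which lexicographic orders generally are not), nor a combinatorial description of the lineal limits of $W+\mathrm{cone}(\{\delta_p-\delta_q : p\ge q\})$ --- is carried out. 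So the proposal is a well-posed reduction and a plausible program, but the conjecture remains unproved by it, exactly as it remains open in the paper.
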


In sum, representability by Sets of Utilities gives us a strong constraint on preferences, stronger than what I have provided in the case of Negative Dominance. But the natural intrinsic properties of preferences are not strong in the same way. In our present state of knowledge the two theories are not dramatically different in strength.

\section{Related Work}\label{literature}

Many works have studied conditions for representing incomplete preferences with sets of utility functions, where preferences on uncertain options are given by sets of expected utilities (\citet{seidenfeld1995representation,shapleybaucells,dubra2004expected,nau2006shape,evren2008existence,evren2011multiutility,ok2012incomplete,galaabaatar2012expected,galaabaatar2013subjective,riella2015representation,gorno2017strict,hara2019coalitional,mccarthy2021expected,ok2023lipschitz,borie2023expected}). Independence (or something very close in spirit) is an assumption of all of these results. The present work supplements this tradition by focusing attention on an alternative, which rejects Independence with the aim of vindicating Negative Dominance.

\citet{manzini2008representation} (cf. \citet{manzini2004theory}) prove a representation theorem for incomplete preferences satisfying various ``sure thing'' principles, including the VST discussed immediately after Remark \ref{conversepareto1}. As noted there, their VST is strictly stronger than Negative Dominance, and is incompatible with Stochastic Dominance, given a rich domain of incomplete preferences like the one we have been assuming. Manzini and Mariotti's representation theorem, in terms of a utility function and a ``vagueness'' function, applies to the case where the set of non-comparable lotteries is concave in the set of all lotteries. The present work can be seen as complementing theirs, by exploring the case where incomplete preferences are not convex in this way, but where something quite similar to their VST (i.e. Negative Dominance) still holds. In our setting outcomes (not just lotteries) have their own affine structure, so there is more possibility for non-convexity of the region of non-comparability (indeed, this is what motivates the richness of our space). Moreover, we have here used weak preference as a primitive, whereas Manzini and Mariotti take strict preference as a primitive, and define indifference as a subset of the region which is non-comparable in terms of strict preference (for detailed investigation of the costs and benefits of this approach, see \citet{mandler2009indifference}). 

A point closely related to Manzini and Mariotti's observation about the limitations of VST was developed in detail by \citet{hare2010take} (for discussion see, among others \citet{hare2013limits, schoenfield2014decision,bales2014decision,bader2018stochastic,doody2019parity,doody2019opaque,doody2021hard,rabinowicz2021incommensurability,steele2021incommensurability}, \citet[\S 3.2]{russell2023fanaticism}). Hare works in an \citet{anscombe1963definition} or \citet{savage1972foundations}-inspired setting, with distinct states and outcomes, and a probability distribution on states; here (like Manzini and Mariotti) we have worked in a setting analogous to von Neumann-Morgenstern, with probability distributions defined directly on outcomes. Taking as given a finite set of states $S$, endowed with a probability measure $\mu$, Hare can be understood as taking the set of actions $A$ to be the set of functions from states to outcomes as usual. He then shows a conflict between Stochastic Dominance (modified to suit the setting with states in the obvious way; see \citet{bader2018stochastic}) and (what we here call) Statewise Negative Dominance:
\begin{description}
\item[Statewise Negative Dominance] If, for every state $s$, $a(s) \not \succ b(s)$, then $a \not \succ b$. 
\end{description}
This principle can be thought of as a version of VST, but developed in the Savage setting, with states in addition to outcomes.

The axiom leads to pathological behavior in the presence of incompleteness for exactly the same reasons as those discussed in the case of VST in section \ref{negdom}. In a setting with two states, $s_1$, $s_2$, where $\mu$ is the uniform distribution, we may have $a(s) \not \succ b(s)$ for all $s$, so Statewise Negative Dominance implies $a \not \succ b$, even though $a(s_1) \succ b(s_1)$ and $a(s_2) \succ b(s_2)$, so that Stochastic Dominance implies that $a \succ b$. (For instance, suppose $a(s_1)=(1,2), a(s_2)=(2,1)$, while $b(s_1)=(2,0), b(s_2)=(0,2)$.) Negative Dominance is substantially weaker than Statewise Negative Dominance; in the setting with states, the former is the weaker claim that, if every state $s, s'$ is such that $a(s) \not \succeq b(s')$ (note that $s'$ need not be identical to $s$), then $a \not \succ b$. Independence and Unidimensional Expectations are stronger than Stochastic Dominance, so the conflict is logically independent of Hare's. (For more discussion, see \citet[\S5]{ledermanmarbles}.)

\citet{pejsachowicz2017choice} study the connection between incomplete preferences over menus (in the tradition of \citet{kreps1979representation,dekel2001unforeseen,gul2001temptation}) and completions of those preferences. They prove a fascinating limitative result, showing that, given Independence and a continuity assumption, (roughly) if the completion of a possibly incomplete preference always changes incompleteness between two menus into a weak preference for the union of two menus over either of the menus, then either the original preference is complete, or the completion of the incomplete preference is monotonic in the sense that it always weakly prefers unions of pairs of menus to the menus themselves. One consequence of this is that the original incomplete preference cannot strictly prefer a menu $A$ to the menu $A \cup B$ for any $B$. The apparently weak property of the completion rules out natural properties of the incomplete preference, given Independence and a continuity assumption.

This surprising result has a similar flavor to the ones above, and they too suggest rejecting Independence as a live option. So our results may be seen as complementary to their general project. But the settings are quite different. \citet{pejsachowicz2017choice}'s key ``Cautious Deferral Completion'' property is conceptually motivated by, and formally essentially requires a setting in which preferences are defined on menus, not simple options. Similarly, the notion of monotonicity, which can be stated in the setting of menus, has no correlate in our setting (and therefore in our results). Finally, here I haven't drawn connections between completions of orders and our incomplete orders, but their statement requires one.

A different tradition studies social preferences derived from the the preferences of different individuals, where each individual may be understood as a ``dimension'' (working in the tradition of \citet{harsanyi1955cardinal}, \citet{mongin1994harsanyi}, see e.g. \citet{sen1970interpersonal,sen1973economic,sen1982choice}, and more recently \citet{danan2015harsanyi}, \citet{mccarthy2020utilitarianism}, \citet{danan2021partial}). Perhaps most closely related to our work \citet{danan2021partial} works in a setting where each individual's preferences are given by a \emph{set} of von Neumann-Morgenstern utility functions, and shows that (given certain background assumptions), the social welfare ordering cannot satisfy both completeness and Independence. Danan's individual dimensions are allowed to induce incomplete orders (whereas ours are incomplete), and an Independence of Irrelevant Alternatives axiom is imposed, which has no correlate in the setup here. Danan's representations violate Negative Dominance given particular input profiles, but this is not surprising, since he assumes Independence as a constraint on these representations (Theorem 2).

\section{Conclusion}\label{conclusion}

This paper has developed a series of results showing that (i) incomplete preferences on a sufficiently rich domain lead to a conflict between (ii) Negative Dominance and (iii) Independence. These three are inconsistent against a fairly plausible weak background. Some may see these results as adding to extant arguments against the rationality of incomplete preferences (\citet{gustafsson2022money}; \citet{dorr2021consequences}, \citet{dorr2021case}). Others may see them as providing an argument against Negative Dominance. Here, I have focused on the less well-understood possibility of rejecting Independence. I have provided a consistency result for a novel theory which rejects Independence, but also outlined several important open questions remaining for such a theory.

\begin{appendices}

\section{Qualitative Setting}\label{qualitative}

In this Appendix, I develop the limitative results of section \ref{r2} in a more abstract setting, where the dimensions of the outcomes are not assumed to be well-modeled by $\mathbb R$. The proofs are fairly straightforward generalizations of the proofs of preceding propositions, especially \ref{main}. But articulating the principles underlying what makes dimensions ``freely recombinable'' is I hope conceptually illuminating. Moreover, generalizing the results to this setting ends up being somewhat revealing mathematically: some of our strongest results in $\mathbb R^2$ are seen to have depended on strong assumptions about the richness of that space.

The results also further make the case that the core problem is the combination of (i) incomplete preferences on a sufficiently rich domain, (ii) Negative Dominance, (iii) Independence. The results show that the impossibility results depend less on the structure of the outcome space than might have seemed from section \ref{r2}.

I mentioned in the introduction that our basic formalism can be interpreted not as applying to preferences but as applying to an objective notion of ``betterness for a person''. In this Appendix I use that application in my informal discussion to illustrate it. Everything I say could also be reinterpreted (in my view equally plausibly) using the language of preferences.

\subsection{The Qualitative Setup}\label{basic2}

Prior to defining $O$, the set of outcomes, we take a step back. Conceptually, we take the pre-order $\succeq$ to be defined primitively over a non-empty set $W$, consisting of conceptually possible total descriptions of the universe. In line with our focus on multidimensional betterness (as opposed to preference), we understand there to be an exhaustive list (here assumed to be finite) of kinds of features which ``matter'' to whether the world is better for our decision-maker, at least in the sense that, if two worlds have exactly the same list of these features, then they are deemed equivalent $\sim$ by the order. 

Given this assumption, even though our underlying objects (or: outcomes) are ``really'' possible worlds, the aspects of outcomes which interest us are just the sequences of the relevant kinds of features, which correspond to equivalence classes of the worlds. We can order these sequences directly, since any two worlds which correspond to the same sequence are deemed equivalent in the order. So we will consider our set of outcomes $O$ to be the set of such sequences (or equivalently, equivalence classes of worlds, with respect to the sequences to which they correspond). Fixing $n$, the cardinality of dimensions, we will study spaces $O$ where outcomes can be written $(o_1, \dots, o_n)$. As before, we want to order not just $O$, but the set of lotteries $\Delta(O)\subset [0,1]^O$ where $f \in \Delta(O)$ if and only if it has finite support and $\sum_o f(o)=1$.

Given such a set of outcomes $O$, we can make sense of the set of features which are exhibited by some outcome on a given dimension, that is we can speak of $O_i=\{o_i \mid \exists o \in O$ such that $\pi_i(o)=o_i\}$ (where as before $\pi_i$ projects from a vector to its $i^{th}$ coordinate).

Given this basic setup, we begin with two background assumptions. The first concerns the set $O$:
\begin{description}
\item[Recombination] For $1 \leq i \leq n$, if $(o_1, \dots, o_i, \dots, o_n), (o'_1, \dots, o'_i, \dots, o'_n) \in O$,\\ then $(o_1, \dots, o'_i, \dots, o_n) \in O$. 
\end{description}
This assumption can also be formulated using the notion of $O_i$ introduced above, as the claim that $O=\times_{1 \leq i \leq n} O_i$. An analogue of this assumption was of course true in $\mathbb R^2$, but here we make it much more abstractly.

In the language of worlds, this assumption says that, if there is a world which exhibits some feature on a given dimension, then for any other world, there is a third world, which has the relevant feature of the first, and the rest of the profile of the second. This style of richness assumption articulates the idea that each dimension is at least conceptually separable from the others, so that we can make sense conceptually of an outcome obtained by recombining the relevant dimensions. There has been much discussion of such recombination principles in the literature on metaphysical modality (see, for an overview, \citet[\S 2.3]{menzel2023possible}). There it is broadly agreed that a principle of this kind could be plausible only if it concerns fundamental properties (and it might not be true even then). But here we are asking only for conceptual possibility (for an introduction to the difference, see \citet{kment2021varieties}). We will later see that even this assumption is stronger than what will be needed in our results.

Our second background assumption concerns the structure of $\succeq$. As we will understand this idea here, to say that aspects of the dimensions ``contribute'' to overall betterness for the decision-maker is to say that there is a preorder $\succeq_i$ on the set of features in each dimension $O_i$, so that, holding all other dimensions fixed, an improvement along one dimension will mean an improvement in the overall order as well. Formally:
\begin{description}
\item[Pareto] If for all $i$ $o_i \succeq_i o'_i$ then $(o_1, \dots, o_n) \succeq (o'_1, \dots, o'_n)$.
\end{description}
We choose our list of relevant features to be as long as we like. If Pareto appears to fail, then intuitively, we have not included all relevant dimensions; some relevant dimension must in fact not be improved. Note also that in speaking of improvements in relevant dimensions, we are not assuming that this corresponds to ``increases'' in the relevant dimension. 

Again in order to have crisp statements of the results I will also assume the converse of Pareto in the remainder of the Appendix:
\begin{description}
\item[Converse Pareto] If  $(o_1, \dots, o_n) \succeq (o'_1, \dots, o'_n)$ then for all $i$ $o_i \succeq_i o'_i$.
\end{description} 
As observed earlier, Converse Pareto is much more controversial than Pareto; I assume it here primarily to simplify the presentation. Just as in section \ref{r2}, it can be replaced with particular claims about incomparability without loss of plausibility.\footnote{In the philosophy literature, there is a conceptual challenge to Recombination and Pareto (see e.g. \citet{hedden2023dimensions}, with \citet{chang2002possibility,chang2017hard}). In certain cases, it may seem that increasing along one dimension disrupts the ``balance'' among all the dimensions. For instance, Ruth Chang suggests that: ``Adding a murder-mystery epilogue to Pride and Prejudice\textemdash result[s] in more literary merit in a respect, but a worse novel overall'' (\citet[402]{chang2016parity}). In response to this example, Hedden and Mu\~noz defend Pareto here by suggesting (as I did above) that if the example is genuine, we may simply consider a new dimension of ``balance''. But this move is not so straightforward in the presence of Recombination. If any dimension depends conceptually on other dimensions, then Recombination is threatened. In the case of balance, for instance, there cannot be ``good'' values of balance instantiated alongside very unbalanced other values. On the other hand, if we accept the example without ``balance'' as a dimension, then we would have to give up Pareto.}

\subsection{Unanimous Equivalence}\label{negdom2}

I'll work up to the strongest results in our qualitative setting through a series of weaker ones, which are analogous to those we developed for $\mathbb R^2$. The proofs of these results are not challenging; the interest is in formulating conceptually the ``right'' versions of axioms presented earlier.

In stating Pareto, we assumed preorders $\succeq_i$ on each $O_i$. In what follows, we assume that these $\succeq_i$ are defined on the whole of $\Delta (O_i)$. Then we can impose the following principle:
\begin{description}
\item[Unanimous Equivalence] If for all $i$ $a_i \alpha c_i \sim_i b_i$, then $(a_1 \dots a_n)\alpha (c_1 \dots c_n) \sim (b_1 \dots b_n)$.
\end{description}
A full Unanimity axiom would say that, if every $\succeq_i$ agrees on the ordering of the projection of two lotteries into $O_i$, then the overall ordering should agree as well. Our principle is restricted to cases where the individual orders treat the relevant lotteries as equivalent, and to comparing lotteries with binary support to a single outcome.

The axiom is in a sense an analogue of Expectationalism, though it is much weaker. The axiom says that if $b_i$ is a certainty equivalent of the lottery $a_i \alpha c_i$, then $(b_1 \dots b_n)$ is the certainty equivalent $(a_1 \dots a_n)\alpha (c_1 \dots c_n)$. We may think of $b_i$ as the coordinate-wise ``expectation'' of $a_i \alpha c_i$. Like Expectationalism, Unanimous Equivalence says that we can put together these coordinate-wise expectations to obtain an overall certainty equivalent.

But Unanimous Equivalence is much weaker than Expectationalism. It does not require that the agent's utilities (intuitively) be linear in each dimension, nor does it require that the agent is risk neutral. The axiom says nothing about which elements of $o_i$ act as certainty-equivalents for which pairwise lotteries. Still, it is enough to give us a result analogous to proposition \ref{expectationalism}.

\begin{proposition}\label{expectationalism2} Let $O=O_1 \times O_2$, so that Recombination holds. Assume that for $i \in \{1,2\}$ there are $a_i, b_i, c_i, d_i  \in O_i$ with $a_i \succ_i b_i \succ_i c_i \succ_i d_i$ and $a_i / d_i \sim b_i$. Then Pareto, Converse Pareto, and Unanimous Equivalence are inconsistent with Negative Dominance.
\end{proposition}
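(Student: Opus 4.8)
The plan is to transcribe the proof of proposition \ref{expectationalism} into the qualitative setting, replacing its numerical ``mix and match'' outcomes by recombinations of the hypothesised features. First I would form the lottery
$$f \;=\; (a_1,d_2)\ \tfrac{1}{2}\ (d_1,a_2),$$
which is well defined since $O=O_1\times O_2$ guarantees $(a_1,d_2),(d_1,a_2)\in O$. By hypothesis $a_i/d_i\sim_i b_i$ for $i\in\{1,2\}$, where $a_i/d_i$ is the binary mixture $a_i\,\tfrac12\,d_i$, and since $\alpha=\tfrac12$ makes such mixtures symmetric this is also $d_i\,\tfrac12\,a_i$. Applying Unanimous Equivalence with $\alpha=\tfrac12$, first-coordinate pair $(a_1,d_1)$ and second-coordinate pair $(d_2,a_2)$ — so that its premises read $a_1\,\tfrac12\,d_1\sim_1 b_1$ and $d_2\,\tfrac12\,a_2\sim_2 b_2$ — then yields $(a_1,d_2)\,\tfrac12\,(d_1,a_2)\sim(b_1,b_2)$, i.e. $f\sim(b_1,b_2)$.

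Second, I would single out the outcome $(c_1,c_2)\in O$ and show $f\succ(c_1,c_2)$. Since $b_i\succ_i c_i$ for each $i$, Pareto gives $(b_1,b_2)\succeq(c_1,c_2)$, while Converse Pareto forbids $(c_1,c_2)\succeq(b_1,b_2)$ (that would force $c_i\succeq_i b_i$). Hence $(b_1,b_2)\succ(c_1,c_2)$, and since $\succeq$ is a preorder with $f\sim(b_1,b_2)$, we get $f\succ(c_1,c_2)$.

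Third, I would derive the contradiction with Negative Dominance. The support of $f$ is $\{(a_1,d_2),(d_1,a_2)\}$ and the support of the degenerate lottery $(c_1,c_2)$ is $\{(c_1,c_2)\}$, so Negative Dominance would require $(a_1,d_2)\succ(c_1,c_2)$ or $(d_1,a_2)\succ(c_1,c_2)$. But by Converse Pareto $(a_1,d_2)\succeq(c_1,c_2)$ would force $d_2\succeq_2 c_2$, contradicting $c_2\succ_2 d_2$; so $(a_1,d_2)\not\succ(c_1,c_2)$, and symmetrically $(d_1,a_2)\not\succ(c_1,c_2)$ using $c_1\succ_1 d_1$. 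This contradicts Negative Dominance and completes the proof.

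I do not expect a real obstacle here: the argument is essentially a notational rewrite of proposition \ref{expectationalism}. The only point needing care is lining up the application of Unanimous Equivalence with its fixed template ``$a_i\alpha c_i\sim_i b_i$'' coordinate by coordinate; this goes through cleanly precisely because $\alpha=\tfrac12$ makes each binary mixture symmetric, so $d_2\,\tfrac12\,a_2$ and $a_2\,\tfrac12\,d_2$ denote literally the same lottery. (Note that the hypothesis $a_i\succ_i b_i$ is not actually used in the contradiction, which needs only $b_i\succ_i c_i\succ_i d_i$ and $a_i/d_i\sim_i b_i$; it is stated to keep the configuration non-degenerate, exactly as in the $\mathbb R^2$ analogue.)
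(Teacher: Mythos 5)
Your proof is correct and follows essentially the same route as the paper's: apply Unanimous Equivalence to the mixed-and-matched lottery $(a_1,d_2)/(d_1,a_2)$ to get $\sim (b_1,b_2)$, use Pareto (with Converse Pareto for strictness) to get $\succ (c_1,c_2)$, and use Converse Pareto to show neither supported outcome is strictly preferred to $(c_1,c_2)$, contradicting Negative Dominance. Your extra care about the symmetric $\tfrac12$-mixture and about the strictness of $(b_1,b_2)\succ(c_1,c_2)$ only makes explicit steps the paper leaves implicit.
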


In this Appendix, we will often assume for simplicity, as here, that $n=2$, i.e. that there are just two dimensions relevant to $\succeq$. Nothing depends essentially on this assumption, as the reader may easily verify.

More importantly, in the proposition above, since we are working in a more abstract setting, we have had to write in by hand a further richness assumption beyond Recombination. The assumption is not very strong at all: it just says that there four strictly ranked properties in two of the relevant dimensions, and a uniform lottery over the best and worst which is equivalent to the better of the intermediate options.

\begin{proof} By Unanimous Equivalence, $(a_1, d_2) / (d_1,a_2) \sim (b_1, b_2)$ and thus by Pareto $(a_1, d_2) / (d_1,a_2) \succ (c_1, c_2)$. By Converse Pareto, we have that $ (c_1, c_2) \bowtie  (a_1, d_2)$ and $(c_1, c_2) \bowtie  (a_2, d_1)$, contradicting Negative Dominance.
\end{proof}

\begin{remark} The use of the uniform lottery $a_i/d_i$ was just to simplify the presentation. We could have instead assumed only that for some $\alpha$ $a_1 \alpha d_1 \sim_1 b_1$ and $a_2 (1-\alpha) d_2 \sim_2 b_2$. \end{remark}

\begin{remark}\label{conversepareto2} As before, the use of Converse Pareto is inessential. We could have instead assumed directly that $ (c_1, c_2) \bowtie  (a_1, d_2)$ and $ (c_1, c_2) \bowtie  (a_2, d_1)$.\end{remark}

\subsection{Dimensional Separability}\label{unidimensional2}

So far, the analogues of our ``expectational'' assumptions in the qualitative setting do not rely on the notion of an expectation, but more abstractly on certainty-equivalents, and on the relationship between the several $\succeq_i$ and $\succeq$. Something similar will turn out to be true for the analogue of our main result, although in this setting there is not such a strong distinction to be drawn between a result based on ``Unidimensional Expectations'', and one based on a separability assumption (as in proposition \ref{separability}).

In particular, the idea will be to require that, if $\succeq_i$ orders lotteries in $\Delta(O_i)$ in a particular way, then any unidimensional lotteries which have this projection into $i$, will be ordered in the same way. To state this idea formally, we start with some basic notation. For any $1 \leq i \leq n$ we write $O_{-i} = O_1 \times \dots \times O_{i-1} \times O_{i+1} \times \dots O_n$. Given $o_{-i} \in O_{-i}$ and $o_i \in O_i$ we write $(o_i,o_{-i})$ for the element $o \in O$ such that $\pi_i (o)=o_i$, and for all $1 \leq j \leq n$, with $j\neq i$ $\pi_j (o)=\pi_j(o_{-1})$. Similarly, given a lottery just over elements of $O_i$, $f_i \in \Delta (O_i)$, we write $(f_i, o_{-i})$ for the unique $f \in \Delta(O)$ such that for all $o_i$, $f(o_i, o_{-i})=f_i(o_i)$. 

Using this notation, we state our substantive assumption:
\begin{description}
\item[Unidimensional Dimensional Separability] For all $f_i, g_i \in \Delta (O_i)$, $f_i \succeq_i g_i$ if and only if for all $o_{-i} \in O_{-i}$, $(f_i, o_{-i}) \succeq (g_i, o_{-i})$.
\end{description}

This axiom is in an important sense weaker than the earlier Dimensional Separability, because it applies only to unidimensional lotteries. But it also differs in some subtler ways: by applying more generally than to lotteries with certainty equivalents, and also making reference to an ordering on projections of lotteries, which we did not have in the earlier setting.

This axiom does not follow from Unanimous Equivalence. But it is intuitively weaker. The only reason it does not follow is that Unanimous Equivalence applied only to cases where the individual orders were indifferent between a pair of lotteries, whereas the present one allows the case where a given dimension $i$ may have a strict inequality. Otherwise, the new axiom is substantially weaker, since it only applies in cases where the only non-trivial variation is along a single dimension. It rules out the possibility that, as values in one of the other dimensions changes, the valuing of unidimensional lotteries might be affected. But otherwise it is extremely permissive.

Even so, it turns out to be inconsistent with Negative Dominance given modest background assumptions. To see this, we start by stating a corollary of the main result, since it is easier to parse:

\begin{corollary} Let $O= O_1 \times O_2$ so that Recombination holds. Suppose that for $i \in \{1,2\}$ there are $a_i, b_i, c_i, d_i  \in O_i$, such that $a_i \succ_i b_i \succ_i c_i \succ_i d_i$, $a_i / d_i \sim_i b_i$, and $c_i \succeq_i b_i/ d_i$. Then Pareto, Converse Pareto, Independence, and Unidimensional Dimensional Separability are inconsistent with Negative Dominance.\end{corollary}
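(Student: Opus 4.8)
The corollary is a direct transcription of the proof of Proposition~\ref{main} into the qualitative setting, with ``Unidimensional Expectations'' replaced by the conjunction of Unidimensional Dimensional Separability and the hypothesis $a_i/d_i \sim_i b_i$. So I would mimic the five-step argument of Proposition~\ref{main} verbatim, keeping careful track of which dimension-wise orderings $\succeq_i$ are invoked at each stage. First I would fix the eight ``seed'' features $a_1,b_1,c_1,d_1 \in O_1$ and $a_2,b_2,c_2,d_2 \in O_2$ given by hypothesis; by Recombination all relevant recombined outcomes lie in $O$. The role of ``$(a,0),(-a,0),(0,a),(0,-a)$'' in Proposition~\ref{main} is played here by the recombined outcomes built from $\{a_i, d_i\}$ on one coordinate and a fixed (``least-disruptive'') value on the other — but since there is no additive structure, I instead keep the other coordinate fixed and use Unidimensional Dimensional Separability to transfer the relation $a_i/d_i \sim_i b_i$ up to $\succeq$ on whichever ``slice'' I need.

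**The key steps, in order.** (1) Using $a_i/d_i \sim_i b_i$ and Unidimensional Dimensional Separability, conclude that on the slice holding dimension $2$ fixed at (say) $d_2$, the lottery $(a_1,d_2)/(d_1,d_2) \sim (b_1,d_2)$, and symmetrically $(d_1,a_2)/(d_1,d_2)\sim(d_1,b_2)$; more importantly, apply separability on the second coordinate to also get the ``mixed'' slice equivalences needed later. (2) By two applications of Independence — exactly as in Proposition~\ref{main} — build the lottery $f = [(a_1,d_2)/(d_1,d_2)]\,\tfrac12\,[(d_1,a_2)/(d_1,d_2)]$ and show $f\sim(d_1,d_2)$. (3) By Pareto, $(a_1,a_2)\succ(a_1,d_2)$ and $(a_1,a_2)\succ(d_1,a_2)$ (using $a_i\succ_i d_i$); replacing the relevant corner outcomes by these strictly-better ones and invoking Independence, obtain an $f^{++}$ with $f^{++}\succ(d_1,d_2)$ supported on four recombined outcomes, each unidimensional with two of the others. (4) Regroup $f^{++}$ as a mixture of two unidimensional lotteries, one varying only in dimension~$1$ and one only in dimension~$2$; here is where I need the extra hypothesis $c_i\succeq_i b_i/d_i$: it guarantees that the certainty equivalent of each of these unidimensional sub-lotteries (obtained via $a_i/d_i\sim_i b_i$ plus separability, so the sub-lottery on a slice is $\sim$ something of the form $(b_i,\cdot)$ after the Independence-regrouping, and then Pareto/separability pushes it down past $c_i$) is dominated by $c_i$ in that coordinate while still being ``high'' in the other. (5) Conclude, via Independence again, that $f^{++} \preceq$ the outcome pair $(c_1, \text{high}_2)/(\text{high}_1, c_2)$ for suitable ``high'' values, hence $(c_1,a_2)/(a_1,c_2)\succ(d_1,d_2)$ or an analogous pair; then Converse Pareto gives incomparability of each of these two outcomes with the dispreferred outcome (since each is strictly worse in one coordinate — $c_1\prec_1 a_1$, etc. — relative to something the dispreferred outcome beats), contradicting Negative Dominance.

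**The main obstacle.** The genuine content beyond bookkeeping is Step~(4): getting the arithmetic of mixture weights to work without real numbers to compute with. In Proposition~\ref{main} the regrouping of $f^{++}$ into two unidimensional halves relied on the explicit coordinates $(-a/2,a),(a,-a/2)$, which came from averaging. Here I must instead argue \emph{purely order-theoretically}: after the Independence regrouping, each unidimensional half is a binary (or ternary) lottery over $\{a_i, b_i, d_i\}$-type features, and I must show using $a_i/d_i\sim_i b_i$, $c_i\succeq_i b_i/d_i$, Unidimensional Dimensional Separability, and Pareto that this half is weakly dispreferred to a \emph{single} outcome whose $i$-th coordinate is $\preceq_i c_i$. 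The hypothesis $c_i\succeq_i b_i/d_i$ is evidently chosen precisely to make this go through, so the obstacle is really just to verify that the mixture weights coming out of the two Independence applications in Step~(3) are exactly such that the regrouped half is (stochastically, within $\succeq_i$) at or below $b_i/d_i$, hence at or below $c_i$. I expect this to be a short but slightly fiddly check with the weights $\tfrac12$ and the binary mixtures; once it is done, Step~(5) and the Converse Pareto conclusion are immediate, exactly as in Propositions~\ref{main} and~\ref{separability}. The full Unidimensional Dimensional Separability theorem (of which this corollary is stated to be a consequence) presumably removes the ad hoc hypotheses $a_i/d_i\sim_i b_i$ and $c_i\succeq_i b_i/d_i$ by quantifying over all mixture weights $\alpha$, i.e. replacing ``$a_i/d_i\sim_i b_i$'' by ``$\exists\alpha_i:\ a_i\,\alpha_i\,d_i\sim_i b_i$'' and tracking the $\alpha_i$ through; I would prove the corollary first in the clean uniform case and then note the straightforward generalization.
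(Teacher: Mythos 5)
Your overall plan---transcribing the proof of Proposition \ref{main}, with Unidimensional Dimensional Separability plus $a_i/d_i\sim_i b_i$ standing in for Unidimensional Expectations, and with $c_i\succeq_i b_i/d_i$ doing the work of the midpoints $(-a/2,a),(a,-a/2)$ at the regrouping stage---is exactly the paper's (the corollary is the $\alpha=\beta=\tfrac12$ case of Proposition \ref{main2}, whose proof mirrors Proposition \ref{main}). But your concrete instantiation anchors the construction at the wrong point, and this is a genuine error, not bookkeeping. You fix the ``other'' coordinate at $d_2$ (resp.\ $d_1$) and take $(d_1,d_2)$ as the analogue of $(0,0)$. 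Then your two slice lotteries have \emph{different} certainty equivalents: $(a_1,d_2)/(d_1,d_2)\sim(b_1,d_2)$ while $(d_1,a_2)/(d_1,d_2)\sim(d_1,b_2)$, so Independence only yields $f\sim(b_1,d_2)/(d_1,b_2)$; the claim $f\sim(d_1,d_2)$ in your Step (2) is underivable (indeed deliberately so---no axiom in play relates such a ``diagonal'' mixture to a single outcome). Worse, even if you pushed through to something like $(c_1,a_2)/(a_1,c_2)\succ(d_1,d_2)$, your Step (5) collapses: since $c_i\succ_i d_i$ and $a_j\succ_j d_j$, Pareto makes $(c_1,a_2)$ and $(a_1,c_2)$ strictly \emph{better} than $(d_1,d_2)$, hence comparable to it, so no violation of Negative Dominance ever arises. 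The contradiction requires the reference outcome to sit ``between'' the final pair on each coordinate, which the bottom corner never does.

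The correct substitution (the paper's) is: take $(b_1,b_2)$ as the analogue of $(0,0)$, and $(a_1,b_2),(d_1,b_2),(b_1,a_2),(b_1,d_2)$ as the analogues of $(a,0),(-a,0),(0,a),(0,-a)$. Then both slice lotteries $(a_1,b_2)/(d_1,b_2)$ and $(b_1,a_2)/(b_1,d_2)$ are $\sim(b_1,b_2)$ (a \emph{common} certainty equivalent), and the two Independence steps give the four-point lottery $\sim(b_1,b_2)$. Improving $(d_1,b_2)$ to $(d_1,a_2)$ and $(b_1,d_2)$ to $(a_1,d_2)$ via Pareto and Independence gives $f^{++}\succ(b_1,b_2)$; regrouping $f^{++}$ into the unidimensional halves $(b_1,a_2)/(d_1,a_2)$ and $(a_1,b_2)/(a_1,d_2)$, the hypothesis $c_i\succeq_i b_i/d_i$ together with Unidimensional Dimensional Separability gives $(c_1,a_2)\succeq(b_1,a_2)/(d_1,a_2)$ and $(a_1,c_2)\succeq(a_1,b_2)/(a_1,d_2)$ (your Step (4) has this idea right, and in the uniform case all the weights are $\tfrac12$, so there is no weight arithmetic to check). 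Hence $(c_1,a_2)/(a_1,c_2)\succ(b_1,b_2)$, and Converse Pareto yields $(c_1,a_2)\bowtie(b_1,b_2)$ and $(a_1,c_2)\bowtie(b_1,b_2)$ because $c_i\prec_i b_i$ while $a_j\succ_j b_j$---this is the Negative Dominance violation, and it is exactly the trade-off against the anchor that your choice of $(d_1,d_2)$ destroys.
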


The assumption that there are such dimensions with such $a_i,b_i,c_i,d_i$ is very weak. But we can weaken it even further. The corollary deals with uniform lotteries over two outcomes, but the proposition generalizes this to other lotteries with probabilities $\alpha$ and $\beta$ (above $\alpha = \beta = \frac{1}{2}$):

\begin{proposition}\label{main2}Let $O= O_1 \times O_2$ so that Recombination holds. Suppose there are $\alpha, \beta \in (0,1)$ and for all $i \in \{1,2\}$ there are $a_i, b_i, c_i, d_i  \in O_i$,  such that $a_i \succ b_i \succ c_i \succ d_i$, $a_j \alpha d_j \sim_j b_j$, $c_j \succeq_j b_j \frac{\beta}{\beta + (1-\alpha)} d_j$, $a_k \beta d_k \sim_k b_k$ and $c_2 \succeq b_k \frac{\alpha}{\alpha + (1- \beta)} d_k$. Then Pareto, Converse Pareto, Independence, and Unidimensional Dimensional Separability are inconsistent with Negative Dominance.\end{proposition}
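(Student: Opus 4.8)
The plan is to transcribe the proof of Proposition \ref{main} (in the $\alpha,\beta$-parametrised form used for Proposition \ref{strongest}) into the abstract setting, with Unidimensional Dimensional Separability playing the role that Unidimensional Expectations, Unidimensional Continuity, and Unidimensional Certainty Equivalents play there. After relabelling the two dimensions if needed, take $j=1$ and $k=2$ (reading the evident typo ``$c_2$'' in the hypothesis as $c_k=c_2$), and write $(o_1,o_2)$ for elements of $O_1\times O_2$. The four ``corner'' outcomes replacing $(a,0),(-a,0),(0,a),(0,-a)$ are $(a_1,b_2),(d_1,b_2),(b_1,a_2),(b_1,d_2)$.

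First I would build the null lottery. The lottery $(a_1,b_2)\,\alpha\,(d_1,b_2)$ is unidimensional with dimension-$1$ projection $a_1\alpha d_1\sim_1 b_1$, so Unidimensional Dimensional Separability gives $(a_1,b_2)\,\alpha\,(d_1,b_2)\sim(b_1,b_2)$; symmetrically $(b_1,a_2)\,\beta\,(b_1,d_2)\sim(b_1,b_2)$ from $a_2\beta d_2\sim_2 b_2$. Two applications of Independence (mixing each side in turn against $(b_1,b_2)$ and using transitivity) yield
\[ f:=\big[(a_1,b_2)\,\alpha\,(d_1,b_2)\big]\,\tfrac12\,\big[(b_1,a_2)\,\beta\,(b_1,d_2)\big]\sim(b_1,b_2). \]
Then, since $a_2\succ_2 b_2$ and $a_1\succ_1 b_1$, Pareto gives $(d_1,b_2)\prec(d_1,a_2)$ and $(b_1,d_2)\prec(a_1,d_2)$; replacing these positive-weight outcomes one at a time and invoking Independence (the standard fact, as in the proofs of Propositions \ref{main} and \ref{strongest}, that replacing an outcome by a strictly better one strictly improves a lottery) produces
\[ f^{++}=\tfrac{\alpha}{2}(a_1,b_2)+\tfrac{1-\alpha}{2}(d_1,a_2)+\tfrac{\beta}{2}(b_1,a_2)+\tfrac{1-\beta}{2}(a_1,d_2)\succ(b_1,b_2). \]

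The crucial step is the regrouping: write $f^{++}=f_T\,w\,f_B$ with $w=\tfrac{\alpha+(1-\beta)}{2}\in(0,1)$, where $f_T=(a_1,b_2)\,\tfrac{\alpha}{\alpha+(1-\beta)}\,(a_1,d_2)$ has all first coordinates equal to $a_1$ (hence is unidimensional, with dimension-$2$ projection $b_2\,\tfrac{\alpha}{\alpha+(1-\beta)}\,d_2$) and $f_B=(b_1,a_2)\,\tfrac{\beta}{\beta+(1-\alpha)}\,(d_1,a_2)$ has all second coordinates equal to $a_2$ (with dimension-$1$ projection $b_1\,\tfrac{\beta}{\beta+(1-\alpha)}\,d_1$); one checks the probabilities sum correctly. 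The hypotheses $c_2\succeq_2 b_2\,\tfrac{\alpha}{\alpha+(1-\beta)}\,d_2$ and $c_1\succeq_1 b_1\,\tfrac{\beta}{\beta+(1-\alpha)}\,d_1$ then give, via Unidimensional Dimensional Separability, $f_T\preceq(a_1,c_2)$ and $f_B\preceq(c_1,a_2)$; two further applications of Independence (monotonicity) upgrade this to $(a_1,c_2)\,w\,(c_1,a_2)\succeq f^{++}\succ(b_1,b_2)$, hence $(a_1,c_2)\,w\,(c_1,a_2)\succ(b_1,b_2)$. Finally Converse Pareto contradicts Negative Dominance: $(a_1,c_2)\not\succ(b_1,b_2)$ because $c_2\prec_2 b_2$, and $(c_1,a_2)\not\succ(b_1,b_2)$ because $c_1\prec_1 b_1$, yet the lottery supported on $\{(a_1,c_2),(c_1,a_2)\}$ is strictly preferred to $(b_1,b_2)$.

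I do not expect a genuine obstacle here; the argument is a faithful translation of the proof of Proposition \ref{main}. The step needing the most care is the regrouping of $f^{++}$: one must verify that $f_T$ and $f_B$ are genuinely unidimensional and have exactly the dimension-wise projections $b_i\,\tfrac{\cdot}{\cdot}\,d_i$ named in the hypothesis (so that Unidimensional Dimensional Separability applies verbatim) and that the mixing weights work out. A secondary chore is to keep each invocation of Independence in the precise ``$f\succeq g\iff f\alpha h\succeq g\alpha h$'' form with the correct fixed $h$, and to fix the index conventions ($j,k$ versus $1,2$, and the typo $c_2$) in the statement.
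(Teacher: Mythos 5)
Your proposal is correct and is essentially the paper's own proof: the paper simply states that the argument of Proposition \ref{main} goes through with $(b_1,b_2),(a_1,b_2),(d_1,b_2),(b_1,a_2),(b_1,d_2)$ substituted for $(0,0),(a,0),(-a,0),(0,a),(0,-a)$, appealing at the final stage to $(c_1,a_2)$ and $(a_1,c_2)$ dominating the two regrouped unidimensional lotteries, exactly as you do. Your spelled-out mixture weights $\frac{\alpha}{\alpha+(1-\beta)}$, $\frac{\beta}{\beta+(1-\alpha)}$ and the use of Unidimensional Dimensional Separability in place of Unidimensional Expectations match the intended argument precisely.
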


\begin{proof} The proof is exactly parallel to that of proposition \ref{main}, using $(b_1,b_2)$ in place of $(0,0)$, $(a_1,b_2)$ in place of $(a,0)$, $(d_1,b_2)$ in place of $(-a,0)$, $(b_1,a_2)$ in place of $(0,a)$ and $(b_1,d_2)$ in place of $(0,-a)$. The proof proceeds as in the proof of \ref{main}, appealing in the final stage to the fact that $(b_1, a_2)/(d_1,a_2) \preceq (c_1, a_2)$ and $(a_1,b_2)/(a_1,d_2)\preceq (a_1,c_2)$.\end{proof}

\begin{remark}\label{conversepareto4} As before, the use of Converse Pareto is inessential. We could have assumed directly that $(c_1, a_2) \bowtie  (b_1, b_2)$, and $(a_1, c_2) \bowtie  (b_1, b_2)$. It is unclear what systematic view of $\succeq$ would rule out this possibility. \end{remark}

\subsection{Unidimensional Continuity and Certainty Equivalents}\label{continuity2}

Earlier, I built the case for focusing on Independence and Negative Dominance by considering weakenings of Unidimensional Expectations, and showing that they too give rise to the result. In the present qualitative setting, I have used Unidimensional Dimensional Separability, which is compatible with both diminishing marginal utility and some form of risk aversion. At the same time, as mentioned earlier in connection to Dimensional Separability, it is not obvious how plausible the separability assumption is.  

In fact, as I will now discuss, it is not just that our earlier motivation for studying Unidimensional Continuity and Unidimensional Certainty Equivalents does not apply in the qualitative setting. It turns out that, in the present setting, without the background structure of $\mathbb R^2$, these assumptions are not conceptually weaker than Unidimensional Dimensional Separability.

To see this, we must state analogues of Unidimensional Continuity and Unidimensional Certainty Equivalents. In the present qualitative setting, an outcome $o$ is \emph{unidimensional with} an outcome $o'$ if and only if there is at most one $i$ with $1 \leq i \leq n$ such that $\pi_j(o) \neq \pi_j (o')$. The notions of lotteries being unidimensional with one another, and of a unidimensional lottery, are as before. We have:
\begin{description}
\item[Qualitative Unidimensional Continuity] If $a \succ b \succ c$ and $a,b,c$ are unidimensional with one another, there is an $f$ with support on $a,c$ such that $f \sim b$.
\item[Qualitative Unidimensional Certainty Equivalents] If $f$ is unidimensional, then there is an $o^*$ which is unidimensional with $f$ and such that $f \sim o^*$. If there are $o, o'$ in the support of $f$ such that $o \succ o'$, then there are $o''$ and $o'''$ in the support of $f$ such that $o'' \succ o^* \succ o'''$.
\end{description}

\noindent Before discussing these assumptions conceptually, I present a version of proposition \ref{strongest} using them:

\begin{proposition}\label{strongest2} Let $O= O_1 \times O_2$ so that Recombination holds. Suppose that for all $i \in \{ 1, 2\}$, there are $a_i, b_i, c_i  \in O_i$ with $a_i \succ_i b_i \succ_i c_i$. Then Pareto, Converse Pareto, Independence, Qualitative Unidimensional Continuity and Qualitative Unidimensional Certainty Equivalents are inconsistent.\end{proposition}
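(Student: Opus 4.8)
The plan is to transplant the construction of Proposition~\ref{strongest} into the product $O_1\times O_2$, with $(b_1,b_2)$ playing the role of $(0,0)$. It is worth flagging at the outset why the terminal step cannot be made to rest on the five listed axioms alone: those five are jointly satisfiable. Taking $O_1=O_2=[0,1]$ ordered by $\geq$ and letting $\succeq$ rank lotteries by coordinatewise expectation (that is, $f\sim exp(f)$, with outcomes compared by Pareto) validates Independence, Qualitative Unidimensional Continuity, and Qualitative Unidimensional Certainty Equivalents simultaneously, while meeting every hypothesis. Hence, exactly as in the sibling results Proposition~\ref{expectationalism2} and Proposition~\ref{main2}, the contradiction must be drawn against Negative Dominance, which I read as the intended further member of the inconsistent package (the phrase ``are inconsistent'' here abbreviating ``are inconsistent with Negative Dominance'', as throughout this Appendix).

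First I would erect the scaffolding. By Pareto and Converse Pareto, $(a_1,b_2)\succ(b_1,b_2)\succ(c_1,b_2)$ and $(b_1,a_2)\succ(b_1,b_2)\succ(b_1,c_2)$, and within each chain the three outcomes are pairwise unidimensional. Qualitative Unidimensional Continuity then furnishes $\alpha,\beta\in(0,1)$ with $(a_1,b_2)\,\alpha\,(c_1,b_2)\sim(b_1,b_2)$ and $(b_1,a_2)\,\beta\,(b_1,c_2)\sim(b_1,b_2)$, and two applications of Independence (as in Proposition~\ref{main}) give $F=[(a_1,b_2)\,\alpha\,(c_1,b_2)]/[(b_1,a_2)\,\beta\,(b_1,c_2)]\sim(b_1,b_2)$.

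Next I would improve $F$ and then regroup. Since $(c_1,a_2)\succ(c_1,b_2)$ and $(a_1,c_2)\succ(b_1,c_2)$ by Pareto, substituting these outcomes and applying Independence twice yields $F^{++}=[(a_1,b_2)\,\alpha\,(c_1,a_2)]/[(b_1,a_2)\,\beta\,(a_1,c_2)]\succ(b_1,b_2)$. The main obstacle is the bookkeeping that re-expresses $F^{++}$ as a mixture of two \emph{unidimensional} lotteries: cross-pairing the four support outcomes, one groups $(a_1,b_2)$ with $(a_1,c_2)$ (common first coordinate $a_1$) into $F_T$ and $(b_1,a_2)$ with $(c_1,a_2)$ (common second coordinate $a_2$) into $F_B$, obtaining $F^{++}=F_T\,\gamma\,F_B$ with $\gamma=\frac{\alpha+(1-\beta)}{2}$. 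Checking that the weights recombine to give precisely $F_T$ and $F_B$ is the one fiddly computation, but it is identical in form to the rearrangement carried out in Proposition~\ref{strongest}.

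Finally I would pass to certainty equivalents. Qualitative Unidimensional Certainty Equivalents gives $c^*\sim F_T$ unidimensional with $F_T$, so $\pi_1(c^*)=a_1$, and its second clause forces $(a_1,b_2)\succ c^*\succ(a_1,c_2)$, whence Converse Pareto gives $b_2\succ_2\pi_2(c^*)\succ_2 c_2$; symmetrically $d^*\sim F_B$ has $\pi_2(d^*)=a_2$ and $b_1\succ_1\pi_1(d^*)\succ_1 c_1$. Independence then delivers $c^*\,\gamma\,d^*\sim F^{++}\succ(b_1,b_2)$. But $c^*$ is strictly better than $(b_1,b_2)$ in coordinate $1$ and strictly worse in coordinate $2$, so $c^*\bowtie(b_1,b_2)$ by Converse Pareto, and likewise $d^*\bowtie(b_1,b_2)$. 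Thus the lottery $c^*\,\gamma\,d^*$ is strictly preferred to $(b_1,b_2)$ though neither outcome in its support is strictly preferred to $(b_1,b_2)$ — precisely the violation of Negative Dominance that completes the argument.
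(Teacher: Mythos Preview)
Your argument is correct and follows exactly the route the paper sketches: the paper's proof simply says the construction of Proposition~\ref{strongest} transplants with $(b_1,b_2)$ for $(0,0)$, $(a_1,b_2)$ for $(a,0)$, $(c_1,b_2)$ for $(-a,0)$, $(b_1,a_2)$ for $(0,a)$, $(b_1,c_2)$ for $(0,-a)$, and you have carried out precisely that transplant, including the regrouping of $F^{++}$ into two unidimensional halves and the appeal to the second clause of Qualitative Unidimensional Certainty Equivalents to force each certainty equivalent strictly between the support outcomes.

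Your opening observation deserves emphasis rather than apology: the statement as printed omits Negative Dominance, and your Expectationalist model on $[0,1]^2$ is a genuine witness that the five listed axioms are jointly satisfiable under the richness hypothesis. The paper's own proof sketch, by deferring to Proposition~\ref{strongest} (whose final step is explicitly a violation of Negative Dominance), confirms that the omission is a slip; the parallel Propositions~\ref{expectationalism2} and~\ref{main2} in the same Appendix both conclude ``inconsistent with Negative Dominance,'' and this one should too. So you have not merely reproduced the intended proof but also caught and diagnosed an error in the statement.
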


\begin{proof} The proof is exactly analogous to that of proposition \ref{strongest}, using $(b_1,b_2)$ in place of $(0,0)$, $(a_1,b_2)$ in place of $(a,0)$, $(c_1,b_2)$ in place of $(-a,0)$, $(b_1,a_2)$ in place of $(0,a)$ and $(b_1,c_2)$ in place of $(0,-a)$. The proof proceeds exactly as above.\end{proof}

This proposition may appear considerably weaker than proposition \ref{main2}. We only assume three strictly ordered elements of $O_1$ and $O_2$, and Qualitative Unidimensional Continuity and Qualitative Unidimensional Certainty Equivalents allow us to do away with specific assumptions about the equivalence of lotteries to outcomes. But this appearance is to some extent misleading. The reason is that Qualitative Unidimensional Certainty Equivalents is a very powerful axiom. Together with Independence, it implies that between any two strictly ordered outcomes which are unidimensional with one another, there is a continuum of other outcomes. If $a \succ b$ and $a$ is unidimensional with $b$ then Independence implies that for any $\alpha, \beta \in (0,1)$, with $\alpha > \beta$, that $a \alpha b \succ a \beta b$. Qualitative Unidimensional Certainty Equivalents implies that there are $c, d$ unidimensional with $a$ and $b$ such that $a\alpha b \sim c$ and $a\beta b \sim d$. So, if there is a pair of outcomes which are unidimensional with one another, with a strict preference between them on one dimension, then between them, there is a continuum of other outcomes. 

$\mathbb R^2$ already has this denseness built into it, so the fact that Unidimensional Certainty Equivalents had this consequence was not relevant in the earlier setting. But in the present setting, Qualitative Unidimensional Certainty Equivalents enforces substantial new structure on the background space.

Still, Qualitative Unidimensional Certainty Equivalents is only used in the proof to guarantee that there are incomparable outcomes which are certainty-equivalent to the relevant lotteries between $(c_1,a_2)$ $(b_1,a_2)$ and $(a_1,b_2)$ $(a_1,c_2)$. We might assume this directly without loss of plausibility, and without in general guaranteeing the denseness just mentioned.  And if we do the proposition is a significant improvement on the previous one. In addition to its using weaker richness assumptions (as I just noted), the assumptions used here are weaker than Unidimensional Dimensional Separability in other ways. Most importantly, they do not require that we can rank unidimensional lotteries according to how they behave in one dimension, without regard for the values in others.

\subsection{Conclusion}

The propositions in this Appendix are generalizations of those in section \ref{r2}, although not strictly in a logical sense. Unidimensional Dimensional Separability is strictly weaker than Unidimensional Expectations in $\mathbb R^n$, and Recombination is automatically satisfied there. But the richness assumptions made directly in proposition \ref{main2} have no correlate in the parallel proposition \ref{main}. Instead, features of $\mathbb R^n$, together with the strength of Unidimensional Expectations entail that they will be satisfied. Still the propositions do intuitively apply considerably more generally. For it is hard to imagine a systematic, plausible way in which Dimensional Separability would hold in $\mathbb R^n$, without these richness conditions being satisfied. And of course the qualitative setting has much less structure than $\mathbb R^n$.

One obvious way in which this is so, is that, whereas each dimension in $\mathbb R^n$ is cardinally (and indeed ratio) measurable, we have only assumed ordinal structure directly on the dimensions in this more abstract setting. It might have been natural to expect that, in moving to the qualitative setting, it would be necessary to impose enough structure on each dimension that they would be (at least) cardinally measurable. But the results do not depend on such structure; they only require that particular lotteries have certainty-equivalents. Guaranteeing such certainty-equivalents requires substantive assumptions, but quite different ones from those which ensure (for instance) cardinal structure.

\end{appendices}

\begin{small}
\begin{singlespacing}

\bibliographystyle{plainnat}
\bibliography{references}
\end{singlespacing}

\end{small}

\end{document}